%% file: main.tex
\PassOptionsToPackage{nameinlink}{cleveref}
\documentclass[a4paper,UKenglish,cleveref, autoref,numberwithinsect, thm-restate]{lipics-v2021}

\hideLIPIcs
\nolinenumbers

\usepackage{mathtools}
\usepackage{dsfont}
\usepackage{booktabs}
\usepackage{tablefootnote}
\usepackage{float}
\usepackage{pgfplots}
\pgfplotsset{compat=1.18}
\usepackage[ruled,vlined,linesnumbered,noend]{algorithm2e}
\SetKwInOut{Parameter}{Parameters}
\usepackage{dsfont}

\usepackage{url}
\usepackage[numbers]{natbib}
\newcommand{\ncite}[1]{\citeauthor*{#1}~\cite{#1}}

\AtBeginDocument{%
  \definecolor{ForestGreen}{rgb}{0.1333,0.5451,0.1333}
  \definecolor{DarkRed}{rgb}{0.65,0,0}
  \definecolor{myred}{HTML}{CC3311}
  \definecolor{myblue}{HTML}{0077BB}
}

\DeclareFontShape{T1}{lmr}{m}{scit}{<->ssub * lmr/m/scsl}{}

\hypersetup{colorlinks=true, linkcolor=DarkRed, filecolor=magenta, urlcolor=cyan, citecolor=ForestGreen}

\crefname{ineq}{}{}
\crefname{equation}{}{}
\Crefname{ineq}{Inequality}{Inequalities}
\crefformat{ineq}{#2inequality~(#1)#3}
\Crefformat{ineq}{#2Inequality~(#1)#3}

\newcounter{property}
\crefname{property}{Property}{Properties}
\Crefname{property}{Property}{Properties}

\DeclareMathOperator*{\vol}{vol}

\DeclareMathOperator*{\U}{\mathcal U}

\DeclarePairedDelimiter{\set}{\{}{\}}

\newcommand{\Alg}{\textsc{Clever-Greedy}\xspace}

\newcommand{\e}{\mathrm{e}}
\newcommand{\OPTAlg}{\textsc{Opt}\xspace}
\newcommand{\OPT}[1]{\textsc{Opt}(#1)}
\newcommand{\OPTk}[1]{\theta_{#1}}
\newcommand{\define}{\coloneqq}
\newcommand{\R}{\mathds{R}}

\newcommand{\Z}{\mathds{Z}}

\title{Incremental Submodular Maximization:\\Better Than Greedy}
\titlerunning{Incremental Submodular Maximization: Better Than Greedy}

\author{Marcin Bienkowski}{University of Wroc{\l}aw, Poland}{marcin.bienkowski@cs.uni.wroc.pl}{0000-0002-2453-7772}{Supported by Polish National Science Centre grant 2022/45/B/ST6/00559}
\author{Joakim Blikstad}{CWI Amsterdam}{blikstad@kth.se}{0009-0004-0874-2356}{}
\author{Jaros{\l}aw Byrka}{University of Wroc{\l}aw, Poland}{jby@cs.uni.wroc.pl}{}{Supported by Polish National Science Centre grant 2020/39/B/ST6/01641}
\author{Mart\'in Costa}{University of Warwick, UK}{martin.costa@warwick.ac.uk}{}{}
\author{Yann Disser}{TU Darmstadt, Germany}{disser@mathematik.tu-darmstadt.de}{0000-0002-2085-0454}{Supported by Deutsche Forschungsgemeinschaft (DFG, German Research Foundation) through subproject A09 of CRC/TRR154.}
\author{Annette Lutz}{TU Darmstadt, Germany}{lutz@mathematik.tu-darmstadt.de}{0009-0008-7699-7018}{Supported by Deutsche Forschungsgemeinschaft (DFG, German Research Foundation) through subproject A09 of CRC/TRR154.}

\authorrunning{M. Bienkowski, J. Blikstad, J. Byrka, M. Costa, Y. Disser, and A. Lutz}

\Copyright{Marcin Bienkowski, Joakim Blikstad, Jaros{\l}aw Byrka, Mart\'in Costa, Yann Disser, and Annette Lutz}

\ccsdesc[500]{Theory of computation~Submodular optimization and polymatroids}
\ccsdesc[500]{Theory of computation~Online algorithms}

\keywords{Submodular maximization, incremental optimization, competitive analysis}

\funding{This research has been initiated at the AlgUW Workshop (Bedlewo 09.2025), supported by the Excellence Initiative -- Research University (IDUB) funds of the University of Warsaw.}

\usepackage{hyperref}
\usepackage{doi}

\makeatletter
\newcommand*{\addSectionToHyperref}[1]{%
	\expandafter\renewcommand\csname theH#1\endcsname{\thesection.\arabic{#1}}%
}
\addSectionToHyperref{lemma}
\addSectionToHyperref{theorem}
\addSectionToHyperref{proposition}
\addSectionToHyperref{property}
\addSectionToHyperref{claim}
\makeatother

\begin{document}

\maketitle

\begin{abstract}
We consider submodular maximization under increasing cardinality constraint and ask for a good incremental solution, i.e., an ordering of the ground set such that each prefix of the ordering yields a good solution for its respective cardinality.
A classical result in this setting is that the greedy algorithm achieves a competitive ratio, i.e., an approximation guarantee across all cardinalities, of~$\e/(\e-1) \approx 1.582$.
No better general guarantee was previously known.
We present an adaptive scaling algorithm achieving a competitive ratio of~$1.373$.
We complement our result by a deterministic lower bound of~$1.25$ on the best possible competitive ratio for incremental submodular maximization.
\end{abstract}

\newpage


\section{Introduction}

We consider cardinality constrained maximization problems of the form
\begin{align} 
S_k^\star \define \arg\max \{ f(S) \mid S \subseteq \U, |S| \leq k \}, \label{eq:main}
\end{align}
with a monotone%
\footnote{$f\colon 2^{\U} \to \R_{\geq 0}$ is monotone if $A\subseteq B \subseteq \U \Longrightarrow f(A) \leq f(B)$.}, submodular%
\footnote{$f\colon 2^{\U} \to \R_{\geq 0}$ is submodular if $A,B \in \U \Longrightarrow f(A) + f(B) \geq f(A\cup B) + f(A \cap B)$.}
and non-negative objective $f\colon 2^{\U} \to \R_{\geq{0}}$ over a finite ground set~$\U$ of size $n \define |\U|$, and a given cardinality~$k \in [n] \define \{1,\dots,n\}$.
It is a classical result that the greedy algorithm, that extends the solution~$S$ by the element~$e \in \U$ maximizing $f(S \cup \{e\})$ in each step, produces a solution that is within a~factor of $\frac{\e}{\e - 1} \approx 1.582$ of the optimum~\cite{NemhauserWolseyFisher1978}, and that this is best-possible for polynomial time algorithms~\cite{FeigeMirrokniVondrak2011,Vondrak2013}, even if~$f$ is a coverage function~\cite{Feige1998} (assuming $\mathrm{P} \neq \mathrm{NP}$).

A major advantage of the greedy algorithm, besides being natural, simple and efficient, lies in the fact that it produces an \emph{incremental} solution, i.e., that its solutions to different cardinalities share the same prefix in terms of the order in which elements are included.
This means that greedy solutions can be gradually implemented as the cardinality bound increases, which is relevant, e.g., in large infrastructure projects that are implemented over time, but must already be useful long before their completion.
 
This raises the natural question whether better incremental solutions than greedy are possible if we do not require efficient computation of the solution.
Sacrificing efficiency for quality of the solution is particularly well-motivated for long-term projects where computation time is short compared to the implementation timeframe of the solution.
Also, outside the worst case, incremental solutions better than the greedy guarantee may sometimes be possible to compute in polynomial time.

We adopt the framework of incremental maximization introduced by \ncite{BernsteinDisserGrossHimburg-20} to address this question.
Formally, an \emph{incremental solution} is given by a sequence of nested solutions $\emptyset = S_0 \subseteq \dots \subseteq S_n = \U$, where each~$S_i$ with $|S_i| = i$ is a~solution to \cref{eq:main} for cardinality~$k = i$. 
We can also identify an incremental solution with an ordering $\{e_1,\dots,e_n\} = \U$ of the elements in the ground set $\U$, where $e_i = S_i\setminus S_{i-1}$. 
We say that the incremental solution is \emph{$\rho$-competitive} if
\[ 
    \max_{k \in \{1,\dots,n\}} \frac{f(S_k^\star)}{f(S_k)} \leq \rho,
\]
and define the \emph{competitive ratio}\footnote{We use the notion of \emph{competitive} ratio in contrast to \emph{approximation} ratio to indicate that we do not require polynomial time. Incremental maximization can also be phrased as an online problem, where the only information revealed online in step~$i$ is whether or not~$k \geq i$ (much like in ski rental).} of an algorithm as the infimum over all $\rho\geq 1$ such that the algorithm guarantees a $\rho$-competitive solution.

In these terms, the greedy algorithm is well known to have competitive ratio~$\frac{\e}{\e - 1}$. We ask for the best possible competitive ratio for incremental submodular maximization, i.e., for the \emph{price of incrementality} that quantifies the loss in solution quality that we have to accept if we require a consistent solution for all cardinalities simultaneously.

\subparagraph*{Our Results.}

Our main result is the first algorithm (cf.~\Cref{alg:main_simplified}) to beat the greedy competitive ratio of $\e/(\e-1) \approx 1.582$ for the incremental submodular maximization problem.

\begin{restatable}{theorem}{mainupper}
    \label{thm:main-upper}
    \Alg has a competitive ratio of $\rho \approx 1.373$ for incremental submodular maximization.
\end{restatable}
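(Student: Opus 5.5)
The plan is to analyze \Alg phase by phase with an inductive invariant, and reduce the worst case to a continuous factor-revealing optimization whose value is $\rho\approx 1.373$. Write $\OPTk{k}\define f(S_k^\star)$. I will use three standard facts about monotone submodular $f$. First, $\OPTk{k}/k$ is non-increasing in $k$, and $\OPTk{k}$ is non-decreasing. Second, for every current set $S$ and every $m$, $f(S\cup S_m^\star)\ge \OPTk{m}$. Third, by the usual greedy argument, after $j$ greedy steps starting from $S$ we have $f\ge \OPTk{m}-(1-1/m)^j(\OPTk{m}-f(S))$. From the second and third facts I get a lower bound on the value of any prefix built inside a phase.

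\textbf{Structure of the algorithm.} I assume \Alg works in phases with cardinality breakpoints $k_1<k_2<\dots$. In phase $i$ it extends the current solution $S_{k_{i-1}}$ towards a target set: the optimal set $S^\star_{k_i}$, or a greedy completion relative to it, processed greedily so that every intermediate prefix is good. The next breakpoint is chosen adaptively, roughly as the largest $k_i$ such that $\OPTk{k_i}\le \gamma\, f(S_{k_{i-1}})$ for a scaling parameter $\gamma$, with a geometric fallback when $\OPTk{}$ grows slowly. The invariant I would prove by induction is: at each breakpoint, $f(S_{k_i})\ge \alpha\,\OPTk{k_i}$ for a suitable $\alpha$ depending on $\rho$, and moreover $f(S_k)\ge \OPTk{k}/\rho$ for all $k\le k_i$.

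\textbf{Within one phase.} For $k\in(k_{i-1},k_i]$, I would lower-bound $f(S_k)$ by the maximum of two quantities. One is the value already secured, $f(S_{k_{i-1}})$. The other is the greedy-towards-target bound with $j=k-k_{i-1}$ and $m=k_i-k_{i-1}$, which uses that $S^\star_{k_i}$ contributes at least $\OPTk{k_i}-f(S_{k_{i-1}})$ in marginal value. I would upper-bound $\OPTk{k}$ by $\min\{\OPTk{k_i},\ \OPTk{k_{i-1}}\cdot k/k_{i-1}\}$, together with the adaptive stopping condition, which caps $\OPTk{k}$ just beyond the previous breakpoint. The ratio $\OPTk{k}/f(S_k)$ then becomes an explicit function of a few normalized parameters:
\begin{itemize}
\item the phase length ratio $k_i/k_{i-1}$,
\item the jump $\OPTk{k_i}/f(S_{k_{i-1}})$,
\item the position $x=k/k_i$ inside the phase.
\end{itemize}
In the limit of large cardinalities, $(1-1/m)^j$ becomes $\e^{-j/m}$, and the bound becomes a smooth function of these parameters.

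\textbf{Main obstacle.} The hard step is the adaptive choice of the breakpoint and closing the induction. The adversary can make $\OPTk{}$ jump just after a breakpoint. It can also keep $\OPTk{}$ nearly flat, so that a phase becomes very long, and then grow it near the end. I would handle this by a case split. If the stopping rule is triggered by the value threshold $\gamma$, the invariant at $k_i$ follows from the greedy bound. If it is triggered by the geometric cap, linearity of $\OPTk{k}/k$ bounds the loss. I would then optimize over $\gamma$ and the cap to balance the two worst cases. Finally I would solve the resulting one-dimensional max–min numerically, certifying it on a fine grid plus a Lipschitz argument, to obtain $\rho\approx 1.373$, and check that the finite-$k$ corrections only help.
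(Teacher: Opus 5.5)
Your proposal has a genuine gap: it does not analyze \Alg. The theorem is about a specific algorithm. In phase $j$ it picks $k_j\in\arg\max_{k>k_{j-1}}(\OPT{k}-\OPT{k_{j-1}})/k$ and then adds \emph{all} of $S^\star_{k_j}$, greedily among its elements. You instead ``assume'' a value-threshold rule ($\OPT{k_i}\le\gamma f(S_{k_{i-1}})$ with a geometric fallback). That is a different and only loosely specified algorithm, so even a complete argument would prove something else.

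Moreover, the constant $1.373$ is never derived. It is asserted as the value of a numerical max--min whose objective you never write down, so nothing in the sketch certifies any ratio below $\e/(\e-1)$.

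There is also a concrete error inside a phase. You apply the greedy bound toward $S^\star_{k_i}$ with $m=k_i-k_{i-1}$, but that target has $k_i$ elements. The correct $m$ is $|S^\star_{k_i}\setminus S|$, which can be as large as $k_i$, and this weakens the bound. The root cause is that you conflate positions in the incremental order with optimal cardinalities. In any case, a bound of the form $1-(1-1/m)^j$ never reaches the full $\OPT{k_i}$. The point of completing $S^\star_{k_j}$ is that the phase ends with value at least $\OPT{k_j}$. The price is that this happens at cumulative position $\sum_{i\le j}p_i\le K_j=\sum_{i\le j}k_i$ rather than at position $k_j$.

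The ideas you are missing are the ones that make the density rule work.
\begin{enumerate}
\item \textbf{Linear progress within a phase.} The phase takes $p_j\le k_j$ steps, each choosing the best remaining element of $S^\star_{k_j}$. Applying the averaging consequence of submodularity toward the end-of-phase set then gives \emph{linear} progress: after $i$ steps of phase $j$ the value is at least $\OPT{k_{j-1}}+i\,\delta_j$, where $\delta_j=(\OPT{k_j}-\OPT{k_{j-1}})/k_j$.
\item \textbf{Tangent bound on the optimum.} The argmax choice gives, for every phase $t$ and every $k$, the bound $\OPT{k}\le\OPT{k_t}+(k-k_t)\,\delta_t$, and the $\delta_t$ are non-increasing. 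This tangent bound, not monotonicity plus the decrease of $\OPT{k}/k$, is what controls the optimum between phases.
\item \textbf{Inductive control of the overhead.} The loss is governed by the overhead $c_j=K_j/k_j$. One proves by induction that $c_j\le h(q_j)$, where $q_j=1-\OPT{k_{j-1}}/\OPT{k_j}$ is the normalized density and $h$ is an explicit function satisfying a few one- and two-variable inequalities. Then compare the piecewise-linear upper envelope of the optimum with the algorithm's piecewise-linear lower curve $\OPT{k_j}+(k-K_j)\,\delta_{j+1}$. The comparison at the crossing points $k^*_j=k_j\delta_j/(\delta_j-\delta_{j+1})$ is where $\rho=1.3729$ comes from.
\end{enumerate}
Without a quantity like $c_j$ linking consecutive phases, your induction has nothing to carry forward and cannot close.
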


In particular, our analysis is tight, up to the numerical precision of our analysis.

Note that our algorithm is not efficient but rather provides a proof that a good incremental solution always \emph{exist}, i.e., a good consistent solution for all cardinalities.
In fact, no better approximation algorithm than greedy for submodular maximization under even a fixed cardinality constraint is possible with a sub-exponential number of value-oracle accesses to~$f$~\cite{FeigeMirrokniVondrak2011,Vondrak2013}.
In accordance to this, our algorithm requires access to optimum solutions for each cardinality.
While this makes our algorithm inefficient in general, it may be efficiently implemented in special cases and it may be operated with heuristical solutions.
We therefore believe the algorithm to be interesting beyond just providing a theoretical certificate for the existence of good incremental solutions.

Previous algorithms in the framework of incremental maximization were based on static scaling strategies that assemble optimum solutions of exponentially growing sizes in phases.
Our algorithm is the first one that fully adapts its scaling behavior to the instance and the first scaling algorithm to (deterministically) beat a factor of $\varphi + 1 \approx 2.618$ for any (nontrivial) incremental maximization variant~\cite{BernsteinDisserGrossHimburg-20, DisserWeckbecker-25,Weckbecker2023}.

We complement our result with an unconditional lower bound on the existence of good incremental solutions by providing an instance that does not admit incremental solutions that approximates the optimum to a factor better than $1.25$ across all cardinalities.

\begin{restatable}{theorem}{LowerBound}\label{thm:main-lower}\label{app:thm:L1}
    Every deterministic algorithm for incremental submodular maximization has a competitive ratio of at least~$\frac{5}{4}=1.25$.
\end{restatable}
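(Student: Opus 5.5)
The plan is to exhibit one explicit instance, a weighted coverage function (which is monotone, submodular and non-negative). We then argue by exhaustive case analysis over the first few elements of any ordering that some prefix $S_k$ satisfies $f(S_k^\star)/f(S_k) \geq 5/4$. Since any deterministic algorithm outputs some fixed ordering of~$\U$ for the instance, bounding the ratio of every ordering proves the statement. Only small cardinalities $k \in \{1,2,3\}$ (perhaps $4$) should matter, so the instance will have a constant number of relevant elements. Any further elements are dummies of value~$0$, which can be ignored because appending them never helps.

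\textbf{Basic gadget.} The building block trades off ``one large element'' against ``several smaller, complementary elements''. Let element~$a$ cover items $u_1,u_2$ of weight~$1$ each. Let element $b_i$ cover $u_i$ and a private item $v_i$ of weight~$\beta$, for $i=1,2$. Then $\OPT{1}=f(a)=2$ and $\OPT{2}=f(\{b_1,b_2\})=2+2\beta$.
\begin{itemize}
\item If the ordering starts with~$a$, then at $k=2$ we get at most $f(\{a,b_i\}) = 2+\beta$, giving ratio $(2+2\beta)/(2+\beta)$.
\item Otherwise the ordering starts with some~$b_i$, and at $k=1$ we get ratio $2/(1+\beta)$.
\end{itemize}
Balancing the two cases gives $\beta^2+\beta-1=0$, a ratio of only $2/\varphi\approx1.236$. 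So the gadget alone is not enough, and a second level is needed.

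\textbf{Refined instance.} We add a third tier: three elements $c_1,c_2,c_3$ that together form the optimum for $k=3$. They overlap with both $a$ and the $b_i$'s, so that committing to the $b$'s at step~$2$ is also penalized at $k=3$. Concretely, the items covered by $a$ and the $b_i$ are split among the $c_j$, and each $c_j$ gets a private item of weight~$\gamma$. The weights $\beta,\gamma$, and possibly a split of the weights on the $u$-items, are free parameters.

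\textbf{Case analysis.} We enumerate the possible prefixes up to length~$3$, using symmetry among the $b_i$ and among the $c_j$ to keep the number of cases small. The top-level cases are:
\begin{itemize}
\item the ordering starts with $a$;
\item it starts with some $b_i$;
\item it starts with some $c_j$.
\end{itemize}
Within each case we branch on the second element and then the third. For each prefix we compute $f(S_k)$ and $f(S_k^\star)$ for $k=1,2,3$ from the coverage structure. We then choose the weights so that in every branch at least one of these ratios is $\geq 5/4$. The expected outcome is a small system of inequalities that becomes tight at $5/4$ in several branches simultaneously, which certifies the parameter choice.

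\textbf{Main obstacle.} The hard part is finding the right combinatorial structure and weights so that the bound actually reaches $5/4$, rather than carrying out the verification. The gadget shows that two levels give only $\approx1.236$. I would first try the three-level design above with rational weights chosen so that the binding cases equalize at exactly $5/4$. If that falls short, I would add a fourth level or allow the $b_i$ to cover parts of several $u$-items. In that case I would solve the resulting small LP/minimax over weights, with the ordering tree enumerated, to locate a configuration achieving $5/4$. Once such a configuration is found, the proof reduces to presenting the table of prefix values and checking each branch.
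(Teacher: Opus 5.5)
Your proposal does not yet prove the theorem. The theorem amounts to exhibiting one instance on which every ordering has a prefix with ratio at least $5/4$, and you never exhibit one.

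The only instance you analyze is the gadget with $a,b_1,b_2$. As you observe yourself, it certifies just $2/\varphi\approx 1.236<5/4$. The three-tier refinement is never pinned down: you do not say which items each $c_j$ covers or what the weights are. No case analysis is carried out, and nothing indicates that some parameter choice reaches $5/4$. ``Search for weights by an LP/minimax, and add levels if that falls short'' is a research plan, not an argument.

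Your working assumption that only $k\le 3$ (or $4$) matters is also unsupported, and it is probably why the search stalls. Consider the natural one-versus-many gadget: one element covers $m$ unit items, and $m$ further elements each cover one of these items plus a private item of weight $u$. Starting with the big element is penalized at $k=m$; starting with a small one is penalized at $k=1$. Balancing the two gives ratio $m/(1+u)$, where $u^2+(3-m)u+1-m=0$. This is $1.236$ for $m=2$ and $3(\sqrt{2}-1)\approx 1.243$ for $m=3$. It reaches $5/4$ only at the non-integral value $m=5/2$ (with $u=1$).

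The missing idea is to realize $m=5/2$ by doubling everything, which is exactly the paper's instance. Take a $5\times 4$ grid of unit items with rows $R_1,\dots,R_5$ (four items each) and two columns $C_1,C_2$ (five items each), and let $f$ be the coverage function on $\{R_1,\dots,R_5,C_1,C_2\}$.
\begin{itemize}
\item At $k=2$: $\OPT{2}=10$, whereas any pair other than $\{C_1,C_2\}$ covers at most $8$.
\item At $k=5$: $\OPT{5}=20$ (all rows), whereas an ordering that starts with both columns has covered only $10+3\cdot 2=16$ after five steps.
\end{itemize}
Hence every ordering has ratio at least $5/4$ at $k=2$ or at $k=5$. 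This is a conflict between cardinalities $2$ and $5$, outside the range you planned to examine.
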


We leave closing the gap $[1.25, 1.373]$ between our bounds as an open question.

\subparagraph*{Techniques.} 

Our algorithm is motivated by an interpretation of the classical analysis of the greedy algorithm by \ncite{NemhauserWolseyFisher1978}.
The key property of monotone submodular functions exploited in this analysis is the fact that, for every set~$S$ that does not already contain the optimum solution~$S_k^\star$, there always exists an element $e \in S_k^\star \setminus S$ with marginal return
\[
    f(S \cup \{e\}) - f(S) 
    \geq \frac{1}{k} \cdot \big(f(S_k^\star \cup S) - f(S)\big) 
    \geq \frac{1}{k} \cdot \big(f(S_k^\star) - f(S)\big).
\]
In other words, there is always an element that covers at least a $1/k$ fraction of the missing value relative to the optimum.
This is most intuitive if~$f$ is a coverage function, but holds for every monotone submodular function.
By definition, if~$S$ is the partial solution so far, the greedy algorithm includes an element not worse than the element~$e$ above.
It follows that the part of the optimum in the solution $S_i$ after step~$i$ that is not yet accounted for is bounded by
\[
    f(S_k^\star) - f(S_{i}) 
    \leq \Big(1-\frac{1}{k}\Big) \cdot (f(S_k^\star) - f(S_{i-1})) 
    \leq \ldots 
    \leq \Big(1 - \frac{1}{k}\Big)^k \cdot f(S_k^\star) 
    \leq \frac{1}{\e} \cdot f(S_k^\star),
\]
which yields the greedy competitive ratio of 
\[
  \max_k \frac{f(S_k^\star)}{f(S_i)} \leq \frac{\e}{\e - 1} \approx 1.58.
\]

We can interpret the analysis (not the actual greedy algorithm) as an algorithm that extends the current solution~$S$ by a single element towards the solution $S_k^\star \cup S$.
The key idea of our approach is to adapt this behavior by extending the solution by multiple elements towards a fixed $S_k^\star \cup S$, which is possible, since for every integer $\ell \leq |S_k^\star\setminus S|$, there exist incremental sets~$E_{1}, E_2, \ldots, E_{|S_k^\star\setminus S|} \subseteq S_k^\star\setminus S$, where $|E_\ell| = \ell$ and 
\[
    f(S\cup E_{\ell}) - f(S) \geq \frac{\ell}{k} \cdot \big(f(S_k^\star \cup S) - f(S)\big).
\]
This yields an algorithm that operates in phases, adding~$k_j$ elements towards a fixed solution in phase~$j$.
Note that the algorithm is no longer greedy since~$E$ may not include elements of the largest individual marginal increase.
In terms of its classical analysis, the greedy algorithm targets and builds towards a different solution in each step.
Targeting instead a~fixed solution for multiple steps promises to yield an improved guarantee for every cardinality of the form $k = \sum_j k_j$, since
\[
    f(S_k^\star) - f(S_k)
    \leq \prod_j \Big(1 - \frac{k_j}{k}\Big)^{k_j} \cdot f(S_k^\star) 
    \leq \Big(1 - \frac{1}{k}\Big)^k \cdot f(S_k^\star),
\]
i.e., strictly better than greedy if $k_j > 1$ for some~$j$.

While long phases (large $k_j$) achieve improved guarantees upon completion of a phase, in the worst case, we only get a linear increase in objective value during a single phase, which may be worse than greedy during most of the phase.
Previous approaches avoided this issue by completely separating the contributions in each phase and either using the contribution of the previously completed phase or of the current phase to estimate against the optimum solution for the current cardinality.
The result were scaling algorithms that use phases of lengths (at least) $\delta^i$ with a carefully balanced factor~$\delta$~\cite{BernsteinDisserGrossHimburg-20, DisserWeckbecker-25, Weckbecker2023}.
However, such strategies fail to achieve competitive ratios better than even~$\varphi + 1 \approx 2.618$ and only seem adequate for  classes of objective functions beyond the submodular regime.

In contrast, based on the described perspective on the classical greedy analysis, we are able to integrate the accounting of the different phases, which means that we are able to use the total value achieved across all phases in our estimation.
Our algorithm adapts the phase lengths dynamically so as to maximize the linear increase, i.e., the density, during a~phase.
The key insight in our analysis is that this results in the algorithm being able to afford long phases whenever~$\OPT{k} \define f(S_k^\star)$ is only slowly increasing in~$k$.
In case~$\OPT{k}$ is rapidly increasing, we are able to exploit that the greedy solution is already better than $\e/(\e-1)$-competitive. 

\subparagraph*{Related work.}

\ncite{BernsteinDisserGrossHimburg-20} initiated the study of incremental maximization with a monotone objective and presented a simple scaling algorithm that achieves a competitive ratio of~$1 + \varphi \approx 2.618$ for accountable objectives, a property that generalizes submodularity.
The best known lower bound in this regime is $2.246$ by \ncite{DisserKlimmSchewiorWeckbecker2023}.

While, subadditivity\footnote{$f\colon 2^{\U} \to \R_{\geq 0}$ is subadditive if $A,B \subseteq \U \Longrightarrow  f(A) + f(B) \geq f(A \cup B)$.}  is a more direct generalization of submodularity and a more natural ``diminishing returns'' property, it was more elusive in terms of incremental guarantees.
Recently, \ncite{DisserWeckbecker-25} were able to give the first competitive algorithm for this setting, which achieves a competitive ratio of at most $2 + \sqrt{2} \approx 3.414$.
The algorithm again uses scaling by at least a constant factor, but further increases the lengths of phases if this is beneficial for the resulting density of the phase.
Note that accountability and subadditivity are incomparable in the sense that either class captures functions not contained in the other.

The analysis of the inherently incremental greedy algorithm for cardinality constrained maximization has a long history including classical results in combinatorial optimization.
\ncite{Rado1942} observed that greedy is optimal for modular objectives, and \ncite{NemhauserWolseyFisher1978} provided the well-known classical result that greedy achieves an $\frac{\e}{\e -1}$-approximation with~$nk$ value-oracle accesses to~$f$ (meaning that the value~$f(S)$ is needed for $nk$ different sets~$S \subset \U$).
\ncite{NemhauserWolsey1978} 
and \ncite{Vondrak2013} showed that any better approximation algorithm requires an exponential number of value-oracle accesses.
For the special case of coverage functions, \ncite{Feige1998} used the PCP-theorem to establish conditional hardness of approximation to within a better factor than greedy.

The analysis of the greedy algorithm was generalized to functions of bounded submodularity ratio by \ncite{DasKempe2018}.
\ncite{BianBuhmannKrauseTschiatschek2017} proved this analysis tight and further generalized it to functions of bounded curvature.
\ncite{ElenbergKhannaDimakisNegahban2018} further extended the results of \citeauthor{DasKempe2018} and \ncite{HarshawFeldmanWardKarbasi2019} show no better approximation with polynomially many value-orcale accesses is possible for functions of bounded submodularity ratio.
Moreover, augmentability was introduced by \ncite{BernsteinDisserGrossHimburg-20} as an alternative relaxation of submodularity.
\citeauthor*{BernsteinDisserGrossHimburg-20} generalized the greedy analysis to this regime and \ncite{DisserWeckbecker-23} further parameterized the result and showed it to be tight.

Other variants of submodular maximization that have been considered include non-monotone settings~\cite{BuchbinderFeldmanNaorSchwartz2014,BuchbinderFeldmanNaorSchwartz2015,FeigeMirrokniVondrak2011}, settings with knapsack~\cite{Sviridenko2004} and matroid~\cite{CalinescuChekuriPalVondrak2011} constraints, and streaming models~\cite{BadanidiyuruMirzasoleimanKarbasiKrause2014}.


\section{The Algorithm}

Our algorithm \Alg (cf.~\Cref{alg:main_simplified}) constructs its solution $S$ incrementally, starting from the empty set. It assumes knowledge of optimum solutions $\{S^\star_k\}_{k=1}^n$ and operates in \textit{phases}. Let $k_0 = 0$. At the start of each phase $j$, we identify the solution $S_{k_j}^\star$, such that $k_j > k_{j-1}$ maximizes the value
\[ 
    \frac{\OPT{k_j} - \OPT{k_{j-1}}}{k_j}.
\]
Within the phase $j$, \Alg adds the elements of $S^\star_{k_j}$ to the solution, greedily taking the elements that increase the objective the most. The algorithm terminates once all elements have been added.

\begin{algorithm}[t]
    \DontPrintSemicolon

    $S \gets \emptyset$, $j \gets 0$, $k_0 \gets 0$\;
    \While{$S \neq \U$}{
        $j \gets j + 1$\;
        take $k_j \in \arg \max_{k > k_{j-1}} (\OPT{k} - \OPT{k_{j-1}}) / k$\; 
        \While{$S^\star_{k_j} \setminus S \neq \emptyset$}{
            take $e \in \smash{\arg \max_{e' \in S^\star_{k_j} \setminus S} f(S \cup \{e'\})}$\;
            $S \gets S \cup \{e\}$\;
        }
    }
    \caption{$\Alg(\U, f)$}
    \label{alg:main_simplified}
\end{algorithm}


\input{analysis}

\input{lower-bound}

\bibliographystyle{abbrvnat}
\bibliography{bibliography}

\appendix

\input{appendix}

\end{document}

%% file: analysis.tex
\section{Analysis}
\label{sec:analysis}

In this section, we prove the upper bound of \cref{thm:main-upper}---\Alg achieves a~competitive ratio of $\rho < 1.3729$.

\subsection{Notation}

We start by introducing variables to keep track of the phases and iterations of the algorithm explicitly (cf.~\Cref{alg:main} along with the following definitions). 
We let~$\ell$~denote the number of phases (iterations of the outer loop), and~$p_j$ the number of steps during phase $j \in [\ell]$ (iterations of the inner loop).
Observe that $1 \leq p_j \leq k_j$ and $\sum_{j=1}^\ell p_j = n = |\U|$. 

We further use $S^{(j)}_i$ to denote the state of set $S$ after iteration $i$ of phase $j$, and let $S^{(j)} \define S^{(j)}_{p_j}$ be the state of set $S$ at the end of phase $j$. 
We also denote $S^{(j)}_0 \define  S^{(j-1)}$, $S^{(j)}_{i} \define S^{(j)}_{p_j}$ for $p_j < i \leq k_j$, and
\[
    \OPTk{j} \define \OPT{k_j}.
\]
Finally, for every phase $j \in [\ell]$, we define its \emph{density} as
\[ 
    \delta_j 
    \define \frac{1}{k_j} \cdot (\OPT{k_j} - \OPT{k_{j-1}}) 
    = \frac{1}{k_j} \cdot (\OPTk{j} - \OPTk{j-1}) 
\]

It is easy to see that our algorithm ensures that densities are monotonically decreasing.

\begin{observation}
    \label{obs:densities_decrease}
    For each phase $j \in [\ell - 1]$, it holds that
    $\delta_j \geq \delta_{j+1}$.
\end{observation}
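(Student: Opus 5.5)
The observation follows directly from the choice of $k_j$ in the outer loop. I would compare the density of phase $j+1$ with the value that phase $j$ would have had if it had used $k_{j+1}$ instead of $k_j$.

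Fix $j \in [\ell-1]$. Since $k_{j+1} > k_j > k_{j-1}$, the index $k_{j+1}$ was an admissible candidate in the $\arg\max$ at the start of phase $j$. Maximality of $k_j$ therefore gives
\[
    \delta_j \geq \frac{\OPTk{j+1} - \OPTk{j-1}}{k_{j+1}}
    = \frac{(\OPTk{j+1} - \OPTk{j}) + (\OPTk{j} - \OPTk{j-1})}{k_{j+1}}
    = \delta_{j+1} + \frac{k_j}{k_{j+1}}\,\delta_j .
\]
Rearranging yields $\delta_j \bigl(1 - k_j/k_{j+1}\bigr) \geq \delta_{j+1}$.

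To conclude, note that $\delta_j \geq 0$, because $\OPT{\cdot}$ is nondecreasing in $k$ by monotonicity of $f$. Also $0 < 1 - k_j/k_{j+1} \leq 1$. Hence $\delta_j \geq \delta_j(1-k_j/k_{j+1}) \geq \delta_{j+1}$, which even gives a slightly stronger inequality than claimed.

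The only point needing care is that $k_{j+1} > k_j$ really holds, so that $k_{j+1}$ was a candidate in phase $j$. This is immediate from the $\arg\max$ in phase $j+1$, which ranges over $k > k_j$. Phase $j+1$ exists at all only because $S \neq \U$ after phase $j$, which means $k_j < n$ and the range $k > k_j$ is nonempty.
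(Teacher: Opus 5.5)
Your proof is correct and is essentially the paper's argument. The paper likewise relies on every $k > k_j$, in particular $k_{j+1}$, being admissible in phase $j$, together with $\OPTk{j-1} \le \OPTk{j}$; it phrases this as first raising the baseline inside the maximum and then shrinking its range. Your extra rearrangement gives the sharper bound $k_{j+1}\delta_{j+1} \le (k_{j+1}-k_j)\,\delta_j$, which is exactly \Cref{lem:bound_opt_with_density} with $t=j$ and $k=k_{j+1}$ (with indices shifted by one, this is the bound used in the proof of \Cref{lem:relation_of_slack_and_slope}).
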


\begin{proof}
    The algorithm definition implies $k_{j-1} < k_j$. Hence, 
    \begin{align*}
        \delta_j 
        & = \max_{k>k_{j-1}} \frac{1}{k} \cdot (\OPT{k} - \OPTk{j-1}) 
        \geq \max_{k>k_{j-1}} \frac{1}{k} \cdot (\OPT{k} - \OPTk{j}) 
            && \text{(by $\OPTk{j-1} \leq \OPTk{j}$)} \\
        & \geq \max_{k>k_j} \frac{1}{k} \cdot (\OPT{k} - \OPTk{j}) 
        = \delta_{j+1}.
        && \qedhere
    \end{align*}
\end{proof}

\begin{algorithm}[t]
    \DontPrintSemicolon

    $S^{(0)} \gets \emptyset$, $j \gets 0$, $k_0 \gets 0$\;
    \While{$S^{(j)} \neq \U$}{
        $j \gets j + 1$, $S_0^{(j)} \gets S^{(j - 1)}$\;
        take $k_j \in \arg \max_{k > k_{j-1}} (1/k) \cdot (\OPT{k} - \OPT{k_{j-1}})$\; 
        $i \gets 0$\;
        \While{$S^\star_{k_j} \not \subseteq S_i^{(j)}$}{
            take $e \in \smash{\arg \max_{e' \in S^\star_{k_j} \setminus S_i^{(j)}} f(S_i^{(j)} \cup \{e'\})}$\;
            $S_{i+1}^{(j)} \gets S_{i}^{(j)} \cup \{e\}$, $i \gets i + 1$\;
        }
        $p_j \gets i$, $S^{(j)} \gets S_{p_j}^{(j)}$\;
    }
    \vspace{.2em}
    $\ell \gets j$\;
    \Return $\smash{(S^{(1)}_{1}, \dots, S^{(1)}_{p_1}, \dots, S^{(\ell)}_{1}, \dots, S^{(\ell)}_{p_\ell})}$\;
    \caption{$\Alg(\U, f)$, with variables to track phases and iterations}
    \label{alg:main}
\end{algorithm}

\begin{figure}
    \begin{tikzpicture}
        \begin{axis}[
            axis lines=middle,
            xlabel={$k$},
            samples=23,
            x=0.45cm,
            y=0.8cm,
            xtick={1,5,21},
            xticklabels={$K_1=k_1$,$K_2=k_1+k_2$,$K_3=\sum_{i=1}^3 k_i$},
            ytick={1,2,4},
            yticklabels={$\OPTk{1}=\OPT{k_1}$,$\OPTk{2}=\OPT{k_2}$,$\OPTk{3}=\OPT{k_3}$},
            ]
            \addplot [myblue, thick,  domain=0:23] {x^(1/2)} node[pos=0.75, above=-2, sloped] {$\OPT{k}$};
            \addplot [myblue!40!white, thick, domain=1:23, samples=22] {(x-1)^(1/2)};
            \addplot [myblue!40!white, thick, domain=5:25, samples=20] {(x-5)^(1/2)};
            \addplot [myblue!40!white, thick, domain=21:25, samples=4] {(x-21)^(1/2)};
            
            \addplot[thick, dashed,	myred]
            coordinates {
                (0,0)
                (1,1)     
                (5,2)     
                (21,4)    
                (25,4.25) 
            };
            \addplot[myred, dotted] coordinates {(1,1) (5,5)} node[pos=0.7,sloped,above=-1.5] {$\delta_1$}; 
            \addplot[myred, dotted] coordinates {(5,2) (17,5)} node[pos=0.7,sloped,above=-1.5] {$\delta_2$}; 
            \addplot[myred, dotted] coordinates {(21,4) (25,4.5)} node[pos=0.8,sloped,above=-1.5] {$\delta_3$}; 
        \end{axis}
    \end{tikzpicture}
    \caption{The phases of the algorithm: The dashed red curve describes the lower bound on the value of the algorithm given by \Cref{lem:algocurve}. In phase $j$, the cardinality $k_j$ is chosen such that the red curve is tangential to the optimum curve starting in the beginning of phase $j$ (solid blue plots) which corresponds to maximizing the density $\delta_j$ describing the slope of the lower bound.}
    \label{fig:phases}
\end{figure}


\subsection{Basic Properties}
\label{sec:basic_properties}

We now prove a few properties that relate the gains of \Alg and \OPTAlg. 

First, observe that at the end of phase $j$, \Alg has accumulated all elements of the optimum solution of size $k_j$. This happens after it collects \smash{$\sum_{i=1}^j p_i \leq \sum_{i=1}^j k_i$} elements. 

\begin{lemma}
    \label{lem:1}
    For each phase $j \in [\ell]$, we have $f(S^{(j)}) \geq \OPTk{j}$.
\end{lemma}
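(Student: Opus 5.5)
The plan is to read the claim directly off the termination condition of the inner loop, together with monotonicity of $f$; no induction over phases and no submodularity is needed.

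First, note that phase $j$ only ends once the inner loop condition $S^\star_{k_j} \not\subseteq S^{(j)}_i$ fails. At that point $i = p_j$, and the set $S^{(j)} = S^{(j)}_{p_j}$ satisfies $S^\star_{k_j} \subseteq S^{(j)}$. Since every phase $j \in [\ell]$ is actually executed, $S^{(j)}$ is well defined. The inner loop terminates because each iteration adds a new element of the finite set $S^\star_{k_j}$; if $S^\star_{k_j} \subseteq S^{(j-1)}$ already, the loop is empty and the claim still holds.

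Second, apply monotonicity of $f$ to $S^\star_{k_j} \subseteq S^{(j)}$. This gives $f(S^{(j)}) \geq f(S^\star_{k_j}) = \OPT{k_j} = \OPTk{j}$, which is the statement.

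The only point needing care is the edge case $p_j = 0$ (all of $S^\star_{k_j}$ was collected in earlier phases). The paper asserts $1 \leq p_j$; this holds because $k_j > k_{j-1}$ and the outer loop runs only while $S \neq \U$, though the lemma itself is not affected either way. So I expect no real obstacle here: the statement is essentially an invariant of the inner loop, combined with monotonicity.
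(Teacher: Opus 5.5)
Your proof is correct and is exactly the paper's argument: the inner loop of phase $j$ only exits once $S^\star_{k_j} \subseteq S^{(j)}$, and monotonicity of $f$ then gives $f(S^{(j)}) \geq f(S^\star_{k_j}) = \OPTk{j}$. One caveat on your aside: $k_j > k_{j-1}$ and $S \neq \U$ do not actually force $p_j \geq 1$, because the fixed optimum $S^\star_{k_j}$ can already be contained in $S^{(j-1)}$. As you note, though, your argument also covers the case $p_j = 0$, so the lemma is unaffected.
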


\begin{proof}
    At the end of the $j^{th}$ phase, we have added each element in $S_{k_j}^\star$ to the solution~$S$. Thus, $S^{(j)} \supseteq S_{k_j}^\star$, and so $f(S^{(j)}) \geq f(S_{k_j}^\star) = \OPTk{j}$ by the monotonicity of~$f$.
\end{proof}

Fix any phase $j$ of the algorithm and recall that the number of iterations in the phase is $p_j \leq k_j$. 
\Cref{lem:1} lower-bounds \Alg's gain by $\OPTk{j-1}$ and $\OPTk{j}$ at the beginning and end of phase $j$, respectively.
Linear interpolation of these two bounds evaluated after the $i^{th}$ step of phase~$j$ yields a value of 
$\OPTk{j-1} + (i/p_j) \cdot (\OPTk{j} - \OPTk{j-1}) 
    \geq \OPTk{j-1} + (i/k_j) \cdot (\OPTk{j} - \OPTk{j-1})
    = \OPTk{j-1} + i \cdot \delta_j$. 

In the next lemma, we show that the submodularity of function $f$ combined with greediness of \Alg guarantee at least this ``interpolated gain'' (cf.~\Cref{fig:phases}). 
In the proof, we will use the following claim that follows directly by submodularity of $f$ (see \Cref{sec:missing_proofs} for a proof).

\begin{restatable}{claim}{submodularitygivesatleastaverage}
    \label{cla:submodularity_gives_at_least_average}
    For every two subsets $A \subsetneq B \subseteq \U$ there exists an element $e \in B \setminus A$, such that 
    $ f(A \cup \{e\}) - f(A) \geq (f(B) - f(A)) / (|B| - |A|)$.
\end{restatable}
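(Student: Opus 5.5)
We prove the claim by an averaging argument over an arbitrary ordering of the difference set, using a telescoping sum of marginal gains. Let $m \define |B| - |A| \geq 1$; since $A \subsetneq B$, the set $B \setminus A$ is nonempty. Fix any enumeration $B \setminus A = \{b_1, \dots, b_m\}$, and define the chain $A_0 \define A$ and $A_t \define A \cup \{b_1,\dots,b_t\}$ for $t \in [m]$. Then $A_m = B$.

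Telescoping along this chain gives
\[
    f(B) - f(A) = \sum_{t=1}^{m} \big( f(A_{t-1} \cup \{b_t\}) - f(A_{t-1}) \big).
\]
We next bound each term by the corresponding marginal gain with respect to $A$ itself. Since $A \subseteq A_{t-1}$ and $b_t \notin A_{t-1}$, submodularity applies. Concretely, we invoke the defining inequality $f(X) + f(Y) \geq f(X \cup Y) + f(X \cap Y)$ with $X \define A \cup \{b_t\}$ and $Y \define A_{t-1}$. Here $X \cup Y = A_{t-1} \cup \{b_t\}$ and $X \cap Y = A$, so
\[
    f(A \cup \{b_t\}) - f(A) \geq f(A_{t-1} \cup \{b_t\}) - f(A_{t-1}).
\]
This is the diminishing-returns form of submodularity.

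Summing over $t$ yields
\[
    \sum_{t=1}^m \big(f(A \cup \{b_t\}) - f(A)\big) \geq f(B) - f(A).
\]
The left-hand side is a sum of $m$ terms, so its largest term is at least the average. Hence some $e = b_t \in B \setminus A$ satisfies
\[
    f(A\cup\{e\}) - f(A) \geq \frac{f(B)-f(A)}{m} = \frac{f(B)-f(A)}{|B|-|A|}.
\]
This is exactly the claim.

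The only step requiring care is converting the lattice form of submodularity used in the paper's definition into the diminishing-returns inequality. This hinges on the choice of $X$ and $Y$ above, specifically on the identity $X \cap Y = A$, which holds because $b_t \notin A_{t-1}$. The rest is routine. Monotonicity is not even needed, although it guarantees that the right-hand side is nonnegative.
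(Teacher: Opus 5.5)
Your proof is correct and follows essentially the same route as the paper's: a chain from $A$ to $B$, telescoping, averaging, and submodularity to transfer a chain increment to a marginal gain over $A$. The only cosmetic difference is that the paper averages first to pick a single step of the chain and applies submodularity only to that step, while you apply submodularity to every step before averaging; you also explicitly derive the diminishing-returns form from the lattice definition, which the paper leaves implicit.
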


\begin{lemma}
    \label{lem:algocurve}
    For each phase $j \in [\ell]$ and $i \in [k_j]$, we have $f(S^{(j)}_i) \geq \OPTk{j-1} + i \cdot \delta_j$.
\end{lemma}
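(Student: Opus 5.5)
We prove the bound by induction on $i$ within a fixed phase $j$, keeping track of a potential rather than of $f(S^{(j)}_i)$ directly. The natural potential is the gap between the target $f(S^{(j-1)} \cup S^\star_{k_j}) \geq \OPTk{j}$ and the current value, compared against the slope $\delta_j$. The target is attained at the end of the phase, since $S^{(j)} = S^{(j-1)} \cup S^\star_{k_j}$. The claim is trivial for $i \geq p_j$: then $S^{(j)}_i = S^{(j)}$, and by \Cref{lem:1} and $\OPTk{j-1} + i\delta_j \leq \OPTk{j-1} + k_j\delta_j = \OPTk{j}$ we get $f(S^{(j)}_i) \geq \OPTk{j} \geq \OPTk{j-1} + i \delta_j$. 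The same argument gives the base case $i=0$, using $f(S^{(j-1)}) \geq \OPTk{j-1}$, which is \Cref{lem:1} for $j-1$ (or trivial when $j=1$).

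For $0 \leq i < p_j$, apply \Cref{cla:submodularity_gives_at_least_average} with $A = S^{(j)}_i$ and $B = S^{(j-1)} \cup S^\star_{k_j}$. We have $A \subsetneq B$, since the phase is not finished, and $B\setminus A \subseteq S^\star_{k_j}\setminus A$. Hence some candidate element has marginal gain at least $(f(B) - f(A))/(|B|-|A|)$. The greedy choice in the inner loop picks an element at least this good, so
\[
  f(S^{(j)}_{i+1}) - f(S^{(j)}_i) \geq \frac{\OPTk{j} - f(S^{(j)}_i)}{p_j - i},
\]
using $f(B)\geq \OPTk{j}$ by monotonicity and $|B|-|A| = p_j - i$.

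Write $g_i \define \OPTk{j} - f(S^{(j)}_i)$. The previous step gives $g_{i+1} \leq g_i (1 - 1/(p_j - i))$ whenever $g_i \ge 0$; if $g_i<0$, monotonicity of $f$ already gives the claim for all later $i$. The target is $g_i \leq (k_j - i)\delta_j$, which is exactly the statement $f(S^{(j)}_i) \geq \OPTk{j-1} + i\delta_j$. A naive linear induction compares $(k_j - i)$ with $(p_j - i)$, and this is the main obstacle because the denominators differ. I would instead induct on the stronger statement $g_i \leq \frac{p_j - i}{p_j} g_0$. This follows immediately, since the factors telescope: $\prod_{t<i} \frac{p_j-t-1}{p_j-t} = \frac{p_j-i}{p_j}$.

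It remains to compare $\frac{p_j-i}{p_j} g_0$ with $(k_j-i)\delta_j$. We have $g_0 \leq \OPTk{j}-\OPTk{j-1} = k_j\delta_j$. The inequality $\frac{p_j-i}{p_j}\, k_j \leq k_j - i$ is equivalent to $-ik_j/p_j \leq -i$, which holds because $p_j \leq k_j$. Hence $g_i \leq (k_j-i)\delta_j$, completing the proof.
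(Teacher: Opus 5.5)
Your proof is correct and is essentially the paper's argument: the same averaging claim applied to $A = S^{(j)}_i$ and $B = S^{(j)} = S^{(j-1)} \cup S^\star_{k_j}$, the same telescoping factor $\frac{p_j-i}{p_j}$, and the same final use of $p_j \le k_j$ (together with $\delta_j \ge 0$). The only cosmetic difference is that you track the gap to $\OPTk{j}$ and invoke \Cref{lem:1} up front, whereas the paper tracks the gap to $f(S^{(j)})$ and invokes \Cref{lem:1} at the end; also, your recursion $g_{i+1} \le g_i\bigl(1-\frac{1}{p_j-i}\bigr)$ holds for either sign of $g_i$, so the separate case $g_i<0$ is not needed.
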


\begin{proof}
    For $i \geq p_j$, we have $S^{(j)}_{i}=S^{(j)}$. By \Cref{lem:1} and by the definition of density $\delta_j$, it holds that $f(S^{(j)}_{i}) = f(S^{(j)}) \geq \OPTk{j} = \OPTk{j-1} + k_j \cdot \delta_j \geq \OPTk{j-1} + i \cdot \delta_j$.
    Hence, in the remaining part of the proof, we focus on case $i \in [p_j]$.
    
    Fix $t \in [p_j]$. By \Cref{cla:submodularity_gives_at_least_average} applied to sets $S_{t-1}^{(j)}$ and $S^{(j)}$, there exists an element $e_t \in S^{(j)} \setminus S_{t - 1}^{(j)}$ such that
    \begin{equation}\label{eq:marginal_increase}
        \frac{f(S^{(j)}) - f(S_{t - 1}^{(j)})}{p_j - t + 1}
        \leq f(S_{t - 1}^{(j)} \cup \{e_t\}) - f(S_{t - 1}^{(j)}) 
        \leq f(S_{t}^{(j)}) - f(S_{t - 1}^{(j)}).        
    \end{equation}
    Let $d_t \define f(S^{(j)}) - f(S^{(j)}_t)$ and $g_t \define f(S^{(j)}_t) - f(S^{(j)}_{t-1})$. By \eqref{eq:marginal_increase}, it holds that $d_t = d_{t - 1} - g_t$ and $g_t \geq d_{t - 1} / (p_j - t + 1)$. Thus, $d_t \leq d_{t - 1} \cdot (1 - 1/(p_j - t + 1))$, and we have
    \begin{equation}
        \label{eq:di_bound}
        d_i 
        \leq d_0 \cdot \prod_{t = 1}^i \left( 1 - \frac{1}{p_j - t + 1} \right) 
        = d_0 \cdot \prod_{t = 1}^i \frac{p_j - t}{p_j - t + 1} 
        = d_0 \cdot \frac{p_j - i}{p_j} 
        \leq d_0 \cdot \left( 1  - \frac{i}{k_j} \right).
    \end{equation}
    By the definition of $d_i$ and \eqref{eq:di_bound}, 
    \[ 
        f(S^{(j)}) - f(S^{(j)}_i) = d_i \leq d_0 \cdot \left( 1  - \frac{i}{k_j} \right) = \left( f(S^{(j)}) - f(S^{(j-1)}) \right) \cdot \left( 1  - \frac{i}{k_j} \right),
    \]
    which can be rearranged to
    \begin{align*}
        f(S^{(j)}_i) 
        & \geq \frac{i}{k_j} \cdot f(S^{(j)}) + \left(1-\frac{i}{k_j}\right) \cdot f(S^{(j-1)}) \\
        & \geq \frac{i}{k_j} \cdot \OPTk{j} + \left(1-\frac{i}{k_j}\right) \cdot \OPTk{j-1} 
            && \text{(by \Cref{lem:1} and $i \leq k_j$)} \\
        & = \OPTk{j-1} + i \cdot \delta_j.
        && \qedhere
    \end{align*} 
    
\end{proof}

Finally, we argue that the choice of the density made by \Alg can also be used to upper-bound the gain of \OPTAlg between phases. 

\begin{lemma}
    \label{lem:bound_opt_with_density}
    For each phase $t \in [\ell]$ and $k \in [n]$, we have $\OPT{k} \leq \OPTk{t} + (k-k_t) \cdot \delta_t$.
\end{lemma}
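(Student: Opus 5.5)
The plan is to use only the maximizing choice of $k_t$ in phase $t$, together with the identity $\OPTk{t} = \OPTk{t-1} + k_t \cdot \delta_t$. That identity is just the definition of $\delta_t$ rearranged. With it, the claimed bound $\OPT{k} \leq \OPTk{t} + (k-k_t)\cdot\delta_t$ becomes equivalent to
\[
    \OPT{k} \leq \OPTk{t-1} + k \cdot \delta_t .
\]
So it suffices to prove this form, splitting into two cases according to whether $k > k_{t-1}$ or $k \leq k_{t-1}$.

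\emph{Case $k > k_{t-1}$.} Here $k$ is one of the candidates over which \Alg maximizes when choosing $k_t$. Since $\delta_t$ is exactly that maximum value, we get
\[
    \frac{1}{k}\cdot\bigl(\OPT{k} - \OPTk{t-1}\bigr) \leq \delta_t .
\]
Multiplying by $k > 0$ gives $\OPT{k} \leq \OPTk{t-1} + k\cdot\delta_t$, which is the desired inequality. This case also covers every $k > k_t$, so we do not need to look at later phases at all.

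\emph{Case $k \leq k_{t-1}$.} Here we use that $\OPT{\cdot}$ is nondecreasing in the cardinality. This holds because the feasible sets for cardinality $k$ are also feasible for any larger cardinality. Hence $\OPT{k} \leq \OPT{k_{t-1}} = \OPTk{t-1}$. Since $\delta_t \geq 0$, this is at most $\OPTk{t-1} + k\cdot\delta_t$. The nonnegativity of $\delta_t$ holds because $k_t > k_{t-1}$ and $\OPT{\cdot}$ is monotone. When $t=1$ we have $k_0 = 0$, so this case is empty, because $k \geq 1$.

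There is no real obstacle. The only step needing care is recognizing that the bound should be compared against the choice made in phase $t$, not phase $t+1$. Comparing against phase $t+1$ would only give the weaker bound $\OPTk{t} + k\cdot\delta_{t+1}$. The reduction via $\OPTk{t} = \OPTk{t-1} + k_t\delta_t$ makes the phase-$t$ comparison immediate.
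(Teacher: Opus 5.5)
Your proof is correct and is essentially the paper's argument. The paper also establishes $(\OPT{k}-\OPTk{t-1})/k \le (\OPTk{t}-\OPTk{t-1})/k_t$, using the maximizing choice of $k_t$ when $k>k_{t-1}$ and using $\OPT{k}\le\OPTk{t-1}$ with $\delta_t\ge 0$ when $k\le k_{t-1}$, and then rearranges exactly as you do. One small quibble with your closing remark: the phase-$(t+1)$ bound $\OPTk{t}+k\cdot\delta_{t+1}$ is not uniformly weaker but incomparable, since it is tighter beyond $k^*_t$, and the paper relies on both bounds in its later analysis.
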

\begin{proof}
    We first argue that $(\OPT{k}-\OPTk{t-1})/k \leq (\OPTk{t}-\OPTk{t-1})/k_t$. This is immediately true if $k > k_{t-1}$ by the choice of $k_t$ made by the algorithm. On the other hand, when $k \leq k_{t-1}$, the left hand side is at most $0$ and the right hand side at least $0$. 
    
    Next, the relation $(\OPT{k}-\OPTk{t-1})/k \leq (\OPTk{t}-\OPTk{t-1})/k_{j}$ can be rearranged to
    \begin{align*}
        \OPT{k} 
        & \leq \OPTk{t-1} + \frac{k}{k_t} \cdot (\OPTk{t} - \OPTk{t-1}) 
        = \OPTk{t} + \left(\frac{k_t - k}{k_t} \right) \cdot (\OPTk{t-1} - \OPTk{t}) \\
        & = \OPTk{t} + (k_t - k) \cdot (-\delta_t).
        \qedhere
    \end{align*}
\end{proof}

\tracinglostchars=0  

\begin{figure}
    \centering
    \begin{tikzpicture}
        \begin{axis}[
            axis lines=middle,
            xlabel={$k/k_j$},
            samples=96,	
            x=1.5cm,
            y=1.5cm,
            xtick={1,{21/16}}, 
            xticklabels={$1$,$c_j$},
            ytick={0.5,1},
            yticklabels={$\OPTk{j-1} / \OPTk{j}$,$1$},
            ]
            \addplot [myblue, thick,  domain=0:6] {x^(1/2)}; 
            \addplot[thick, dashed,	myred]
            coordinates {
                (0,0)
                ({1/16},{1/4})     
                ({5/16},{1/2})     
                ({21/16},1)    
                ({85/16},2)    
                (6,2.0859375)       
            };
            \addplot[dotted,black] coordinates{(0,{1/2}) (1,1) (4,2.5)} node[pos=0.8, above=-2, sloped] {$q_{j-1}$};
            \addplot[dotted,black] coordinates{(0,1) (4,2) (6,2.5)} node[pos=0.1, above=-2, sloped] {$q_{j}$};
        \end{axis}
        \node[myblue] at (9.6,3.65) 
            {$\frac{\OPT{k}}{\OPTk{j}}$};
    \end{tikzpicture}
    \caption{
    When analyzing the relation between~$c_j$ and the normalized density~$q_j$ in phase $j$ in \Cref{lem:relation_of_slack_and_slope}, we scale all values down by $\OPTk{j}$ and all cardinalities by $k_j$. 
    The solid blue plot depicts the rescaled optimum curve while the dashed red plot shows the rescaled lower bound. Note that the dotted slopes~$q_{j-1}$ and~$q_j$ are exactly the slopes of the rescaled lower bound before and after $c_j$.}
    \label{fig:definitions_in_analysis}
\end{figure}

\tracinglostchars=1


\subsection{Inductive Analysis}

In the analysis of phase $j$, we exploit the choice of cardinality $k_j$ by \Alg together with the value~$\OPTk{j}$ of the optimal solution at this cardinality. 
To measure the progress of the algorithm in phase $j$ it can be convenient to scale all cardinalities by $k_j$ and all gains by~$\OPTk{j}$. 
Accordingly, we define \emph{normalized density} as
\[ 
    q_j \define \frac{k_{j}}{\OPTk{j}} \cdot \delta_j 
    = \frac{\OPTk{j}-\OPTk{j-1}}{\OPTk{j}}
    = 1-\frac{\OPTk{j-1}}{\OPTk{j}}.
\]
We also define 
\[
    \textstyle K_j \define \sum_{i=1}^j k_i 
    \qquad \text{and} \qquad
    c_j = K_j / k_j.
\]
See \Cref{fig:definitions_in_analysis} for an illustration of these definitions. 

By \Cref{lem:1}, our algorithm reaches (or exceeds) the value $\OPTk{j}$ after collecting $\sum_{i=1}^j p_i \leq \sum_{i=1}^j k_i = K_j$ elements, while this value is reached by \OPTAlg after collecting $k_j$ elements.
Hence, $c_j = K_j / k_j$ measures the multiplicative ``cardinality overhead'' our algorithm incurs in comparison to \OPTAlg. 
We now show that this overhead can be controlled, i.e., that we can link its values between consecutive phases, and the change of the overhead can be attributed to the normalized density.

More formally, we define the function $h \colon (0,1] \to \R$ as
\[
    h(q) \define \frac{156+1543 \cdot q-550 \cdot q^2-149 \cdot q^3}{1000 \cdot q}.
\]
We use some technical properties of function $h$, listed in the following claim (see~\Cref{app:conditions_on_h} for a proof). 

We do not use the actual definition of function $h$ in the remaining part of the analysis, but only the fact that it satisfies the properties of \Cref{cla:properties_of_h} for a given value of~$\rho$. 
Any other function (potentially giving a different $\rho$) satisfying these constraints can be applied in our analysis, yielding~$\rho$ as a bound on the competitive ratio of \Alg.

\begin{restatable}{claim}{propertiesofh}
    \label{cla:properties_of_h}
    Let $\rho = 1.3729$. Function $h$ satisfies the following properties:
    \setcounter{property}{0}%
    \begin{enumerate}
        \item \refstepcounter{property}\label{prop:h_decreasing} 
            Function $h(q)$ is decreasing in $q$ for $q > 0$ and $h(1) = 1$.
        \item \refstepcounter{property}\label{prop:first_condition_on_h} 
            For all $x> 0$ and $q \in (0,1]$, we have
            \begin{align*}
                1+\frac{h(q)}{1+x}\leq h\left(\frac{x q}{1+x q}\right).
            \end{align*}
        \item \refstepcounter{property}\label{prop:second_condition_on_h}
            For all $q \in (0,1]$ we have $1+(h(q)-1) \cdot q \leq \rho$.
        \item \refstepcounter{property}\label{prop:third_condition_on_h}
            For all $x \geq 0$ and $q \in (0,1]$, we have
            \begin{align*}
                1-\frac{h(q) \cdot x q}{(1+ xq)(1+x)} \geq \rho^{-1}.
            \end{align*}
    \end{enumerate}
\end{restatable}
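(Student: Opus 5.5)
The claim is a statement about one explicit rational function, so I would prove it by direct calculus and polynomial inequalities. Write $h(q) = \frac{156}{1000q} + \frac{1543 - 550q - 149q^2}{1000}$.

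\textbf{Property 1.} Evaluation gives $h(1) = (156+1543-550-149)/1000 = 1000/1000 = 1$. For monotonicity,
\[
h'(q) = -\frac{156}{1000q^2} - \frac{550 + 298q}{1000},
\]
and every term is negative for $q > 0$, so $h$ is strictly decreasing.

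\textbf{Property 3.} Here
\[
1 + (h(q)-1)q = 1 + \frac{156 + 543q - 550q^2 - 149q^3}{1000},
\]
which is a cubic in $q$. I would maximize it on $(0,1]$. The critical point solves $543 - 1100q - 447q^2 = 0$, i.e. $q \approx 0.427$. Substituting back, the maximum should come out at about $1.3729$, and I check numerically that it stays below $\rho$. This step is routine.

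\textbf{Property 4.} Set $y = xq \ge 0$ with $q \in (0,1]$. The inequality becomes
\[
h(q)\, y \le (1-\rho^{-1})(1+y)(1+y/q).
\]
Multiply through by $1000q$. Since $h(q)q$ is a cubic polynomial $P(q) = 156 + 1543q - 550q^2 - 149q^3$, this gives a polynomial inequality in $(q,y)$:
\[
P(q)\, y \le 1000\,(1-\rho^{-1})(1+y)(q+y).
\]
The right side is quadratic in $y$ and the left side is linear in $y$. So the claim is equivalent to the quadratic in $y$ being nonnegative, which reduces to a discriminant (or vertex) condition that depends on $q$ alone. I then verify that single-variable condition on $(0,1]$ by a polynomial bound.

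\textbf{Property 2.} This is the main obstacle. With $y = xq$, we have $q' = y/(1+y) \in (0,1)$, and the target is
\[
1 + \frac{h(q)}{1+x} \le h(q').
\]
After clearing denominators this is a polynomial inequality in two variables $(q,x)$ over $x > 0$, $q \in (0,1]$.

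Two boundary cases behave simply:
\begin{itemize}
\item As $x \to \infty$, $q' \to 1$ and both sides tend to $1$, so I would need a first-order expansion there.
\item As $x \to 0$, $h(q') \approx 0.156/(xq)$ blows up and dominates, so the inequality holds easily.
\end{itemize}

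My plan for the general case is to substitute $u = 1/(1+x)$ and multiply out to get a polynomial $F(q,u) \ge 0$ on $[0,1]^2$. I would first check that $F$ vanishes at the critical boundary $u = 0$. I would then factor out the appropriate power of $u$ and certify nonnegativity of the remaining polynomial. For that certification I would try Bernstein-basis expansion on the unit square, or monotonicity in one variable followed by a one-variable check. If that fails, I fall back on interval arithmetic over a fine grid plus derivative bounds.

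Since the coefficients of $h$ were evidently fitted numerically, I expect the inequalities to be tight somewhere in the interior. That makes Property 2 (and possibly Property 4) the delicate, computer-assisted part of the proof.
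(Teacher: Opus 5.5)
Your proposal is correct. For Properties~1, 3 and~4 it follows the paper's argument.

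\textbf{Property~4.} Your vertex/discriminant condition for $c\,y^2+\bigl(c(1+q)-P(q)\bigr)y+cq\ge 0$ on $y\ge 0$, with $c=1000(1-\rho^{-1})$, is exactly $P(q)\le c\,(1+\sqrt q)^2$. With $s=\sqrt q$ this is the paper's bound $\max_{s\in(0,1]} h(s^2)s^2/(1+s)^2\le 1-\rho^{-1}$; the paper reaches it by maximizing over $x$ at $x=1/\sqrt q$. The maximum is at $s\approx 0.8194$ and gives $1.37282$. So this is the property that actually pins down $\rho$, with a margin below $10^{-4}$.

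\textbf{Property~3.} There is one slip: the critical point is $q_0\approx 0.4215$, and the maximum of $1+(h(q)-1)q$ is about $1.276$, not $1.3729$. Your check still succeeds, with room to spare.

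\textbf{Property~2.} Here your route differs from the paper's. The paper substitutes $a=xq/(1+xq)$ and rewrites the claim as $F(a,q)=\bigl(q+\tfrac{a}{1-a}\bigr)(h(a)-1)-qh(q)\ge 0$, which for fixed $a$ is a cubic in $q$. It checks $q=1$ and $q\to 0$ in closed form. For the interior, it locates the critical points in $q$ (none for $a\le a^\ast\approx 0.7395$, two otherwise) and evaluates $F$ there numerically, backed by a plot and Mathematica's \texttt{Resolve}.

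Your plan of clearing denominators works out cleanly. After multiplying by $1000\,qx(1+x)(1+qx)^2$, the $x^4$ terms cancel (this is your observation that both sides tend to $1$), and the claim becomes
\[
156-q(532-550q-149q^2)\,x+q(699-2082q+1100q^2+298q^3)\,x^2+q^2\bigl(1004-P(q)\bigr)\,x^3\ge 0 .
\]
Its leading coefficient is exactly your first-order condition $qh(q)<1.004$. This gives you an exact-arithmetic certificate in place of the paper's graphical one, but expect to subdivide, for three reasons:
\begin{enumerate}
\item Your guess about interior tightness is right: at $q=0.8$, $x=2.3$ the two sides of the original inequality differ by only about $5\cdot 10^{-4}$.
\item The leading coefficient is only $4$ at $q=1$.
\item After substituting $u=1/(1+x)$ and multiplying by $u^3$, the polynomial equals $156u^3$ on $q=0$ and $q^2(1004-P(q))$ on $u=0$. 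So it still vanishes at the corner $(q,u)=(0,0)$.
\end{enumerate}
At that corner a Bernstein certificate can only have nonnegative, not strictly positive, coefficients. It still works on a small corner box, because $848q^2+699qu+156u^3$ dominates nearby.
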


This allows us to relate the cardinality overhead to the normalized density of a phase via~$h$.

\begin{lemma}
    \label{lem:relation_of_slack_and_slope}
    In each phase $j \in [\ell]$, we have $c_j \leq h(q_{j})$.
\end{lemma}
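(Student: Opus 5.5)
We prove the statement by induction on $j$, using Property~\ref{prop:first_condition_on_h} of \Cref{cla:properties_of_h} as the inductive step and Property~\ref{prop:h_decreasing} as the base.

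\textbf{Base case.} For $j=1$ we have $K_1=k_1$, so $c_1=1$. Also $\OPTk{0}=\OPT{0}=0$, hence $q_1=1$. Since $h(1)=1$, we get $c_1\le h(q_1)$.

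\textbf{Inductive step.} Let $j\ge 2$ and assume $c_{j-1}\le h(q_{j-1})$. Write $K_j=K_{j-1}+k_j$, so that
\[
c_j = 1+\frac{K_{j-1}}{k_j} = 1+\frac{c_{j-1}\,k_{j-1}}{k_j}\le 1+h(q_{j-1})\frac{k_{j-1}}{k_j}.
\]
We want to express both $k_{j-1}/k_j$ and $q_j$ through one parameter $x>0$, so that Property~\ref{prop:first_condition_on_h} applies with $q=q_{j-1}$. Put $k_j=k_{j-1}(1+x)$, which gives $x>0$ because $k_j>k_{j-1}$. Then $k_{j-1}/k_j=1/(1+x)$, and the target inequality becomes
\[
1+\frac{h(q_{j-1})}{1+x}\le h\!\left(\frac{xq_{j-1}}{1+xq_{j-1}}\right).
\]
Since $h$ is decreasing, it suffices to show
\[
q_j\le \frac{xq_{j-1}}{1+xq_{j-1}}.
\]

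\textbf{Bounding $q_j$.} Apply \Cref{lem:bound_opt_with_density} with $t=j-1$ and $k=k_j$:
\[
\OPTk{j}\le \OPTk{j-1}+(k_j-k_{j-1})\delta_{j-1}=\OPTk{j-1}+x\,k_{j-1}\delta_{j-1}=\OPTk{j-1}\,(1+xq_{j-1}),
\]
where the last step uses $k_{j-1}\delta_{j-1}=q_{j-1}\OPTk{j-1}$. Therefore
\[
q_j=1-\frac{\OPTk{j-1}}{\OPTk{j}}\le 1-\frac{1}{1+xq_{j-1}}=\frac{xq_{j-1}}{1+xq_{j-1}}.
\]

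\textbf{Combining.} Chaining the bounds gives
\[
c_j\le 1+\frac{h(q_{j-1})}{1+x}\le h\!\left(\frac{xq_{j-1}}{1+xq_{j-1}}\right)\le h(q_j),
\]
using Property~\ref{prop:first_condition_on_h} and then monotonicity of $h$.

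\textbf{Edge cases.} We must check that every argument lies in $(0,1]$. Clearly $(0,1]$ contains $q_{j-1}$ and $xq_{j-1}/(1+xq_{j-1})$. The quantity $q_j$ is positive only if $\OPTk{j}>\OPTk{j-1}$. If $\delta_j=0$, then $q_j=0$ and $h(q_j)$ is undefined.

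In that degenerate case $\OPT{k}=\OPTk{j-1}$ for all $k>k_{j-1}$. Then $S^{(j-1)}$ already has value $\OPT{n}$, so the algorithm is optimal from that point on. Hence I would either exclude this case or adopt the convention $h(0)=\infty$.

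The main obstacle is choosing the right parametrization by $x$ so that \Cref{lem:bound_opt_with_density}, applied at the previous phase, produces exactly the argument appearing in Property~\ref{prop:first_condition_on_h}. The rest is routine rearrangement.
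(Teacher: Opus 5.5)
Your proof is correct and follows the paper's argument essentially step for step. It uses induction on $j$, the substitution $k_j=(1+x)k_{j-1}$, and \Cref{lem:bound_opt_with_density} with $t=j-1$, $k=k_j$ to get $q_j\le xq_{j-1}/(1+xq_{j-1})$, then chains \Cref{prop:first_condition_on_h} with \Cref{prop:h_decreasing}.

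Two small remarks. Your base case asserts $q_1=1$, which requires $f(\emptyset)=0$; the paper instead writes $c_1=1=h(1)\le h(q_1)$ using only that $h$ is decreasing, which avoids that assumption. Your remark about the degenerate case $\delta_j=0$, where $h(q_j)$ is undefined, is a legitimate edge case that the paper silently ignores.
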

\begin{proof}
    We prove the lemma by induction on $j$. For the inductive basis ($j=1$), we observe that $c_1=1=h(1)\leq h(q_1)$. 
    
    Now, we assume the inductive hypothesis hold for phase $j-1$ and we show it for phase $j > 1$. Let $x = (k_j-k_{j-1}) / k_{j-1}$. Note that $x > 0$ as $k_j > k_{j-1}$ by the choice of the algorithm. By \Cref{lem:bound_opt_with_density} applied with $k = k_j$ and $t = j-1$, we obtain
    \[
        \OPTk{j}\leq \OPTk{j-1} + (k_j-k_{j-1}) \cdot \delta_{j-1}
        = \OPTk{j-1} + (k_j-k_{j-1}) \cdot \frac{\OPTk{j-1}}{k_{j-1}} \cdot q_{j-1}
        = (1+x \cdot q_{j-1})\cdot \OPTk{j-1}.
    \]
    Using the bound above, we can bound the normalized density as 
    \begin{align}
        \label{eq:bounding_slope}
        q_{j} = 1-\frac{\OPTk{j-1}}{\OPTk{j}} 
        \leq 1-\frac{\OPTk{j-1}}{(1+x \cdot q_{j-1}) \cdot \OPTk{j-1}} =\frac{x \cdot q_{j-1}}{1+x \cdot q_{j-1}}.
    \end{align}
    We conclude that
    \begin{align*}
        c_{j} 
        & = \frac{\sum_{i=0}^{j} k_i}{k_{j}}
            = 1+c_{j-1} \cdot \frac{k_{j-1}}{k_j}
            = 1+\frac{c_{j-1}}{1+x} \\
        & \leq 1+\frac{h(q_{j-1})}{1+x} 
            && \text{(by the inductive hypothesis)} \\
        & \leq h\left(\frac{x \cdot q_{j-1}}{1+x \cdot q_{j-1}}\right) 
            && \text{(by \Cref{prop:first_condition_on_h} of \Cref{cla:properties_of_h})} \\
        & \leq h(q_{j}),
            && \text{(by \eqref{eq:bounding_slope} and \Cref{prop:h_decreasing} of \Cref{cla:properties_of_h})}
    \end{align*}
    which proves the inductive claim, and thus concludes the proof.
\end{proof}

\subsection{Bounds on the Optimum}

Recall that once the algorithm reaches the end of phase $j$, it has collected $\sum_{i=1}^j p_j \leq \sum_{i=1}^j k_i = K_j$ elements. 
Hence, our goal now is to upper-bound the gain of \OPTAlg once it has collected the same number of elements, as a benchmark for the gain of \Alg at that point. 

For every phase $j \in [\ell-1]$, we define 
\[
    k^*_j = k_j \cdot \frac{\delta_{j}}{\delta_{j}-\delta_{j+1}}.
\]
Note that $k^*_j$ is not necessarily an integer and may not correspond directly to a cardinality.

\begin{lemma}
    \label{lem:kstar_properties}
    For each phase $j \in [\ell-1]$, it holds that $k_j \leq k^*_j$
    and $(k^*_j - k_j) \cdot \delta_j = k^*_j \cdot \delta_{j+1}$.
\end{lemma}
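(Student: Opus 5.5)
The plan is to first establish $0 \leq \delta_{j+1} < \delta_j$, which makes $k^*_j$ well defined and positive, and then obtain both claims by direct algebra from the definition of $k^*_j$.

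\emph{Nonnegativity of $\delta_{j+1}$.} We have $\delta_{j+1} = (\OPTk{j+1}-\OPTk{j})/k_{j+1} \geq 0$. This holds because $k_{j+1} > k_j$ and $\OPT{\cdot}$ is nondecreasing in the cardinality, by monotonicity of $f$.

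\emph{Strict decrease $\delta_{j+1} < \delta_j$.} \Cref{obs:densities_decrease} gives only $\delta_j \geq \delta_{j+1}$, so strictness needs an argument. I would apply \Cref{lem:bound_opt_with_density} with $t = j$ and $k = k_{j+1}$. This gives $\OPTk{j+1} \leq \OPTk{j} + (k_{j+1}-k_j)\,\delta_j$, i.e.\ $k_{j+1}\delta_{j+1} \leq (k_{j+1}-k_j)\,\delta_j$. Rearranging,
\[
  k_{j+1}(\delta_j - \delta_{j+1}) \;\geq\; k_j\,\delta_j .
\]
Hence if $\delta_j > 0$, then $\delta_j - \delta_{j+1} \geq (k_j/k_{j+1})\,\delta_j > 0$.

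\emph{The degenerate case $\delta_j = 0$.} This is the step that needs care. Here $\OPT{k} = \OPTk{j-1}$ for all $k > k_{j-1}$, so $f(\U) = \OPTk{j-1} \leq f(S^{(j-1)})$ by \Cref{lem:1}. From that point on the algorithm's value already equals $f(\U)$, and every later prefix is trivially optimal. I would therefore state explicitly that phases with zero density are excluded from the definition of $k^*_j$, or equivalently that one may stop the analysis at the first such phase; then $\delta_j > 0$ holds for all relevant $j$. With this convention, $0 \leq \delta_{j+1} < \delta_j$.

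\emph{The two claims.} Given $0 \leq \delta_{j+1} < \delta_j$, the factor $\delta_j/(\delta_j-\delta_{j+1})$ is at least $1$, since its denominator is positive and at most its numerator. This yields $k_j \leq k^*_j$. For the identity, the definition gives $k^*_j(\delta_j - \delta_{j+1}) = k_j \delta_j$, which rearranges directly to $(k^*_j - k_j)\,\delta_j = k^*_j\,\delta_{j+1}$.

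The only real obstacle is the strictness and degeneracy issue in the second and third steps; the rest is a one-line computation.
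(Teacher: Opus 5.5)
Your proof is correct and takes the same route as the paper: $k_j \le k^*_j$ comes from the ordering of the densities, and the identity comes from clearing the denominator in the definition of $k^*_j$. Your argument is more careful than the paper's one-liner, which cites only the non-strict \Cref{obs:densities_decrease} and so leaves both $\delta_j-\delta_{j+1}>0$ and $\delta_{j+1}\ge 0$ implicit. Your strict bound $\delta_{j+1}\le(1-k_j/k_{j+1})\,\delta_j$ via \Cref{lem:bound_opt_with_density} closes that gap. So does your observation that $k^*_j$ is genuinely $0/0$ when $\delta_j=0$ (which can happen for some $j\le\ell-1$ under unlucky tie-breaking) and must be excluded by convention.
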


\begin{proof}
    By \Cref{obs:densities_decrease}, $\delta_j \geq \delta_{j-1}$, and thus 
    $k^*_j \geq k_j$. The second property follows as
    $(k^*_j - k_j) \cdot \delta_j - k^*_j \cdot \delta_{j+1} = k^*_j \cdot (\delta_j - \delta_{j+1}) - k_j \cdot \delta_j = 0$.
\end{proof}

The intuition behind $k^*_j$ is that it is the cardinality where the two bounds on $\OPT{k}$ given by \Cref{lem:bound_opt_with_density} applied with $t = j$ and $t = j+1$ intersect. To see this, recall that \Cref{lem:bound_opt_with_density} applied with $t = j$ yields
\[
    \OPT{k} \leq \OPTk{j}+ (k-k_{j})\cdot \delta_{j}.
\]
On the other hand, \Cref{lem:bound_opt_with_density} applied with $t = j+1$ yields 
\begin{align*}
    \OPT{k} &\leq \OPTk{j+1}+ (k-k_{j+1})\cdot \delta_{j+1} \\
    &= \OPTk{j}+ k_{j+1} \cdot \delta_{j+1}+ (k-k_{j+1})\cdot \delta_{j+1} \\
    &=\OPTk{j}+ k \cdot \delta_{j+1}.
\end{align*}

We conclude that the above bounds coincide at~$k^{\star}_j$ and use \Cref{lem:kstar_properties} to relate these bounds to~$\rho$.

\begin{lemma}
    \label{lem:good_approx_for_opt_bound_intersection}
    Fix $\rho$ as in \Cref{cla:properties_of_h}.
    For $j\in [\ell]$, we have 
    \[ 
        \OPTk{j}+ (k^*_j-k_{j})\cdot \delta_{j} = \OPTk{j}+ k^*_j \cdot \delta_{j+1} \leq \rho \cdot (\OPTk{j}+ (k^*_j-K_j) \cdot \delta_{j+1}).
    \]
\end{lemma}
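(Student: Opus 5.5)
The plan is to normalize everything by $\OPTk{j}$ and $k_j$, so that the inequality becomes exactly \Cref{prop:third_condition_on_h} of \Cref{cla:properties_of_h}, with the cardinality overhead $c_j$ controlled by \Cref{lem:relation_of_slack_and_slope}. The equality on the left is just the second statement of \Cref{lem:kstar_properties}, $(k^*_j-k_j)\delta_j = k^*_j\delta_{j+1}$, so only the inequality needs work. I will work with $j \in [\ell-1]$, where $k^*_j$ is defined.

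First, parametrize $k^*_j$ by setting $x \define (k^*_j - k_j)/k_j$. We have $x \geq 0$ by \Cref{lem:kstar_properties}. The relation $(k^*_j-k_j)\delta_j = k^*_j\delta_{j+1}$ then reads $x k_j\delta_j = (1+x)k_j\delta_{j+1}$, that is,
\[
\delta_{j+1} = \frac{x}{1+x}\,\delta_j .
\]
Using $\delta_j = q_j\OPTk{j}/k_j$ and $K_j = c_j k_j$, I compute the three quantities appearing in the inequality, each divided by $\OPTk{j}$:
\[
\frac{k^*_j\delta_{j+1}}{\OPTk{j}} = \frac{(1+x)k_j\cdot \frac{x}{1+x}\delta_j}{\OPTk{j}} = xq_j,
\qquad
\frac{K_j\delta_{j+1}}{\OPTk{j}} = \frac{c_j x q_j}{1+x}.
\]
So the left-hand side equals $(1+xq_j)\OPTk{j}$. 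The right-hand side equals $\rho\bigl(1+xq_j - c_j xq_j/(1+x)\bigr)\OPTk{j}$.

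Dividing by $\rho(1+xq_j)\OPTk{j}>0$, the claim is therefore equivalent to
\[
1-\frac{c_j\, x q_j}{(1+xq_j)(1+x)} \geq \rho^{-1}.
\]
By \Cref{lem:relation_of_slack_and_slope}, $c_j \leq h(q_j)$. Since the subtracted term is nonnegative and increasing in $c_j$, I may replace $c_j$ by $h(q_j)$. \Cref{prop:third_condition_on_h} of \Cref{cla:properties_of_h}, applied with this $x\geq 0$ and $q=q_j\in(0,1]$, then gives the bound. Here $q_j>0$ because $\delta_j>0$; if $\OPTk{j}=\OPTk{j-1}$ the phase would add nothing, and I would note that this degenerate case is handled trivially.

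The main obstacle is the degenerate cases rather than the algebra. If $\delta_{j+1}=0$, then $x=0$ and the inequality is immediate, since both sides reduce to $\OPTk{j}\leq\rho\,\OPTk{j}$. If $\delta_j=\delta_{j+1}$, then $k^*_j$ is infinite. I would handle this either by observing that the lemma is only invoked for finite $k^*_j$, or by a limiting argument as $x\to\infty$, where \Cref{prop:third_condition_on_h} still holds uniformly. Finally, I must keep the direction of the substitution $c_j\le h(q_j)$ straight: it enlarges the subtracted term, so the reduction goes the right way.
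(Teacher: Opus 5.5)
Your proposal is correct and matches the paper's proof step for step. You use the same substitution $x=(k^*_j-k_j)/k_j$ and reduce to the same normalized ratio $1-\frac{c_j x q_j}{(1+xq_j)(1+x)}$. You then apply \Cref{lem:relation_of_slack_and_slope} followed by \Cref{prop:third_condition_on_h} of \Cref{cla:properties_of_h}, exactly as the paper does.

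Your worry about $\delta_j=\delta_{j+1}>0$ is unnecessary. \Cref{lem:bound_opt_with_density} gives $(\OPT{k}-\OPTk{j})/k\le\delta_j(1-k_j/k)<\delta_j$ for all $k>k_j$. Hence $\delta_{j+1}<\delta_j$, and $k^*_j$ is finite whenever $\delta_j>0$.
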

\begin{proof}
    The first inequality is due to the definition of $k^*_j$ and is restated in \Cref{lem:kstar_properties}. Thus, it suffices to bound $\OPTk{j}+ (k^*_j-k_{j})\cdot \delta_{j}$. Let $x = (k^*_j-k_j)/k_j$ be the normalized distance between $k^*_j$ and $k_j$. Then, 
    \[
        \frac{k^*_j-k_j}{k^*_j} = \frac{x}{1+x} 
            \qquad \text{and} \qquad
        \frac{k^*_j}{k_j} = 1+x.
    \]
    Moreover,
    \begin{align}
        \label{eq:normalized_upper_bound_on_opt}
        \OPTk{j} + (k^*_j-k_j) \cdot \delta_{j} = \OPTk{j} \cdot \left( 1 + \frac{k^*_j-k_j}{k_j} \cdot q_j\right) = \OPTk{j} \cdot (1+xq_j).
    \end{align}
    On the other hand, by \Cref{lem:kstar_properties},
    we have $\delta_{j+1} = \delta_j \cdot (k^*_j-k_{j})/k^*_j = q_j \cdot (\OPTk{j}/k_j) \cdot (k^*_j-k_{j})/k^*_j$, and thus
    \begin{align}
        \nonumber
        \OPTk{j} + (k^*_j-K_j) \cdot \delta_{j+1} 
        & = \OPTk{j} \cdot \left( 
            1 + \frac{k^*_j-K_j}{k_j} \cdot    \frac{k^*_j-k_{j}}{k^*_j} \cdot q_j 
            \right) \\
        \nonumber
        & = \OPTk{j} \cdot \left( 
            1 + (1+x-c_j) \cdot \frac{x}{1+x} \cdot q_j 
            \right) \\
        \label{eq:upper_bound_on_opt_at_k_star}
        & = \OPTk{j} \cdot \left( 
            1+x q_j - \frac{c_j \cdot x q_j}{1+x}
            \right).
    \end{align}
    Combining \eqref{eq:normalized_upper_bound_on_opt} with \eqref{eq:upper_bound_on_opt_at_k_star}, we obtain 
    \begin{align*}
        \frac{\OPTk{j}+(k^*_j-K_j) \cdot \delta_{j+1}}{\OPTk{j} + (k^*_j-k_{j})\cdot \delta_{j}} 
        & = 1-\frac{c_{j} \cdot xq_{j}}{(1+ xq_{j})(1+x)}\\
        & \geq 1-\frac{h(q_{j}) \cdot xq_{j}}{(1+ xq_{j})(1+x)} 
            && \text{(by \Cref{lem:relation_of_slack_and_slope})}\\
        & \geq 1/\rho.
            && \text{(by \Cref{prop:third_condition_on_h} of \Cref{cla:properties_of_h})}
        \qedhere
    \end{align*}
\end{proof}

Now, we extend the good approximation result 
to all cardinalities within a~phase. For this, we use linearity of the upper bound on the optimum value given in \Cref{lem:bound_opt_with_density}.

\begin{lemma}
    \label{lem:good_approx_helper}
    Fix $t \in \set{0, \dots, \ell-1}$ and any real $y$ such that $K_t \leq y \leq K_{t+1}$. Then, for every phase $j \in \set{0, \dots, \ell-1}$, it holds that
    $
        \OPTk{t}+\left(y- K_t\right) \cdot \delta_{t+1} \leq  \OPTk{j}+\left(y- K_j\right) \cdot \delta_{j+1}
    $.
\end{lemma}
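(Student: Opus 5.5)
Define $L_j(y) \define \OPTk{j} + (y - K_j)\cdot\delta_{j+1}$ for $j \in \{0,\dots,\ell-1\}$, with $\OPTk{0}=0$ and $K_0=0$. Each $L_j$ is a line in $y$, and the statement says that on $[K_t, K_{t+1}]$ the line $L_t$ lies below every other $L_j$. I will show that consecutive lines cross exactly at the breakpoints $K_{j+1}$. Together with the monotonicity of the slopes, this makes the family behave like the pieces of a concave piecewise-linear function, and the claim follows.

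\textbf{Step 1 (crossing points).} Compare $L_j$ and $L_{j+1}$:
\[
  L_{j+1}(y) - L_j(y) = \OPTk{j+1} - \OPTk{j} - (K_{j+1}-K_j)\,\delta_{j+2} + (y-K_{j+1})(\delta_{j+2}-\delta_{j+1}) + (K_{j+1}-K_j)\,\delta_{j+1} - (K_{j+1}-K_j)\,\delta_{j+1}.
\]
A cleaner route is to evaluate both lines at $y=K_{j+1}$:
\[
  L_j(K_{j+1}) = \OPTk{j} + k_{j+1}\delta_{j+1} = \OPTk{j+1} = L_{j+1}(K_{j+1}),
\]
using $K_{j+1}-K_j = k_{j+1}$ and the definition of $\delta_{j+1}$. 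So consecutive lines meet at $K_{j+1}$, and $L_{j+1}$ has slope $\delta_{j+2} \le \delta_{j+1}$ by \Cref{obs:densities_decrease}. Hence $L_{j+1}(y) \ge L_j(y)$ for $y \le K_{j+1}$, and $L_{j+1}(y) \le L_j(y)$ for $y \ge K_{j+1}$. The edge case $j+1=\ell$ has no $\delta_{\ell+1}$, but $j$ only ranges up to $\ell-1$, so $L_{j+1}$ is needed only for $j+1\le \ell-1$.

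\textbf{Step 2 (chaining).} Fix $y\in[K_t,K_{t+1}]$.

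For $j>t$, chain the comparisons $L_t \le L_{t+1} \le \dots \le L_j$ at $y$. The step from $L_m$ to $L_{m+1}$ with $m \ge t$ requires $y \le K_{m+1}$, which holds because $y \le K_{t+1} \le K_{m+1}$.

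For $j<t$, chain downward: $L_t(y) \le L_{t-1}(y) \le \dots \le L_j(y)$. The step from $L_{m}$ to $L_{m+1}$ with $m+1 \le t$ requires $y \ge K_{m+1}$, which holds because $y \ge K_t \ge K_{m+1}$.

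This proves the lemma.

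\textbf{Main obstacle.} The only point needing care is getting the direction of the inequality right from the sign of the slope difference, together with the index bookkeeping at the boundary phases $t=0$ and $j=\ell-1$. Beyond that, the verification is routine.
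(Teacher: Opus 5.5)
Your proof is correct and is essentially the paper's argument. The paper computes $A_r - A_{r-1} = (y-K_r)(\delta_{r+1}-\delta_r)$ directly, which is exactly your observation that consecutive lines coincide at $K_r$ with nonincreasing slopes, and then chains the signs on either side of $t$ as you do. One cosmetic fix: delete the abandoned first display in Step~1, which is wrong as written (the term $-(K_{j+1}-K_j)\,\delta_{j+2}$ should carry $\delta_{j+1}$, after which it reduces to $(y-K_{j+1})(\delta_{j+2}-\delta_{j+1})$).
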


\begin{proof}
    For any $r \in \set{0, \dots, \ell-1}$, we define $A_r \define \OPTk{r}+ (y - K_r) \cdot \delta_{r+1}$. We will show that $A_t \leq A_j$ for any $j$. First, we observe that for each $r \geq 1$ it holds that
    \begin{align*}
        A_r - A_{r-1} 
        & = (\OPTk{r}+ (y - K_r) \cdot \delta_{r+1}) - (\OPTk{r-1} + (y-K_{r-1}) \cdot \delta_{r}) \\
        &=(\OPTk{r} + (y - K_{r} ) \cdot \delta_{r+1}) - 
        (\OPTk{r-1} + k_{r} \cdot \delta_{r}+ (y - K_{r}) \cdot \delta_{r}) \\
        &= (y - K_r) \cdot (\delta_{r+1} - \delta_r).
    \end{align*}
    By \Cref{obs:densities_decrease}, it holds that $\delta_r \geq \delta_{r+1}$. Hence, $A_0 \geq A_1 \geq \dots \geq A_{t}$ and $A_t \leq A_{t+1} \leq \dots \leq A_{\ell}$, which concludes the proof.
\end{proof}

\begin{lemma}
    \label{lem:good_approx}
    Fix $\rho$ as in \Cref{cla:properties_of_h}.
    For all $k \in [n]$ and $j \in \set{0, \dots, \ell-1}$, it holds that
    \[
        \OPT{k} \leq \rho \cdot (\OPTk{j}+ \left(k- K_j \right) \cdot \delta_{j+1}).
    \]
\end{lemma}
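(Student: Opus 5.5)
By \Cref{lem:good_approx_helper}, it suffices to prove the inequality for a single well-chosen phase. Given $k \in [n]$, pick $t \in \set{0,\dots,\ell-1}$ with $K_t \leq k \leq K_{t+1}$. Such a $t$ exists because $n = \sum_i p_i \leq \sum_i k_i = K_\ell$ and $K_0 = 0$. Write $g(k) \define \OPTk{t} + (k-K_t)\cdot\delta_{t+1}$. \Cref{lem:good_approx_helper} gives $g(k) \leq \OPTk{j} + (k-K_j)\cdot\delta_{j+1}$ for every $j$. Hence it is enough to show $\OPT{k} \leq \rho\cdot g(k)$.

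\textbf{Case $t=0$.} Here $\OPTk{0}=0$ and $K_0=0$. \Cref{lem:bound_opt_with_density} with phase $1$ gives $\OPT{k} \leq \OPTk{1} + (k-k_1)\delta_1 = k\delta_1 = g(k)$, since $\OPTk{1} = k_1\delta_1$. So the ratio is at most $1$.

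\textbf{Case $t \geq 1$ (so $t \in [\ell-1]$).} I use the two upper bounds on $\OPT{k}$ from \Cref{lem:bound_opt_with_density}, applied with phases $t$ and $t+1$, as rewritten after \Cref{lem:kstar_properties}:
\[
U_1(k) \define \OPTk{t}+(k-k_t)\delta_t, \qquad U_2(k) \define \OPTk{t}+k\,\delta_{t+1}.
\]
Then $\OPT{k} \leq \min\{U_1(k),U_2(k)\}$, and the two lines cross at $k^*_t$. All of $U_1,U_2,g$ are affine and positive on $[K_t,K_{t+1}]$. A ratio of two positive affine functions is monotone on an interval, so it suffices to check the endpoints of each linear piece.

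\emph{Piece $k \leq k^*_t$, using $U_1$.} At $k=k^*_t$ (if $K_t \leq k^*_t$), \Cref{lem:good_approx_for_opt_bound_intersection} gives exactly $U_1(k^*_t) \leq \rho\, g(k^*_t)$. At $k=K_t$, we have $g(K_t)=\OPTk{t}$ and
\[
U_1(K_t)/\OPTk{t} = 1+(c_t-1)q_t \leq 1+(h(q_t)-1)q_t \leq \rho,
\]
by \Cref{lem:relation_of_slack_and_slope} and \Cref{prop:second_condition_on_h} of \Cref{cla:properties_of_h}. Here I use $q_t \geq 0$ so the substitution of $h(q_t)$ for $c_t$ is monotone.

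\emph{Piece $k \geq k^*_t$, using $U_2$.} Both $U_2$ and $g$ have slope $\delta_{t+1}$. Their difference is the constant $K_t\delta_{t+1} \geq 0$, so $U_2/g$ is nonincreasing in $k$. Its maximum on this piece is therefore at the left endpoint, namely at $\max\{k^*_t, K_t\}$.
If that endpoint is $k^*_t$, the bound $U_2(k^*_t)=U_1(k^*_t) \leq \rho\, g(k^*_t)$ from before applies. If it is $K_t$ (i.e.\ $k^*_t<K_t$), then $U_2(K_t) \leq U_1(K_t)$, since to the right of the crossing the smaller line is $U_2$. So $U_2(K_t)/g(K_t) \leq 1+(c_t-1)q_t \leq \rho$ as above.

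Combining the pieces gives $\OPT{k} \leq \min\{U_1,U_2\}(k) \leq \rho\, g(k)$ on the whole interval $[K_t,K_{t+1}]$.

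The main obstacle I expect is the bookkeeping in the last case: whether $k^*_t$ lies inside $[K_t,K_{t+1}]$ or to its left, and confirming that the monotonicity argument covers every configuration. This includes the degenerate case $\delta_t=\delta_{t+1}$, where $k^*_t$ is infinite; then only $U_1$ is relevant, and its slope equals that of $g$, so again the endpoint $K_t$ suffices. The substantive inequalities are already packaged in \Cref{lem:good_approx_for_opt_bound_intersection} and \Cref{cla:properties_of_h}.
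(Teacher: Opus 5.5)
Your proof is correct, and it is organized differently from the paper's.

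\textbf{The paper's route.} It proves the lemma by induction on $k$ over the optimum's breakpoints. For $k' \in (k_t, k_{t+1}]$ and a fixed $j$, it interpolates the affine-in-$k'$ right-hand side between anchor inequalities at three points:
\begin{itemize}
\item $k_t$, via the induction hypothesis;
\item $K_t$, via \Cref{lem:relation_of_slack_and_slope} and \Cref{prop:second_condition_on_h};
\item $k^*_t$, via \Cref{lem:good_approx_for_opt_bound_intersection}.
\end{itemize}
It then needs a separate tail argument for $k' \geq k_\ell$.

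\textbf{Your route.} You partition by the algorithm's breakpoints $[K_t, K_{t+1}]$. You apply \Cref{lem:good_approx_helper} once at the start to reduce to the single line $g$ of phase $t$. You then show $\min\{U_1, U_2\} \leq \rho\, g$ using only the anchors $K_t$ and $k^*_t$. Because $n \leq K_\ell$, this covers every $k$ at once.

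\textbf{What each buys.} The paper uses its induction hypothesis only at the points $k_t$: in its Case~1 ($k_t < k' \leq K_t$) and in the tail. In your scheme those $k'$ simply lie in earlier intervals $[K_{t'}, K_{t'+1}]$, so you need neither the induction nor the tail case. This shows the lemma is really a local statement about two consecutive phases. Your separate treatment of $t=0$ is also cleaner. There, the paper's case analysis would implicitly refer to the undefined $\delta_0$ and $k^*_0$, while your direct use of \Cref{lem:bound_opt_with_density} with phase $1$ is exactly the right argument. The paper's convex-combination phrasing, for its part, avoids your ratio-monotonicity step.

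\textbf{Two minor points.}
\begin{itemize}
\item On the $U_1$ piece, you invoke monotonicity on $[K_t, k^*_t]$, which may extend beyond $K_{t+1}$. This is harmless, since $g \geq \OPTk{t} > 0$ on all of $[K_t,\infty)$. More simply, $\rho g - U_1$ is affine and nonnegative at both endpoints. Either way, it should be stated.
\item For $t=0$ you do not need $\OPTk{0} = 0$. The identity $\OPTk{1} + (k-k_1)\delta_1 = \OPTk{0} + k\delta_1 = g(k)$ holds by the definition of $\delta_1$.
\end{itemize}
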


\begin{proof}
    Throughout the proof, we fix $j \in \set{0, \dots, \ell}$. We prove the lemma by induction on~$k$. 
    
    First, observe that for $k=0$ \Cref{lem:good_approx_helper}, applied with $t = 0$ and $y = 0$, yields $\OPTk{j}+ \left(0- K_j \right) \cdot \delta_{j+1} \geq \OPTk{0}+ \left(0- K_0 \right) \cdot \delta_{1} = \OPTk{0} = \OPT{0}$, and thus the induction basis holds.

    Assuming the lemma statement holds for all $k \leq k_{t}$ for some $t \in \set{0, \dots, \ell-1}$, we show that it holds for all $k'\in \set{ k_t+1, k_t+2, \dots, k_{t+1} }$. 
    We have
    \begin{align}
        \nonumber
        \OPTk{t} + K_{t-1} \cdot \delta_t 
        & = \OPTk{t} + K_{t-1} \cdot (q_t/k_t) \cdot \OPTk{t} 
            && \text{(by definition of $q_j$)} \\
        \nonumber
        & = (1+(c_{t}-1) \cdot q_t)  \cdot \OPTk{t} 
            && \text{(by definition of $c_j$)} \\
        \nonumber
        & \leq (1+(h(q_t)-1) \cdot q_t) \cdot \OPTk{t} 
            && \text{(by \Cref{lem:relation_of_slack_and_slope})} \\
        \nonumber
        & \leq \rho \cdot \OPTk{t}.
            && \text{(by \Cref{prop:second_condition_on_h} of \Cref{cla:properties_of_h})} \\
        \nonumber
        & = \rho \cdot (\OPTk{t}+ (K_t - K_t) \cdot \delta_{t+1}) \\
        \label{eq:good_approx_at_phase_end}
        & \leq \rho \cdot (\OPTk{j} + (K_t - K_j) \cdot \delta_{j+1}).
            && \text{(by \Cref{lem:good_approx_helper} with $y = K_t$)} 
    \end{align}
    
    We now distinguish cases depending on whether or not $k' \leq K_t$.

    \begin{description}
        \item[Case 1.] Assume $k' \leq K_t$. By induction, the statement holds for $k=k_{t}$, resulting in
    \begin{align}
        \label{eq:good_approx_opt}
        \OPTk{t} \leq \rho \cdot (\OPTk{j}+ (k_{t}-K_j) \cdot \delta_{j+1}).
    \end{align}
    We fix $\alpha \in [0,1]$ such that $k'= (1-\alpha) \cdot k_t + \alpha \cdot K_t = k_t + \alpha \cdot K_{t-1}$. Then,
    \begin{align*}
        \OPT{k'} 
        & \leq \OPTk{t} + (k'-k_t) \cdot \delta_{t} 
            && \text{(by \Cref{lem:bound_opt_with_density})} \\
        & = (1-\alpha) \cdot \OPTk{t} + \alpha \cdot ( \OPTk{t} + K_{t-1} \cdot \delta_{t}) \\
        & \leq (1-\alpha) \cdot \rho \cdot (\OPTk{j}+ (k_{t}-K_j ) \cdot \delta_{j+1}) \\
        & \qquad + \alpha \cdot \rho \cdot (\OPTk{j}+ (K_t-K_j) \cdot \delta_{j+1})
            && \text{(by \eqref{eq:good_approx_opt} and \eqref{eq:good_approx_at_phase_end})} \\
        & = \rho \cdot (\OPTk{j}+ ( k'-K_j) \cdot \delta_{j+1}).
    \end{align*}

    \item[Case 2.] Assume $K_t < k' \leq k_{t+1}$. (This case holds vacuously if $K_t \geq k_{t+1}$). 
    Note that $K_t < k' \leq k_{t+1} \leq K_{t+1}$.

    \item[Case 2.1.] Consider the sub-case $k'\leq k^*_t$. 
    In this case, we first observe that 
    \begin{align}
    \nonumber
    \OPTk{t} + (k^*_t-k_t) \cdot \delta_{t}
        & \leq \rho \cdot (\OPTk{t}+ (k^*_t-K_t) \cdot \delta_{t+1}) \\
    \label{eq:good_approx_at_k_star}
        & \leq \rho \cdot (\OPTk{j}+ (k^*_t-K_j) \cdot \delta_{j+1}),
    \end{align}
    where the first inequality is due to \Cref{lem:good_approx_for_opt_bound_intersection} and the second one follows from \Cref{lem:good_approx_helper} applied with $y = k^*_t$.
    Next, we fix $\alpha \in [0,1]$ such that $k'= (1-\alpha) \cdot K_t + \alpha \cdot k^*_t$. Then,
    \begin{align*}
        \OPT{k'} 
        & \leq \OPTk{t} + (k'-k_t) \cdot \delta_{t}
            && \text{(by \Cref{lem:bound_opt_with_density})} \\
        & = (1-\alpha) \cdot (\OPTk{t} + K_{t-1} \cdot \delta_{t})+ \alpha \cdot (\OPTk{t} + (k^*_t-k_t) \cdot \delta_{t}) \\
        & \leq (1-\alpha) \cdot \rho \cdot (\OPTk{j}+ (K_t-K_j) \cdot \delta_{j+1}) \\
            & \qquad + \alpha \cdot \rho \cdot (\OPTk{j}+ (k^*_t-K_j) \cdot \delta_{j+1})
            && \text{(by \eqref{eq:good_approx_at_phase_end} and \eqref{eq:good_approx_at_k_star}} \\
        & = \rho \cdot (\OPTk{j}+ (k' - K_j) \cdot \delta_{j+1}).
    \end{align*}
    
    \item[Case 2.2.] Consider the sub-case $k^*_t < k'$. 
    Then,
    \begin{align*}
        \OPT{k'} 
        & \leq \OPTk{t+1}+ (k'-k_{t+1}) \cdot \delta_{t+1} && \text{(by \Cref{lem:bound_opt_with_density})} \\
        & = \OPTk{t}+ k_{t+1} \cdot \delta_{t+1}+ (k'-k_{t+1}) \cdot \delta_{t+1} \\
        & = \OPTk{t}+ k' \cdot \delta_{t+1} \\
        & = \OPTk{t}+ k^*_t \cdot \delta_{t+1} + (k'-k^*_t) \cdot \delta_{t+1}\\
        & \leq \rho \cdot (\OPTk{t}+(k^*_t-K_t) \cdot \delta_{t+1}) + (k'-k^*_t) \cdot \delta_{t+1}
            && \text{(by \Cref{lem:good_approx_for_opt_bound_intersection})} \\
        & \leq \rho \cdot (\OPTk{t}+(k'-K_t) \cdot \delta_{t+1})
            && \text{(as $k' > k^*_t$ and $\rho \geq 1$)} \\ 
        & \leq \rho \cdot (\OPTk{j}+(k'-K_j)\cdot \delta_{j+1}) 
            && \text{(by \Cref{lem:good_approx_helper} with $y = k'$).} 
    \end{align*}
    \end{description}
    In all cases, we showed that $\OPT{k'} \leq \rho \cdot (\OPTk{j}+ ( k'-K_j) \cdot \delta_{j+1})$, which concludes the induction step.

    Finally, for $k'\in [n]$ with $k'\geq k_{\ell}$ we use that the statement holds for $k=k_{\ell}$ to infer
    \begin{align*}
        \OPT{k'}
        & \leq \OPTk{\ell} + (k'-k_{\ell})\cdot \delta_{\ell}
            && \text{(by \Cref{lem:bound_opt_with_density})} \\
        & \leq \rho \cdot (\OPTk{j}+(k_{\ell}-K_j)\cdot \delta_{j+1})+ (k'-k_{\ell})\cdot \delta_{\ell}\\
        & \leq \rho \cdot (\OPTk{j}+(k_{\ell}-K_j)\cdot \delta_{j+1}+ (k'-k_{\ell})\cdot \delta_{\ell})
            && \text{(as $k' \geq k_{\ell}$ and $\rho \geq 1$)} \\ 
        & \leq \rho \cdot (\OPTk{j}+(k'-K_j)\cdot \delta_{j+1}),
            && \text{(as $k' \geq k_{\ell}$ and $\delta_{\ell} \leq \delta_{j+1}$)}       
    \end{align*}
    which concludes the proof.
\end{proof}

By combining \Cref{lem:algocurve} and \Cref{lem:good_approx}, we obtain our main result.

\mainupper*

\begin{proof}
    We fix $\rho$ as in \Cref{cla:properties_of_h}, i.e., $\rho=1.3729$. 
    We fix an integer $k \in [n]$ and we will show that 
    $f(S_k) \geq \OPT{k} / \rho$.
    
    If $k \leq k_1$, then $f(S_k) = f(S_k^{(1)}) \geq k \cdot \delta_1 = \OPTk{1} + (k-k_1) \cdot \delta_1 \geq \OPT{k}$, where the first and the second inequality follow by \Cref{lem:algocurve} and \Cref{lem:bound_opt_with_density}, respectively.
    
    If $k \geq k_1$, we choose largest $t\in [\ell]$ satisfying $K_t \leq k$. Then,
    \begin{align*}
        \OPT{k} 
        & \leq \rho \cdot (\OPTk{t}+ (k- K_t) \cdot  \delta_{t+1}) 
        && \text{(by \Cref{lem:good_approx} with $j=t$)} \\
        & \leq \rho \cdot f(S^{(j)}_{i}) 
        && \text{(by \Cref{lem:algocurve} with $j=t+1$ and $i=k-K_t$)} \\
        & = \rho \cdot f(S_{k'}) 
        && \text{(with $\textstyle k'=\sum_{r=1}^{j-1} p_r + i$)} \\
        & \leq \rho \cdot f(S_k).
        && \text{(as $k'\leq K_t + i = k$)} 
        && \qedhere
        \end{align*}
\end{proof}

%% file: lower-bound.tex
\section{Lower Bounds}

In this section, we provide the full description of our lower bounds. In \Cref{sec:genLower}, we provide a lower bound on the competitive ratio of any algorithm, proving the following theorem (restated).

\LowerBound*

In \Cref{sec:algLower}, we give an improved lower bound against our algorithm, showing that the analysis of \Alg of \cref{thm:main-upper} is tight.

\begin{restatable}{theorem}{theoremalgolower}
    \label{app:thm:L2}
     \Cref{alg:main} has a competitive ratio of at least $1.3724$.
\end{restatable}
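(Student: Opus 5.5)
The plan is to build an explicit family of instances on which every inequality in the analysis of \Cref{sec:analysis} is (nearly) tight along one carefully chosen trajectory of phases. Take $n$ large and let $k$ stand for a continuous cardinality after scaling. The analysis reduces the competitive ratio to a one-parameter-per-phase recurrence. Put $x_j \define (k_j-k_{j-1})/k_{j-1}$. The relevant quantities evolve as
\[
q_j=\frac{x_jq_{j-1}}{1+x_jq_{j-1}}, \qquad c_j = 1+\frac{c_{j-1}}{1+x_j},
\]
and the ratio witnessed at the intersection point $k^*_j$ equals $\bigl(1-\tfrac{c_j x q_j}{(1+xq_j)(1+x)}\bigr)^{-1}$, where $x = (k^*_j-k_j)/k_j$.

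\textbf{Step 1 (choose the trajectory).} I would first numerically optimise a finite sequence $x_1,\dots,x_m$ together with a final $x$ so that the resulting ratio exceeds $1.3724$. The right regime is a sequence that drives $c_j$ close to the fixed point where $h$ is tight, as in \Cref{prop:first_condition_on_h} and \Cref{prop:third_condition_on_h}.

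\textbf{Step 2 (realise the optimum curve).} Next I would construct a weighted coverage function whose optimum curve $\OPT{k}$ is the piecewise-linear upper envelope of \Cref{lem:bound_opt_with_density}. This envelope consists of the segments of slope $\delta_j$ through $(k_j,\OPTk{j})$, so that the argmax in the algorithm is attained exactly at the chosen $k_j$, where ties are allowed up to a small $\varepsilon$-perturbation. Concretely, I would use a universe of weighted points partitioned into layers. The optimal set $S^\star_{k_j}$ consists of $k_j$ ``fresh'' sets of equal weight, and each of them covers a $1/k_j$-fraction of every layer $1,\dots,j$. Consequently the elements the algorithm collected in earlier phases are essentially useless toward $S^\star_{k_j}$ beyond $\OPTk{j-1}$, and $S^\star_{k_j}$ shares no elements with earlier optima. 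This overlap structure is what forces the algorithm to spend the full $p_j=k_j$ steps in phase $j$. Because the sets within a phase are symmetric, the gain inside the phase is exactly linear, which makes \Cref{lem:algocurve} tight.

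\textbf{Step 3 (verify and evaluate).} It then remains to check three things.
\begin{enumerate}
\item For intermediate cardinalities $k$, no set does better than the envelope. For this I would add dummy sets realising $\OPT{k}$ on the segments, while bounding every other family via the layer structure.
\item The algorithm indeed makes the intended choices, taking the adversarial tie-breaking permitted by the $\arg\max$.
\item At cardinality $k=K_m+(k^*_m - k_m)$, or the nearest integer after scaling, the algorithm holds only about $\OPTk{m}+(k-K_m)\delta_{m+1}$, while $\OPT{k}$ equals the tangent-intersection value. Their ratio is then the number from Step~1, up to $o(1)$ as $n\to\infty$.
\end{enumerate}

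The main obstacle is Step~2. I must make the overlaps exact so that the algorithm gains nothing extra, neither from its earlier elements nor from the greedy order inside a phase. At the same time, the optimum must not exceed the envelope through some cleverer mixture of sets from different phases. If the layered construction leaks value, I would first try to make each phase's sets cover the previous layers exactly proportionally, mimicking the fractional tangent picture, and to control mixtures by a direct LP-style counting argument.
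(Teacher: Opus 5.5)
\textbf{Gap: the instance is never constructed.} Your strategy matches the paper's: make every inequality of the analysis tight by giving phase $j$ a fresh group of $k_j$ symmetric sets that overlap earlier coverage proportionally. But the step you call the main obstacle is the substance of the proof, and you leave it open.

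Requiring each set of phase $j$ to cover a $1/k_j$-fraction of every layer is enough for the algorithm's linear gains, since it always has the earlier layers completely covered. It does not determine $f$ on families that mix \emph{partial} groups, which is exactly what the optimum can use. So two things you need do not follow: that $\OPT{k_j}=\OPTk{j}$ (i.e., that $S^\star_{k_j}$ is the fresh group), and the upper bound on $\OPT{k}$ that pins down the algorithm's $\arg\max$. ``An LP-style counting argument'' does not supply them. Two ideas are missing.

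\emph{First}, every set of group $i$ must meet \emph{every} union $C$ of sets from earlier groups in exactly $\vol(C)/s_i$. The paper gets this in a product space where group $i$ slices its own fresh coordinate, $E^{(i)}_\ell=[0,v_i)\times[0,1)^{i-1}\times[\frac{\ell-1}{s_i},\frac{\ell}{s_i})\times[0,1)^{q-i}$. This yields the closed form $f(S_{\le j})=\frac{a_j}{s_j}(v_j-f(S_{\le j-1}))+f(S_{\le j-1})$, where $a_j$ is the number of group-$j$ sets in $S$.

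\emph{Second}, an exchange argument. Assume $(v_{i+1}-v_{i-1})/s_{i+1}\le(v_i-v_{i-1})/s_i$, which is your recurrence for $q_j$ read as an inequality. If a family of size at most $s_i$ uses a group beyond $i$, replacing the sets of its highest group by unused sets of the group below does not decrease $f$. This gives $\OPT{s_i}=v_i$. A convexity argument then keeps $\OPT{k}$ below the chord between consecutive $(s_i,v_i)$, so the algorithm picks $k_i=s_i$ (ties broken adversarially).

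\textbf{The dummy sets are unnecessary.} They are also exactly where the leakage you fear would come from: extra sets can change $S^\star_{k_j}$ or the $\arg\max$, or lift the optimum above the envelope. Along your trajectory the recurrence for $q_j$ holds with equality. That means $(k_j,\OPTk{j})$ lies on the tangent line of phase $j-1$, i.e.\ $k^*_{j-1}=k_j$. So the final intersection point can itself be made a group size by appending one more group. There the optimum is realized by that group, and away from group sizes only an upper bound is needed.

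This is what the paper does. It evaluates at $k=s_q$ for $q=4$, $(s_i)=(10000,59683,208018,461378)$, $(v_i)=(1,5.9683,18.3164,33.3561)$, obtaining $1.3724$. Your proposal never exhibits parameters, so the constant itself is not established.

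\textbf{The evaluation cardinality is wrong.} The ratio $(1-\frac{c_m x q_m}{(1+xq_m)(1+x)})^{-1}$ compares both curves at the same cardinality $k^*_m$. At $K_m+k^*_m-k_m>k^*_m$ (for $m\ge 2$), the optimum already exceeds the tangent-intersection value, and the resulting ratio is strictly smaller.
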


\subsection{A Hard Instance for Incremental Algorithms}\label{sec:genLower}

Consider a set $X \define \{x_{i,j} \mid i \in [5], j \in [4]\}$ of $20$ elements, and the subsets $R_i = \{ x_{i,j} \mid j \in [4]\}$ and $C_j = \{ x_{i,j} \mid i \in [5]\}$. Let $f$ be the coverage function defined over the collection of subsets $\U = \{R_1, \dots, R_5, C_1, C_2\}$. In other words, for every $S \subseteq \U$, we have $f(S) = \left|\bigcup_{E \in S} E \right|$.

The following lemma shows that \emph{any} incremental solution for the function $f$ has a~competitive ratio of at least $1.25$, proving \Cref{app:thm:L1}.
Consider \Cref{fig:LB} alongside its proof.

\begin{figure}
\begin{tikzpicture}
  \definecolor{mygreen}{HTML}{009988}
  \definecolor{myorange}{HTML}{EE7733}

  \def\PicoScale{1.25} 
  \def\DotScale{1.8}  

  \def\picbase{0.78mm}

  \tikzset{
    pics/pico/.style args={r #1 c #2}{code={
      \pgfmathtruncatemacro{\r}{#1}
      \pgfmathtruncatemacro{\c}{#2}

      \pgfmathsetmacro{\lw}{0.75}      
      \pgfmathsetmacro{\rc}{1.50}      
      \pgfmathsetmacro{\dr}{0.115*\DotScale} 

      \foreach \i in {0,...,4} {%
        \ifnum\i<\r
          \draw[mygreen, line width=\lw pt, rounded corners=\rc pt] (0,\i) rectangle (4,\i+1);
        \fi
      }

      \foreach \j in {0,...,3} {%
        \ifnum\j<\c
          \draw[myorange, line width=\lw pt, rounded corners=\rc pt] (\j,0) rectangle (\j+1,5);
        \fi
      }

      \foreach \i in {0,...,4} {%
        \foreach \j in {0,...,3} {%
          \pgfmathtruncatemacro{\covered}{(\i<\r) || (\j<\c) ? 1 : 0}
          \ifnum\covered=1
            \fill (\j+0.5,\i+0.5) circle (\dr);
          \fi
        }%
      }%
    }},
  }

  \begin{axis}[width=13.2cm,height=7.6cm,xmin=0,xmax=7,ymin=0,ymax=22,xtick={0,1,2,3,4,5,6,7},ytick={0,5,10,15,20},axis lines=left,axis line style={-stealth},tick style={black},grid=both,major grid style={draw=black,opacity=0.12},minor grid style={draw=black,opacity=0.07},xlabel={$k$},ylabel={$f$},enlargelimits=false,clip=false]

    \addplot[very thick,dashed,myorange,mark=*,mark size=2pt,mark options={draw=myorange,fill=myorange,line width=0pt}]
      coordinates {(0,0) (1,5) (2,10) (3,12) (4,14) (5,16) (6,18) (7,20)};
    \addplot[very thick,mygreen,mark=*,mark size=2pt,mark options={draw=mygreen,fill=mygreen,line width=0pt}]
      coordinates {(0,0) (1,4) (2,8) (3,12) (4,16) (5,20) (6,20) (7,20)};

    \node[scale=0.75] at (axis cs:2,9)  {$\frac{10}{8}$};
    \node             at (axis cs:5,18) {$\frac{20}{16}$};

    \pgfmathsetmacro{\picunit}{\PicoScale*\picbase}
    \newcommand{\Pico}[2]{\tikz[x=\picunit,y=\picunit,baseline=-0.3ex]{\pic{pico={r #1 c #2}};}}

    \def\UP{4pt}
    \def\DOWN{-4pt}

    \node[anchor=south,yshift=\UP]   at (axis cs:1,5)  {\Pico{0}{1}};
    \node[anchor=south,yshift=\UP]   at (axis cs:2,10) {\Pico{0}{2}};
    \node[anchor=north,yshift=\DOWN] at (axis cs:3,12) {\Pico{1}{2}};
    \node[anchor=north,yshift=\DOWN] at (axis cs:4,14) {\Pico{2}{2}};
    \node[anchor=north,yshift=\DOWN] at (axis cs:5,16) {\Pico{3}{2}};
    \node[anchor=north,yshift=\DOWN] at (axis cs:6,18) {\Pico{4}{2}};
    \node[anchor=north,yshift=\DOWN] at (axis cs:7,20) {\Pico{5}{2}};

    \node[anchor=north,yshift=\DOWN] at (axis cs:1,4)  {\Pico{1}{0}};
    \node[anchor=north,yshift=\DOWN] at (axis cs:2,8)  {\Pico{2}{0}};
    \node[anchor=south,yshift=\UP]   at (axis cs:3,12) {\Pico{3}{0}};
    \node[anchor=south,yshift=\UP]   at (axis cs:4,16) {\Pico{4}{0}};
    \node[anchor=south,yshift=\UP]   at (axis cs:5,20) {\Pico{5}{0}};
    \node[anchor=south,yshift=\UP]   at (axis cs:6,20) {\Pico{5}{1}};
    \node[anchor=south,yshift=\UP]   at (axis cs:7,20) {\Pico{5}{2}};

  \end{axis}
\end{tikzpicture}
\caption{Illustration of the lower bound construction in \Cref{sec:genLower}. The dashed orange curve corresponds to an incremental solution that favors the sets~$C_j$, while the solid green curve corresponds to an incremental solution that favors the sets~$R_i$. No incremental solution can simultaneously be better than~$1.25$-competitive at~$k=2$ and~$k=5$. Other solutions than the ones depicted are dominated by one of the curves.\label{fig:LB}}
\end{figure}

\begin{lemma}
    For every ordering $E_1,\dots,E_7$ of the subsets in $\U$, there exists $k \in [7]$ with
    \[ 
        \frac{\OPT{k}}{f(E_1 \cup \dots \cup E_k)} \geq 1.25. 
    \]
\end{lemma}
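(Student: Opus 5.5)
We will show that no ordering can be better than $1.25$-competitive at both $k=2$ and $k=5$. The plan is first to compute the two relevant optima, then to split into two cases according to what the first two chosen sets are.

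\textbf{Optimum values.} We need $\OPT{2}$ and $\OPT{5}$.
\begin{itemize}
\item For $k=2$: two columns give $\OPT{2} = f(\{C_1,C_2\}) = 10$. Two rows give only $8$, and a row plus a column gives $4+5-1=8$.
\item For $k=5$: all five rows give $\OPT{5} = f(\{R_1,\dots,R_5\}) = 20$, which is the whole ground set $X$.
\end{itemize}

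\textbf{Case 1: $\{E_1,E_2\} \neq \{C_1,C_2\}$.} Then the prefix of length $2$ is either two rows or a row and a column. In both situations $f(E_1\cup E_2)=8$, so the ratio at $k=2$ is $10/8 = 1.25$.

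\textbf{Case 2: $\{E_1,E_2\}=\{C_1,C_2\}$.} Then $E_3,E_4,E_5$ are three of the rows. Each row $R_i$ meets $C_1\cup C_2$ in exactly $2$ elements and contributes $2$ new elements, disjoint from those of the other rows. Hence $f(E_1\cup\dots\cup E_5)=10+3\cdot 2 = 16$, and the ratio at $k=5$ is $20/16=1.25$.

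There is no real obstacle in this argument. The only care needed is to check the coverage counts, in particular that a mixed pair covers $8$ elements (the row and the column overlap in one element) and that in Case 2 the overlaps of the rows with $C_1\cup C_2$ are accounted for correctly.
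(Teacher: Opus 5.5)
Your proof is correct and follows essentially the same route as the paper: you split on whether $\{E_1,E_2\}=\{C_1,C_2\}$, and get ratio $10/8$ at $k=2$ when it does not hold and $20/16$ at $k=5$ when it does. The only difference is cosmetic: you compute the coverage counts directly for every case, whereas the paper first uses the symmetry among the rows and among the columns to fix their relative order.
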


\begin{proof}
    We first observe that, by symmetry between the $C_j$'s and $R_i$'s, we may restrict our analysis to orderings that include $C_1,C_2$ and $R_1,R_2,R_3,R_4,R_5$ in these orders, respectively.
    Suppose that $\{E_1, E_2\} \neq \{C_1, C_2\}$, i.e.~that $C_1$ and $C_2$ are not the first two elements in the ordering $E_1,\dots,E_7$. 
    In this case,
    $f(E_1 \cup E_2) \leq \max\{ f(R_1, C_1), f(R_1, R_2) \} = 8,$
    and thus
    \[ \frac{\OPT{2}}{f(E_1 \cup E_2)} \geq \frac{f(C_1 \cup C_2)}{f(E_1 \cup E_2)} = \frac{10}{8} = \frac{5}{4}. \]
    
    Now, suppose that $\{E_1, E_2\} = \{C_1, C_2\}$. Then,
    \[ f(E_1 \cup \dots \cup E_5) = f(C_1 \cup C_2 \cup R_1 \cup R_2 \cup R_3) = 16, \]
    yielding
    \[ 
        \frac{\OPT{5}}{f(E_1 \cup \dots \cup E_5)} \geq \frac{f(R_1 \cup \cdots \cup R_5)}{f(E_1 \cup \dots \cup E_5)} = \frac{20}{16} = \frac{5}{4}.\qedhere
    \]
\end{proof}

\subsection{A Lower Bound Against Our Algorithm}\label{sec:algLower}

We begin by providing an informal overview of our lower bound in \Cref{sec:informal lower}, followed by a formal description of the lower bound instance in \Cref{sec:formal lower}.

\subsubsection{Overview of the Lower Bound for \Alg}\label{sec:informal lower}

To demonstrate that the analysis of our algorithm is tight, we construct a hard instance for it, based on a geometric coverage function $f$ defined over a collection of high-dimensional rectangles $\U$, so that $f(S)$ measures the volumes of the union of the rectangles in $S \subseteq \U$. The core idea is to exploit the density-based choice of \Alg by presenting it with a sequence of item groups that appear attractive locally but are inefficient globally.

\subparagraph{The Instance $(\U, f)$.}
More formally, the instance uses a ground set $\mathcal{U}$ of $(q+1)$-dimensional rectangles, organized into groups $\mathcal{U}^{(1)}, \dots, \mathcal{U}^{(q)}$. The geometry of these rectangles is defined such that elements within a single group $\mathcal{U}^{(i)}$ are disjoint; they effectively partition the $i^{th}$ dimension into $s_i$ segments and cover a volume of $v_i$. However, elements from different groups overlap significantly. This specific geometric arrangement ensures that while a single group covers a volume efficiently on its own, adding elements from a later group $\mathcal{U}^{(j)}$ to an existing solution consisting of elements from $\mathcal{U}^{(1)} \cup \dots \cup \mathcal{U}^{(j-1)}$ provides diminishing marginal gains due to the heavy spatial overlap with the volume already covered by previous groups. 

Intuitively, this construction generalizes the 2-dimensional hard instance from \Cref{sec:genLower} (visualized in \Cref{fig:LB}). In the 2-dimensional case, the instance offers `columns' ($C_j$) which are dense for small cardinalities, but ultimately inefficient compared to the `rows' ($R_i$) that form the optimal solution for larger cardinalities. Analogously, our high-dimensional instance provides groups of elements aligned with the $i^{th}$ dimension (group $\mathcal{U}^{(i)}$) which are locally dense, but less efficient than the groups aligned with the $(i+1)^{th}$ dimension (group $\mathcal{U}^{(i+1)}$) that constitute the optimal solution for the corresponding larger cardinality constraint $s_{i+1}$. This effectively generalizes the geometric conflict depicted in \Cref{fig:LB} to $q$ dimensions. In \Cref{sec:formal lower}, we provide the precise definition of this lower bound instance.

\subparagraph{The Behavior of \Alg on $(\U, f)$.}
By picking the appropriate values for $s_i$ and $v_i$, we can ensure that \Alg behaves very predictably. Due to the constructed densities, the algorithm selects elements sequentially, exhausting the first group $\mathcal{U}^{(1)}$ before moving to $\mathcal{U}^{(2)}$, and so on (see \Cref{lem:alglowerbound}). Consequently, the algorithm accumulates value slowly relative to the number of items chosen. It reaches an objective value of $v_i$ only after expending a cumulative cardinality of $\sum_{j=1}^{i} s_j$.

\subparagraph{The Structure of the Optimal Solution.}
In contrast, the optimum solution with value~$v_i$ is extremely efficient. 
For a cardinality constraint of~$s_i$, the optimum simply selects the group $\mathcal{U}^{(i)}$, achieving the objective value $v_i$ using only $s_i$ elements (see \Cref{cl:keypoints}). This discrepancy forces the algorithm to lag behind the optimum in terms of cardinality overhead. 
By analyzing the ratio between the values of the optimal solution and our algorithm's solution at the cardinality $k=s_q$, we establish a lower bound of $1.3724$ (see \Cref{lem:lower:1}).
\Cref{fig:hard_instance_lower_bound} provides an illustration of the lower bound obtained by comparing the optimal solution with the solution maintained by our algorithm on the instance $(\U, f)$.

\begin{figure}[t]
    \centering
    \includegraphics[width=0.96\textwidth]{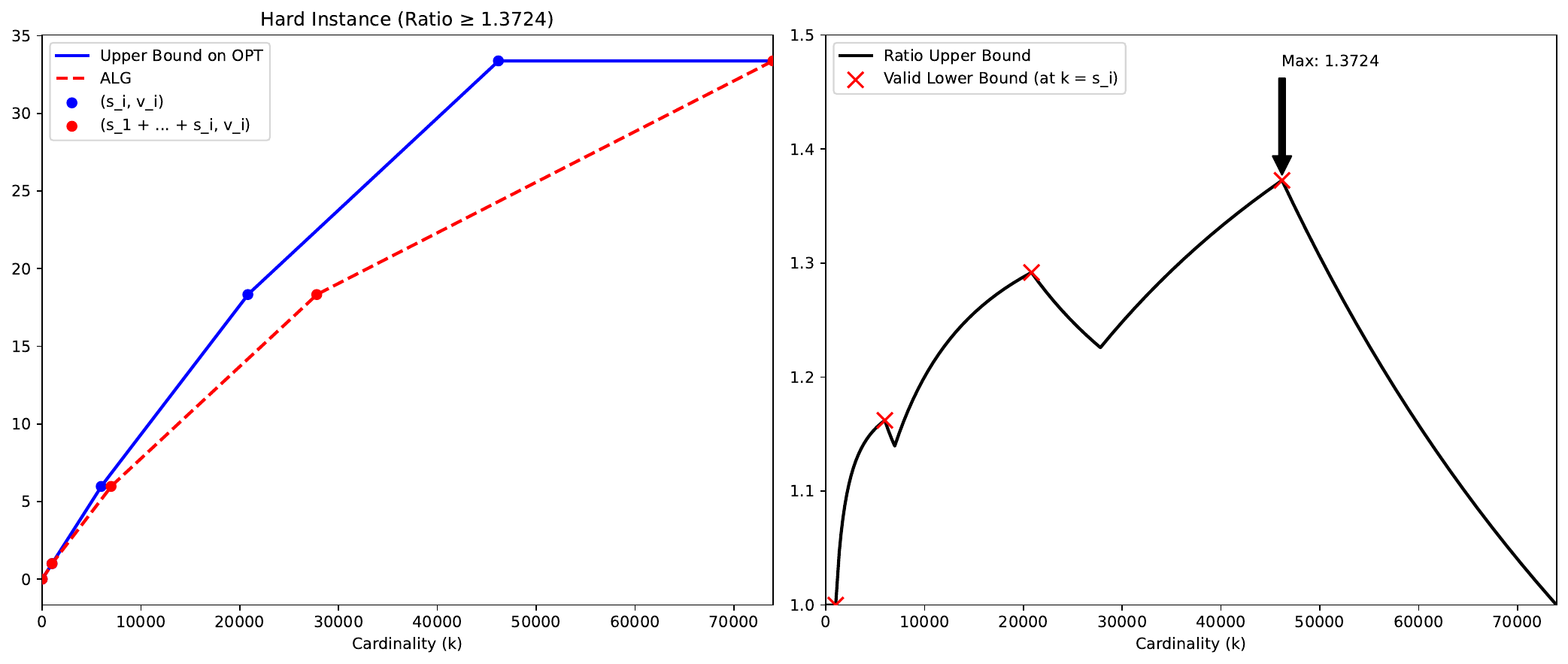}
    \caption{Visualization of the hard instance for our algorithm. 
    \textbf{Left:} The solid blue line depicts the concave envelope of the points $(s_i, v_i)$, which serves as an upper bound on $\OPT{k}$ for all $k$, and equals $\OPT{k}$ when $k = s_i$. 
    The dashed red line (\textsc{ALG}) shows the algorithm's performance in this instance; it reaches the value $v_i$ only after expending a cumulative cardinality of $\sum_{j=1}^i s_j$ (red dots). 
    \textbf{Right:} The ratio between the upper bound on $\OPT{k}$ and the algorithm's objective. The crosses mark the specific cardinalities $k=s_i$ where the upper bound coincides with~$\OPT{k}$, establishing the valid lower bound of $\geq 1.3724$.}
    \label{fig:hard_instance_lower_bound}
\end{figure}

\subsubsection{The Formal Lower Bound Instance}\label{sec:formal lower}

We begin by constructing a collection of subsets, which we will use in order to define a~submodular function. Let $v_1,\dots,v_q \in \R_{> 0}$ be a sequence of non-decreasing \emph{volumes} and $s_1,\dots,s_q \in \Z_{> 0}$ a sequence of non-decreasing \emph{quantities}. For each $i \in [q], \ell \in [s_i]$, we define the set $E^{(i)}_\ell \subseteq \R^{q + 1}$ as
\[ E^{(i)}_\ell \define [0, v_i) \times [0,1)^{i - 1} \times \left[\frac{\ell - 1}{s_i}, \frac{\ell}{s_i}\right) \times [0,1)^{q - i}. \]
We let $\U^{(i)} \define \{ E^{(i)}_\ell \, | \, \ell \in [s_i] \}$ and $\U \define \smash{\bigcup_{i \in [q]} \U^{(i)}}$ denote the collection of these $(q + 1)$-dimensional half-open rectangles. For notational convenience, we define $E^{(i)} \define \smash{\bigcup_{\ell \in [s_i]} E_\ell^{(i)}} = [0,v_i) \times [0,1)^q$, $v_0 \define 0$, $s_0 \define 0$, and let $n \define |{\U}| = \sum_{i=1}^{q} s_i$ denote the size of the collection~$\U$.

\begin{restatable}{lemma}{lemhighdimboxes}
\label{lem:highdimboxes}
The collection of subsets ${\U}$ satisfies the following properties:
\begin{enumerate}
    \item For all $1 \leq i \leq q$, $\ell, \ell' \in [s_i]$, we have $\vol \!\big(E_\ell^{(i)} \big) = v_i / s_i$ and $E_\ell^{(i)} \cap E_{\ell'}^{(i)} = \emptyset$ if $\ell \neq \ell'$.\label{HDlem:P1}
    \item For all $1 < i \leq q$, $C$ an arbitrary union of subsets in $\U^{(1)} \cup \dots \cup \U^{(i-1)}$, and $E \in \U^{(i)}$, we have that $\vol (E \cap C) = \vol(C) / s_i$.
\end{enumerate}
\end{restatable}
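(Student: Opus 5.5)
The plan is to exploit the product structure of the boxes directly. Every $E^{(i)}_\ell$ is a Cartesian product of intervals, so its volume is the product of the side lengths. Intersections of two such boxes are again products, taken coordinatewise.

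For Property~\ref{HDlem:P1}, the side lengths of $E^{(i)}_\ell$ are $v_i$ in coordinate~$1$, $1/s_i$ in coordinate $i+1$, and $1$ in every other coordinate. Hence $\vol(E^{(i)}_\ell)=v_i/s_i$. For $\ell\neq\ell'$, the half-open intervals $[\frac{\ell-1}{s_i},\frac{\ell}{s_i})$ and $[\frac{\ell'-1}{s_i},\frac{\ell'}{s_i})$ in coordinate $i+1$ are disjoint, so the two boxes are disjoint.

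For the second property, the key observation is that every box in $\U^{(1)}\cup\dots\cup\U^{(i-1)}$ has the full interval $[0,1)$ in coordinate $i+1$. This is because group $r<i$ only restricts coordinate $r+1\le i$. Consequently, any union $C$ of such boxes has the form $C=C'\times[0,1)\times[0,1)^{q-i}$, with $[0,1)$ in coordinate $i+1$. Here $C'\subseteq[0,v_{i-1})\times[0,1)^{i-1}$ is the projection onto the first $i$ coordinates. Its precise shape is irrelevant, except that $C'$ is contained in $[0,v_i)\times[0,1)^{i-1}$, using that the volumes are non-decreasing, $v_r\le v_i$. The same description works for the last $q-i$ coordinates, since earlier groups also leave those coordinates full.

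Now take $E=E^{(i)}_\ell=[0,v_i)\times[0,1)^{i-1}\times I_\ell\times[0,1)^{q-i}$, where $I_\ell$ is the $\ell$-th segment. Intersecting coordinatewise gives
\[
E\cap C=\bigl(C'\cap([0,v_i)\times[0,1)^{i-1})\bigr)\times I_\ell\times[0,1)^{q-i}=C'\times I_\ell\times[0,1)^{q-i}.
\]
By Fubini (product measure), $\vol(E\cap C)=\vol_i(C')\cdot\frac1{s_i}$. On the other hand, $\vol(C)=\vol_i(C')\cdot 1$, which yields $\vol(E\cap C)=\vol(C)/s_i$.

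The only step needing care is justifying the product decomposition of $C$: that a finite union of boxes, all full in coordinates $i+1,\dots,q+1$, factors as $C'\times[0,1)^{q-i+1}$. I would write this out as $\bigcup_r (B_r\times[0,1)^{q-i+1})=(\bigcup_r B_r)\times[0,1)^{q-i+1}$ and then apply the product-measure formula. The containment $C'\subseteq[0,v_i)\times[0,1)^{i-1}$ relies on monotonicity of the $v_r$. No other obstacle is expected.
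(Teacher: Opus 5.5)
Your proposal is correct and follows essentially the same route as the paper. Both compute the volume coordinatewise and read off disjointness from coordinate $i+1$. For the second property, both write $C = C' \times [0,1)^{q-i+1}$, so that $E \cap C = C' \times [\frac{\ell-1}{s_i},\frac{\ell}{s_i}) \times [0,1)^{q-i}$ and the factor $1/s_i$ falls out; you additionally state explicitly the containment $C' \subseteq [0,v_i)\times[0,1)^{i-1}$ (from the monotonicity of the $v_r$), which the paper uses only tacitly.
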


\begin{proof}    
	For each $i \in [q]$, $\ell \in [s_i]$, we can see that
	\[\vol \! \big(E_\ell^{(i)} \big) = v_i \cdot 1^{i - 1} \cdot \frac{1}{s_i} \cdot 1^{q - i} = \frac{v_i}{s_i}.\]
	Additionally, for any $\ell' \in [s_i]$ such that $\ell \neq \ell'$, we have that
	\[E^{(i)}_\ell \cap E^{(i)}_{\ell'} = [0, v_i) \times [0,1)^{i - 1} \times \emptyset \times [0,1)^{q - i} = \emptyset. \]
	
	\noindent
	Now, consider some $1 < i \leq q$, a subset $C$ obtained by taking an arbitrary union of subsets in $\U^{(1)} \cup \dots \cup \U^{(i-1)}$, and $E = E_\ell^{(i)} \in \U^{(i)}$. Let $C' \subseteq [0, v_{i-1}) \times [0,1)^{i-1}$ denote the subset such that $C = C' \times [0,1)^{q - i + 1}$. We can see that
	\[ E \cap C = E \cap (C' \times [0,1)^{q - i + 1}) = C' \times \left[\frac{\ell - 1}{s_i}, \frac{\ell}{s_i}\right) \times [0,1)^{q - i}. \]
	Thus, we have that
	\[ 
	\vol(E \cap C) = \vol(C') \cdot \frac{1}{s_i} = \vol(C) \cdot \frac{1}{s_i}.\qedhere
	\]
\end{proof}

\subparagraph{The Lower Bound Instance.}
Consider sequences of non-decreasing volumes $v_1,\dots,v_q \in \R_{> 0}$ and non-decreasing quantities $s_1,\dots,s_q \in \Z_{> 0}$, satisfying the following condition for each $1 \leq i < q$:
\begin{equation}\label{eq:lower bnd prop}
    \frac{v_{i+1} - v_{i-1}}{s_{i+1}} \leq \frac{v_i - v_{i-1}}{s_i}.
\end{equation}
Let $\U$ be the collection of subsets defined by these sequences,
and let $f$ be the geometric coverage function defined over the collection of subsets $\U$. In other words, for some $S \subseteq \U$, we have $f(S) = \vol \! \left(\bigcup_{E \in S} E \right)$.

\subparagraph{Our Lower Bound.}

The following lemma, which we prove in \Cref{sec:finallowerproof}, formalizes the intuition described in \Cref{sec:informal lower}.

\begin{lemma}\label{lem:lower:1}
    For the objective~$f$ defined above such that \Cref{eq:lower bnd prop} holds, \Cref{alg:main} produces a solution with competitive ratio at least
    \[ v_q \cdot \left(v_{i^\star} + \frac{r^\star}{s_{i^\star + 1}} \cdot (v_{i^\star + 1} - v_{i^\star}) \right)^{-1}, \]
    where $i^\star \in [q]$ is the unique value such that $s_1 + \dots + s_{i^\star} + r^\star = s_q$, for $r^\star \leq s_{i^\star + 1}$.
\end{lemma}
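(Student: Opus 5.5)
The proof has three parts: compute the algorithm's value, lower-bound the optimum at $k=s_q$, and compare the two. First, using \Cref{lem:highdimboxes}, we compute $f$ on unions of full groups plus part of the next group. Let $C = E^{(1)}\cup\dots\cup E^{(i)}$, which equals $E^{(i)}$ since the volumes are non-decreasing and every $E^{(i')}$ is a box $[0,v_{i'})\times[0,1)^q$. Then $\vol(C) = v_i$. Adding $r$ elements of $\U^{(i+1)}$ gives
\[
f = v_i + r\cdot\frac{v_{i+1}-v_i}{s_{i+1}}.
\]
This holds because each new element contributes $v_{i+1}/s_{i+1} - v_i/s_{i+1}$ by property~2, and the new elements are pairwise disjoint by property~1. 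The same computation shows that, for any set, the marginal gain of an element of $\U^{(i+1)}$ on top of a set containing $E^{(i)}$ does not depend on which element of the group is chosen.

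Second, we pin down $\OPT{k}$ at the key cardinalities and the choices of \Alg. We show $\OPT{s_i} = v_i$, which is the natural \Cref{cl:keypoints}. The lower bound comes from taking $\U^{(i)}$. For the upper bound, any union of boxes lies in $[0,v_{\max})\times[0,1)^q$, and a set of $s_i$ elements can be charged against the concave envelope of the points $(s_{i'},v_{i'})$. I expect to do this by induction: a set $S$ whose highest-indexed group is $i'$ has value at most $v_{i'-1} + |S\cap\U^{(i')}|\,(v_{i'}-v_{i'-1})/s_{i'}$ plus contributions that are bounded similarly. Condition \eqref{eq:lower bnd prop} is exactly what makes these slopes decrease, so the envelope is concave.

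Next we argue by induction on phases that $k_j = s_j$ is a valid maximizer in phase $j$. Given $\OPT{k_{j-1}} = v_{j-1}$, the density $(\OPT{k}-v_{j-1})/k$ is at most the envelope slope, and this bound is attained at $k=s_j$, using \eqref{eq:lower bnd prop}. Ties are resolved adversarially, as in \Cref{lem:alglowerbound}. At the start of phase $j$, the algorithm already holds $E^{(j-1)}$, so elements of $S^\star_{s_j}=\U^{(j)}$ have equal marginals and are all added; hence $p_j=s_j$. The choice of tie-breaking inside the $\arg\max$ steps is the delicate point. I would handle it by letting the adversary choose the optimal solutions $S^\star_k$ and perturbing the $v_i$ infinitesimally if needed.

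Finally, we compare at $k=s_q$. After $s_1+\dots+s_{i^\star}+r^\star = s_q$ elements, the algorithm has completed groups $1,\dots,i^\star$ and added $r^\star$ elements of group $i^\star+1$. By the first step its value is $v_{i^\star} + \frac{r^\star}{s_{i^\star+1}}(v_{i^\star+1}-v_{i^\star})$, while $\OPT{s_q}=v_q$. Dividing gives the claimed ratio. The main obstacle is the upper bound $\OPT{k}\le$ envelope, together with its consequence that the density maximizer is exactly $s_j$. This requires a careful inclusion-exclusion over mixed groups via property~2, applied group by group in increasing index order.
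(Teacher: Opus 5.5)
Apart from one step, your outline is the paper's proof. The shared steps are: the prefix value from \Cref{lem:highdimboxes}, $\OPT{s_i}=v_i$ (\Cref{cl:keypoints}), the density maximizer $k_j=s_j$ (\Cref{cl:bad choice}), the phase structure (\Cref{lem:alglowerbound}), and the comparison at $k=s_q$. On ties, the paper likewise just takes $k_j=s_j$ and $S^\star_{k_j}=\U^{(j)}$ under favorable tie-breaking. So no perturbation is needed, and perturbing the $v_i$ would change the instance the lemma is about.

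The step that differs is the bound $\OPT{k}\le\Phi(k)$. Here $\Phi$ is the piecewise-linear interpolation of $(0,0),(s_1,v_1),\dots,(s_q,v_q)$, kept constant at $v_q$ beyond $s_q$. The paper obtains this bound from the exchange argument \Cref{cl:pushdown} followed by the convexity-in-$x$ computation of \Cref{cl:upperOPT}. Your sketched invariant does not deliver it. The bound $v_{i'-1}+a(v_{i'}-v_{i'-1})/s_{i'}$, with $a=|S\cap\U^{(i')}|$, caps the lower part of $S$ by $v_{i'-1}$ however few elements it has. It can therefore be far above $\Phi(|S|)$: a single element of $\U^{(2)}$ gets $v_1+(v_2-v_1)/s_2$, versus $\Phi(1)=v_1/s_1$. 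Adding further ``contributions'' only loosens it. Charging each group-$i'$ element $(v_{i'}-v_{i'-1})/s_{i'}$ additively fails in the other direction. By \eqref{eq:optanalysis} its true marginal is $(v_{i'}-f(S_{\le i'-1}))/s_{i'}$, which is at least that amount.

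Your induction on $\chi(S)$ does work with the invariant $f(S)\le\Phi(|S|)$ for every $S\subseteq\U$. Take $a$ as above and $m=|S|-a$. Combine \eqref{eq:optanalysis}, the hypothesis $f(S_{\le i'-1})\le\Phi(m)$, monotonicity ($\Phi(m+s_{i'})\ge\Phi(s_{i'})=v_{i'}$) and concavity of $\Phi$:
\[
f(S)=\Bigl(1-\tfrac{a}{s_{i'}}\Bigr)f(S_{\le i'-1})+\tfrac{a}{s_{i'}}\,v_{i'}\le\Bigl(1-\tfrac{a}{s_{i'}}\Bigr)\Phi(m)+\tfrac{a}{s_{i'}}\,\Phi(m+s_{i'})\le\Phi(m+a).
\]
This replaces \Cref{cl:pushdown} and \Cref{cl:upperOPT} in a few lines, and it also gives the upper half of \Cref{cl:keypoints}. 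It needs only concavity of $\Phi$, which follows from \eqref{eq:lower bnd prop}. The paper's exchange argument, by contrast, already uses \eqref{eq:lower bnd prop} in full.

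Note also that \eqref{eq:lower bnd prop} is not ``exactly'' concavity but strictly stronger for $j\ge 2$. It is equivalent to $s_j(v_{j+1}-v_j)\le(s_{j+1}-s_j)(v_j-v_{j-1})$, i.e., the line through $(0,v_{j-1})$ and $(s_j,v_j)$ supports $\Phi$ at $s_j$. That is precisely what your density step needs to conclude $k_j=s_j$. In your route it is the only place the full condition is used.
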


By plugging in some specific values\footnote{Which we found using a computer.} for the $v_i$ and $s_i$ in \Cref{lem:lower:1} shown in \Cref{sec:finallowerproof}, we can prove \Cref{app:thm:L2}.

\begin{proof}[Proof of \Cref{app:thm:L2}]
	Consider the following sequences for $q = 4$:
	\[ (s_i)_{i = 1}^q = (10000, 59683, 208018, 461378), \]
	\[ (v_i)_{i = 1}^q = (1.0000, 5.9683, 18.3164, 33.3561). \]
	It is easy to verify that these sequences satisfy the condition from \Cref{eq:lower bnd prop}. We can also compute the values of $i^\star = 3$ and $r^\star = 183677$ from \Cref{lem:lower:1}, yielding a lower bound on the competitive ratio of at least $1.3724$.
	\Cref{fig:hard_instance_lower_bound} gives a plot with these values.
\end{proof}

\subsection{Proof of \Cref{lem:lower:1}}
\label{sec:finallowerproof}

We begin by deriving some useful properties of the instance, then proceed to analyze the structure of the optimal solution, followed by an analysis of the solution constructed by \Alg.

\subsubsection{Useful Properties of the Instance}

The following claims are immediate applications of \Cref{eq:lower bnd prop}.

\begin{restatable}{claim}{claimextrastructure}
	\label{cl:extra structre}
	For each $1 \leq i \leq j \leq q$, we have that
	\[\frac{v_{j} - v_{i-1}}{s_{j}} \leq \frac{v_i - v_{i-1}}{s_i}.\]
\end{restatable}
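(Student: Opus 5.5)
We prove the claim by induction on $j - i$, with $i$ fixed. For $j = i$ the inequality is an equality, so there is nothing to show. For $j = i+1$, where $1 \le i < q$, the claim is exactly \Cref{eq:lower bnd prop}.

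For the inductive step, assume the claim holds for the pair $(i,j)$ with $j < q$, and prove it for $(i, j+1)$. Write $a \define v_i - v_{i-1}$ and $b \define s_i$. The inductive hypothesis says that $v_j - v_{i-1} \le s_j \cdot a/b$. Applying \Cref{eq:lower bnd prop} at index $j$ gives
\[
  v_{j+1} - v_{j-1} \le \frac{s_{j+1}}{s_j}\,(v_j - v_{j-1}).
\]
The goal is to bound $v_{j+1} - v_{i-1}$.

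Split $v_{j+1} - v_{i-1}$ as $(v_{j+1} - v_j) + (v_j - v_{i-1})$. Since $v_{j+1} - v_j \le v_{j+1} - v_{j-1}$, the displayed inequality gives $v_{j+1} - v_j \le (s_{j+1}/s_j)(v_j - v_{j-1}) - (v_j - v_{j-1})$. It remains to bound the single-step density $(v_j - v_{j-1})/s_j$ by $a/b$.

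This is the main obstacle. The induction on $j$ alone does not directly control the increment $v_j - v_{j-1}$ relative to $a/b$. I would therefore first prove an auxiliary monotonicity fact, namely that the step densities are non-increasing:
\[
  \frac{v_{j}-v_{j-1}}{s_{j}} \le \frac{v_{j-1}-v_{j-2}}{s_{j-1}}.
\]
This follows from \Cref{eq:lower bnd prop} at index $j-1$ as follows. Write $v_j - v_{j-2} = (v_j - v_{j-1}) + (v_{j-1} - v_{j-2})$. Then
\[
  (v_j - v_{j-1}) \le \frac{s_j}{s_{j-1}}(v_{j-1} - v_{j-2}) - (v_{j-1} - v_{j-2}) \le \frac{s_j}{s_{j-1}}(v_{j-1} - v_{j-2}),
\]
where the last step uses $v_{j-1} \ge v_{j-2}$. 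Chaining this fact gives $(v_j - v_{j-1})/s_j \le a/b$ for all $j \ge i$.

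With monotonicity in hand, the simplest route is a direct telescoping argument instead of the induction above:
\[
  v_j - v_{i-1} = \sum_{t=i}^{j} (v_t - v_{t-1}).
\]
The sum is taken over $t=i,\dots,j$. Each term satisfies $v_t - v_{t-1} \le s_t \cdot a/b$.

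This naive bound yields $\sum_t s_t$ rather than $s_j$, so it is too weak. The sharper approach uses \Cref{eq:lower bnd prop} itself and not only the step densities. Assume inductively that $v_j - v_{i-1} \le s_j a/b$. Then
\[
  v_{j+1} - v_{i-1} = (v_{j+1} - v_{j-1}) + (v_{j-1} - v_{i-1}) \le \frac{s_{j+1}}{s_j}(v_j - v_{j-1}) + (v_{j-1} - v_{i-1}).
\]
Writing $v_j - v_{j-1} = (v_j - v_{i-1}) - (v_{j-1} - v_{i-1})$, the right-hand side becomes
\[
  \frac{s_{j+1}}{s_j}(v_j - v_{i-1}) - \Bigl(\frac{s_{j+1}}{s_j} - 1\Bigr)(v_{j-1} - v_{i-1}).
\]
The coefficient $\frac{s_{j+1}}{s_j}-1$ is nonnegative because $s$ is non-decreasing, and $v_{j-1} - v_{i-1} \ge 0$ whenever $j-1 \ge i-1$, because $v$ is non-decreasing. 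So the subtracted term can be dropped, and the inductive hypothesis gives $v_{j+1}-v_{i-1}\le s_{j+1}\,a/b$. This closes the induction. The case $j=i$ needs a separate check, which is \Cref{eq:lower bnd prop} at index $i$ directly.
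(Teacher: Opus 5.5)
Your final argument is correct and is essentially the paper's proof. The paper fixes $i$, sets $R_j = (v_j - v_{i-1})/s_j$, and shows $R_{j+1} \le R_j$ by splitting $v_{j+1}-v_{i-1}$ at $v_{j-1}$, applying \Cref{eq:lower bnd prop} at index $j$, and using $s_j \le s_{j+1}$ together with $v_{j-1} \ge v_{i-1}$; this is your last computation divided through by $s_{j+1}$. The earlier detours (step-density monotonicity, the telescoping bound) are unnecessary, and so is your separate check for $j=i$, since the inductive step already covers that case with $v_{j-1}-v_{i-1}=0$.
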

\begin{proof}
	Let $i \in [q]$ and, for each $i \leq j \leq q$, define
	\[ R_j \define \frac{v_{j} - v_{i-1}}{s_j}. \]
	We prove by induction that $R_i \geq R_{i + 1} \geq \dots \geq R_q$. Suppose we have that $R_i \geq \dots \geq R_j$, for some $i \leq j < q$. Then we have that
	\begin{align*} R_{j + 1} &= \frac{v_{j + 1} - v_{i-1}}{s_{j + 1}} = \frac{v_{j + 1} - v_{j-1}}{s_{j + 1}} + \frac{v_{j-1} - v_{i-1}}{s_{j + 1}}\\ 
		&\leq \frac{v_{j} - v_{j-1}}{s_{j}} + \frac{v_{j-1} - v_{i-1}}{s_{j}} = \frac{v_{j} - v_{i-1}}{s_{j}} = R_j, 
	\end{align*}
	where we are applying \Cref{eq:lower bnd prop} and the fact that $s_j \leq s_{j+1}$.
\end{proof}

\begin{restatable}{claim}{claimextrastructuretwo}
	\label{cl:extra structre 2}
	For each $1 \leq i < q$, we have that $v_{i}/s_{i} \geq v_{i+1}/s_{i+1}$.
\end{restatable}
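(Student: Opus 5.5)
This follows from \Cref{cl:extra structre} together with $v_0 = 0$, and it takes only a short chain of two inequalities. We need to show $v_{i+1}/s_{i+1} \le v_i/s_i$ for $1 \le i < q$.

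First, write $v_{i+1}/s_{i+1}$ as the sum
\[
  \frac{v_{i+1} - v_{i-1}}{s_{i+1}} + \frac{v_{i-1}}{s_{i+1}}.
\]
The first summand is at most $(v_i - v_{i-1})/s_i$. This is exactly \Cref{eq:lower bnd prop}, or equivalently \Cref{cl:extra structre} with $j = i+1$.

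For the second summand, the quantities are non-decreasing, so $s_{i+1} \ge s_i$. The volumes satisfy $v_{i-1} \ge 0$ (this includes the case $v_0 = 0$ when $i = 1$). Hence
\[
  \frac{v_{i-1}}{s_{i+1}} \le \frac{v_{i-1}}{s_i}.
\]

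Adding the two bounds gives
\[
  \frac{v_{i+1}}{s_{i+1}} \le \frac{v_i - v_{i-1}}{s_i} + \frac{v_{i-1}}{s_i} = \frac{v_i}{s_i},
\]
which is the claim.

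The only point requiring care is the boundary case $i = 1$. There \Cref{eq:lower bnd prop} reads $v_2/s_2 \le v_1/s_1$, using the conventions $v_0 = 0$ and $s_0 = 0$, and this is already the statement. I do not expect any real obstacle. This mirrors the induction step in the proof of \Cref{cl:extra structre}.
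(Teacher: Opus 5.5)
Your proof is correct and matches the paper's: the paper splits $v_i/s_i - v_{i+1}/s_{i+1}$ into the same two terms, bounding the first with \Cref{eq:lower bnd prop} and the second with $s_{i+1} \geq s_i$ and $v_{i-1} \geq 0$. The case $i=1$ needs no separate discussion, since with $v_0 = 0$ the general argument already covers it (and $s_0$ plays no role).
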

\begin{proof}
	We have that
	\[\frac{v_i}{s_i} - \frac{v_{i+1}}{s_{i+1}} = \left( \frac{v_i - v_{i-1}}{s_i} - \frac{v_{i+1} - v_{i-1}}{s_{i+1}} \right) + v_{i-1} \cdot\left( \frac{1}{s_i} - \frac{1}{s_{i+1}} \right) \geq 0,\]
	where the first term is greater than $0$ by \Cref{eq:lower bnd prop}, and the second term since $s_{i + 1} \geq s_i$.
\end{proof}

\subsubsection{Analysis of the Optimal Solution}\label{sec:lower:opt:anal}

Given a subset $S \subseteq \U$, we define $\chi(S) \define \max \big\{j \in [q] \, \mid \, S \cap \U^{(j)} \neq \emptyset\big\}$, i.e., $\chi(S)$ is the maximum index $j \in [q]$ such that $S$ contains a set from $\U^{(j)}$. 
Additionally, for each $j \in [q]$, we define
$S_{\leq j} \define S \cap ( {\U}^{(0)} \cup \dots \cup {\U}^{(j)} )$.

We begin by deriving the following useful identity for each $S \subseteq \U$.

\begin{restatable}{claim}{claimeqoptanalysis} 
	For each $j \in [q]$, we have that
	\begin{equation}\label{eq:optanalysis}
		f(S_{\leq j}) = \frac{|S \cap {\U}^{(j)}|}{s_{j}} \cdot \left(v_{j} - f(S_{\leq j - 1}) \right) + f(S_{\leq j - 1}).
	\end{equation}
\end{restatable}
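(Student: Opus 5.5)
We prove the identity by writing $f(S_{\leq j})$ as the volume of a disjoint union and computing the new part with \Cref{lem:highdimboxes}. Let $C \define \bigcup_{E \in S_{\leq j-1}} E$, so that $f(S_{\leq j-1}) = \vol(C)$. Let $D \define \bigcup_{E \in S \cap \U^{(j)}} E$, so that $\bigcup_{E\in S_{\leq j}} E = C \cup D$. Then
\[
f(S_{\leq j}) = \vol(C) + \vol(D) - \vol(D \cap C),
\]
and it suffices to evaluate $\vol(D)$ and $\vol(D\cap C)$.

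First, by Property~\ref{HDlem:P1} of \Cref{lem:highdimboxes}, the sets in $S\cap\U^{(j)}$ are pairwise disjoint and each has volume $v_j/s_j$. Hence $\vol(D) = \frac{|S\cap \U^{(j)}|}{s_j}\cdot v_j$.

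Next, $D\cap C = \bigcup_{E\in S\cap\U^{(j)}} (E\cap C)$, and this union is also disjoint, since the sets $E$ are. For $j>1$, the second property of \Cref{lem:highdimboxes} applies because $C$ is a union of sets from $\U^{(1)}\cup\dots\cup\U^{(j-1)}$. It gives $\vol(E\cap C)=\vol(C)/s_j$ for each such $E$, so $\vol(D\cap C) = \frac{|S\cap\U^{(j)}|}{s_j}\cdot \vol(C)$. Substituting both quantities yields
\[
f(S_{\leq j}) = \frac{|S\cap \U^{(j)}|}{s_j}\bigl(v_j - f(S_{\leq j-1})\bigr) + f(S_{\leq j-1}),
\]
which is the claim.

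The base case $j=1$ has to be handled separately. Here $S_{\leq 0}=\emptyset$ (the group $\U^{(0)}$ is empty), so $C=\emptyset$ and $f(S_{\leq 0})=0$. The identity then reduces to $\vol(D)=\frac{|S\cap\U^{(1)}|}{s_1}v_1$, which is exactly the computation above.

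The only subtle point is the empty-union case. If $S_{\leq j-1}=\emptyset$ for some $j>1$, then $C=\emptyset$, and $\vol(E\cap C)=0=\vol(C)/s_j$ still holds trivially, so the argument goes through unchanged. There is no real obstacle: the claim follows from inclusion–exclusion together with the two properties of \Cref{lem:highdimboxes}.
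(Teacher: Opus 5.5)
Your proof is correct and follows the same route as the paper: inclusion--exclusion on $C \cup D$, with $\vol(D)$ and $\vol(D \cap C)$ computed from the disjointness and overlap properties of \Cref{lem:highdimboxes}. Your explicit handling of $j=1$ and of an empty $C$, and your remark that the sets $E \cap C$ are pairwise disjoint, make precise points the paper leaves implicit.
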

\begin{proof}
	For each $j' \in [q]$, let $a_{j'} \define |S \cap {\U}^{(j')}|$.
	Let $C \define \bigcup_{E \in S_{\leq j -1}} E$ and $B = \bigcup_{E \in S \cap \U^{(j)}} E$ denote the union of the sets in $S_{\leq j - 1}$ and $S_{\leq j} \setminus S_{\leq j - 1}$, respectively. 
	Applying \Cref{lem:highdimboxes} and $\vol(C) = f(S_{\leq j -1})$ by definition, we have that
	\begin{align*} f(S_{\leq j}) = \vol(B \cup C) &= \vol(B) + \vol(C) - \vol(B \cap C) \\
		&= \frac{a_{j}}{s_{j}} \cdot \vol(E^{(j)}) + \vol(C) - \sum_{E \in S\cap\U^{(j)}} \frac{1}{s_{j}} \cdot \vol(E \cap C) \\
		&= \frac{a_{j}}{s_{j}} \cdot \vol(E^{(j)}) + f(S_{\leq j -1}) - \frac{a_{j}}{s_{j}} \cdot \vol(C) \\
		&= \frac{a_{j}}{s_{j}} \cdot \left(v_{j} - f(S_{\leq j - 1}) \right) + f(S_{\leq j - 1}). \qedhere 
	\end{align*}
\end{proof}

Using \Cref{eq:optanalysis}, we now prove the following key lemma.

\begin{restatable}{lemma}{claimpushdown}
	\label{cl:pushdown}
	Let $S \subseteq \U$ be a subset of size at most $s_i$. If $\chi(S) > i$, then there exists a~subset $S' \subseteq \U$ of size $|S|$ such that $f(S') \geq f(S)$ and $\chi(S') < \chi(S)$.
\end{restatable}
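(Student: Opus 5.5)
The plan is to move every element of $S$ that lies in the topmost group $\U^{(j)}$, where $j \define \chi(S) > i$, down into the group $\U^{(j-1)}$, and to check with \Cref{eq:optanalysis} that this does not decrease $f$.

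Write $a \define |S \cap \U^{(j)}| \geq 1$, $b \define |S \cap \U^{(j-1)}|$, $G \define f(S_{\leq j-2})$ and $F \define f(S_{\leq j-1})$.

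\textbf{Construction of $S'$.} We need $a$ unused elements of $\U^{(j-1)}$. They exist because $j-1 \geq i$ and the $s$'s are non-decreasing, so
\[
a + b \leq |S| \leq s_i \leq s_{j-1} = |\U^{(j-1)}|.
\]
Let $S'$ be obtained from $S$ by removing $S \cap \U^{(j)}$ and adding $a$ elements of $\U^{(j-1)} \setminus S$. Then $|S'| = |S|$ and $\chi(S') \leq j-1 < \chi(S)$. Also $S'_{\leq j-2} = S_{\leq j-2}$.

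\textbf{Comparing values.} Since $\chi(S) = j$, we have $f(S) = f(S_{\leq j})$. By \Cref{eq:optanalysis},
\[
f(S) = F + \tfrac{a}{s_j}(v_j - F), \qquad F = G + \tfrac{b}{s_{j-1}}(v_{j-1} - G).
\]
Since $\chi(S') \leq j-1$, another application of \Cref{eq:optanalysis} at level $j-1$ gives
\[
f(S') = G + \tfrac{a+b}{s_{j-1}}(v_{j-1} - G) = F + \tfrac{a}{s_{j-1}}(v_{j-1} - G).
\]
So it suffices to show
\[
\frac{v_{j-1} - G}{s_{j-1}} \geq \frac{v_j - F}{s_j}.
\]

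\textbf{The key inequality.} This is the only real step. First, $F \geq G$ by monotonicity of coverage, so it is enough to prove
\[
\frac{v_{j-1} - G}{s_{j-1}} \geq \frac{v_j - G}{s_j}.
\]
Second, every set in $\U^{(1)} \cup \dots \cup \U^{(j-2)}$ is contained in $[0, v_{j-2}) \times [0,1)^q$, so $0 \leq G \leq v_{j-2}$ (with $v_0 = 0$).

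Consider the difference $\frac{v_{j-1}-G}{s_{j-1}} - \frac{v_j-G}{s_j}$ as a function of $G$. It is linear with slope $\frac{1}{s_j} - \frac{1}{s_{j-1}} \leq 0$, so over $[0, v_{j-2}]$ it is minimized at $G = v_{j-2}$. At that point it is nonnegative by \Cref{eq:lower bnd prop} applied with index $j-1$, which is admissible since $1 \leq j-1 < q$. Hence $f(S') \geq f(S)$.

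The main care needed is this last step: we must lower-bound $v_j - F$ via $F \geq G$, and use the upper bound $G \leq v_{j-2}$ together with the monotonicity of the $s$'s, so that the condition \Cref{eq:lower bnd prop} applies exactly.
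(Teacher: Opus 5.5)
Your proof is correct and follows essentially the same route as the paper. You use the same replacement of $S \cap \U^{(j)}$ by $a$ fresh elements of $\U^{(j-1)}$, the same expression for $f(S')$ via \Cref{eq:optanalysis}, and the same chain of bounds using $F \geq G$, $G \leq v_{j-2}$, $s_{j-1} \leq s_j$, and \Cref{eq:lower bnd prop} at index $j-1$. Your explicit check that enough unused elements exist ($a+b \leq |S| \leq s_i \leq s_{j-1}$) fills in a step the paper only asserts.
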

\begin{proof}        
	For each $j' \in [q]$, let $a_{j'} \define |S \cap {\U}^{(j')}|$.
	Now, let $j \define \chi(S)$. Let $S'$ be the set obtained by adding any $a_j$ elements from $\U^{(j-1)} \setminus S_{\leq j -1}$ to $S_{\leq j -1}$. Note that $|\U^{(j-1)} \setminus S_{\leq j -1}| \geq a_j$, so this is always possible. Clearly, $\chi(S') = j - 1 < j = \chi(S)$. It remains to show that $f(S') \geq f(S)$.
	
	\subparagraph{Proof that $f(S') \geq f(S)$.}
	Similarly to \Cref{eq:optanalysis}, by applying \Cref{lem:highdimboxes}, we can obtain the following identity:
	\begin{equation}
		f(S') = \frac{a_{j}}{s_{j - 1}} \cdot \left(v_{j - 1} - f(S_{\leq j - 2}) \right) + f(S_{\leq j - 1}).
	\end{equation}
	Now we can lower bound $f(S') - f(S)$ as follows:
	\begin{align*}
		f(S') - f(S) &= f(S') - f(S_{\leq j})\\
		&=\frac{a_{j}}{s_{j - 1}} \cdot \left(v_{j - 1} - f(S_{\leq j - 2}) \right) + f(S_{\leq j - 1}) - \frac{a_{j}}{s_{j}} \cdot \left(v_{j} - f(S_{\leq j - 1}) \right) - f(S_{\leq j - 1})\\
		&=\frac{a_{j}}{s_{j - 1}} \cdot \left(v_{j - 1} - f(S_{\leq j - 2}) \right) - \frac{a_{j}}{s_{j}} \cdot \left(v_{j} - f(S_{\leq j - 1}) \right)\\
		&\geq\frac{a_{j}}{s_{j - 1}} \cdot \left(v_{j - 1} - f(S_{\leq j - 2}) \right) - \frac{a_{j}}{s_{j}} \cdot \left(v_{j} - f(S_{\leq j - 2}) \right)\\
		&= a_j \cdot \left( \frac{v_{j-1}}{s_{j-1}} - \frac{v_{j}}{s_{j}} + \left(\frac{1}{s_j} - \frac{1}{s_{j-1}} \right) \cdot f(S_{\leq j - 2}) \right)\\
		&\geq a_j \cdot \left( \frac{v_{j-1}}{s_{j-1}} - \frac{v_{j}}{s_{j}} + \left(\frac{1}{s_j} - \frac{1}{s_{j-1}} \right) \cdot v_{j-2} \right)\\
		&= a_j \cdot \left( \frac{v_{j-1} - v_{j-2}}{s_{j-1}} - \frac{v_{j} - v_{j-2}}{s_{j}} \right) \geq 0.
	\end{align*}
	The first inequality follows from $f(S_{\leq j -2}) \leq f(S_{\leq j - 1})$ by monotonicity. The second inequality follows from $f(S_{\leq j -2}) \leq f(\U^{(j-2)}) = v_{j-2}$ and the observation that $(1/s_j - 1/s_{j-1}) \leq 0$ since $s_j \geq s_{j-1}$. The third inequality follows from \Cref{eq:lower bnd prop}. This concludes the proof.
\end{proof}

The following lemma shows that $\U^{(i)}$ is an optimal solution for $k = s_i$.

\begin{restatable}{lemma}{lemmakeypoints}
	\label{cl:keypoints}
	For each $i \in [q]$, we have that $\OPT{s_i} = v_i$.
\end{restatable}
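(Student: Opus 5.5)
The lower bound $\OPT{s_i} \geq v_i$ is immediate. The set $\U^{(i)}$ has exactly $s_i$ elements. Its union is $E^{(i)} = [0,v_i)\times[0,1)^q$, so $f(\U^{(i)}) = v_i$; equivalently, by \Cref{lem:highdimboxes} the $s_i$ boxes are disjoint, each of volume $v_i/s_i$. The work is therefore in the upper bound $f(S) \leq v_i$ for every $S \subseteq \U$ with $|S| \leq s_i$.

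\textbf{Reduction via \Cref{cl:pushdown}.} Start from an optimal $S$ with $|S| \leq s_i$. As long as $\chi(S) > i$, apply \Cref{cl:pushdown} to replace $S$ by a set of the same size, no smaller value, and strictly smaller $\chi$. Since $\chi$ strictly decreases, after finitely many steps we obtain a set, still called $S$, with $|S| \leq s_i$, $\chi(S) \leq i$, and value at least the original. So it suffices to show $f(S) \leq v_i$ whenever $S \subseteq \U^{(1)} \cup \dots \cup \U^{(i)}$.

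\textbf{Bound for $\chi(S) \leq i$.} Every element $E^{(j)}_\ell$ with $j \leq i$ is contained in $E^{(j)} = [0,v_j)\times[0,1)^q \subseteq [0,v_i)\times[0,1)^q$, because the volumes are non-decreasing. Hence $\bigcup_{E\in S} E \subseteq E^{(i)}$ and $f(S) \leq v_i$, with no use of the cardinality. If $S = \emptyset$ the bound is trivial.

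\textbf{Main obstacle.} The only delicate point is the applicability of \Cref{cl:pushdown}. It requires $|S| \leq s_i$ and $\chi(S) > i$; both hold throughout the iteration, since the size is preserved and we only apply it while $\chi > i$. Everything else is routine. An alternative route would be to unroll \cref{eq:optanalysis} directly and use \Cref{cl:extra structre}, but the pushdown argument avoids that computation entirely.
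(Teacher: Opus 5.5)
Your proof is correct and follows the paper's argument essentially verbatim: the lower bound comes from $f(\U^{(i)}) = v_i$ with $|\U^{(i)}| = s_i$, and the upper bound comes from iterating \Cref{cl:pushdown} until $\chi \leq i$ and then using $\bigcup_{j \leq i} E^{(j)} = E^{(i)}$, which holds because the volumes $v_j$ are non-decreasing. Working with $|S| \leq s_i$ instead of $|S| = s_i$, and treating $S = \emptyset$ separately, are harmless variants, since \Cref{cl:pushdown} is already stated for sets of size at most $s_i$.
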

\begin{proof}
	We have that $v_i = \vol(E^{(i)}) = f(\U^{(i)}) \leq \OPT{s_i}$, since~$|\U^{(i)}| = s_i$. Thus, it suffices to show that $\OPT{s_i} \leq v_i$. 
	
	Given any subset $S \subseteq \U$ of size $s_i$, we can repeatedly apply \Cref{cl:pushdown} to the set $S$ to obtain a subset $S' \subseteq \U$ of size $s_i$ such that $f(S') \geq f(S)$ and $\chi(S') \leq i$. 
	Since we have that $\bigcup_{E \in S'} E \subseteq \bigcup_{j=1}^i E^{(j)}= E^{(i)}$,
	it follows that
	\[f(S) \leq f(S') \leq f({\U}^{(i)}) = v_i.\qedhere\]
\end{proof}

\begin{restatable}{lemma}{lemmaupperOPT}
	\label{cl:upperOPT}
	For each $k \in [n]$ such that $s_{i-1} \leq k \leq s_{i}$, we have that
	\[ \OPT{k} \leq v_{i-1} + \frac{k - s_{i-1}}{s_{i} - s_{i-1}} \cdot (v_{i} - v_{i-1}). \]
\end{restatable}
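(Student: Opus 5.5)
The plan is to induct on $i$ and reduce an optimal set to the \emph{two-layer} structure that \Cref{cl:pushdown} allows. For $i=1$ we read $s_0=v_0=0$. Then \Cref{cl:pushdown}, applied repeatedly, lets us assume $\chi(S)\le 1$, so $S\subseteq\U^{(1)}$. By \Cref{lem:highdimboxes} this gives $f(S)=|S|\,v_1/s_1$, which is exactly the claimed bound.

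For general $i$, fix $k$ with $s_{i-1}\le k\le s_i$ and let $S$ be an optimal set of size $k$. Since $k\le s_i$, applying \Cref{cl:pushdown} repeatedly yields a set of the same size and no smaller value with $\chi(S)\le i$, so we may assume $\chi(S)\le i$. Write $a\define|S\cap\U^{(i)}|$ and $m\define k-a=|S_{\le i-1}|$. By \cref{eq:optanalysis},
\[
f(S)=\frac{a}{s_i}\,v_i+\Bigl(1-\frac{a}{s_i}\Bigr)f(S_{\le i-1}).
\]
The coefficient $1-a/s_i$ is nonnegative, so it suffices to upper bound $f(S_{\le i-1})$. Two bounds are available. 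First, $f(S_{\le i-1})\le v_{i-1}$, because the union lies in $E^{(i-1)}$. Second, $f(S_{\le i-1})\le\OPT{m}$, and when $m\le s_{i-1}$ the induction hypothesis bounds this by the piecewise-linear envelope $g$ through the points $(s_j,v_j)$, $j<i$. The bound on $f(S)$ then becomes a function $F(m)$ of the single parameter $m\in[0,k]$, and the task is to show $F(m)\le v_{i-1}+\frac{k-s_{i-1}}{s_i-s_{i-1}}(v_i-v_{i-1})$.

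I would split into two cases. If $m\ge s_{i-1}$, use the bound $v_{i-1}$. This gives $F(m)\le v_{i-1}+\frac{k-m}{s_i}(v_i-v_{i-1})\le v_{i-1}+\frac{k-s_{i-1}}{s_i}(v_i-v_{i-1})$. The last expression is at most the claimed interpolation, since $s_i\ge s_i-s_{i-1}$ and $v_i\ge v_{i-1}$. If $m<s_{i-1}$, I would show that $F(m)=\frac{k-m}{s_i}v_i+\bigl(1-\frac{k-m}{s_i}\bigr)g(m)$ is nondecreasing in $m$ on each linear piece $[s_{j-1},s_j]$ of $g$. That reduces this case to the point $m=s_{i-1}$, which the first case already handles.

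The main obstacle is this monotonicity in $m$. On a piece of $g$ with slope $\sigma_j=(v_j-v_{j-1})/(s_j-s_{j-1})$, the derivative of $F$ is $\sigma_j(1-a/s_i)-(v_i-g(m))/s_i$. I would bound it from below using \Cref{cl:extra structre} and \Cref{cl:extra structre 2}. These say that the densities $(v_j-v_{j'-1})/s_j$ decrease and that $v_j/s_j\ge v_i/s_i$, so the earlier pieces are at least as steep as the averaged slope toward $(s_i,v_i)$. If this direct derivative estimate turns out to be too weak near $m=0$, the fallback is to compare endpoints directly. One would evaluate $F$ only at the breakpoints $m=s_j$ (where $g=v_j=\OPT{s_j}$ by \Cref{cl:keypoints}) and at $m=0$. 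Since $F$ is linear in $m$ whenever $g$ is, the maximum over each piece is attained at an endpoint, so it suffices to verify $F(s_j)\le F(s_{i-1})$ for $j<i-1$ and $F(0)\le F(s_{i-1})$. These are finitely many inequalities, and each should follow from \cref{eq:lower bnd prop} and \Cref{cl:extra structre}.
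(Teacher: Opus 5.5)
Your setup, the base case, and the case $m \ge s_{i-1}$ are correct. The last one is actually a cleaner replacement for the paper's exchange argument that $x \ge 0$. The gap is in the case $m<s_{i-1}$: reducing it to the point $m=s_{i-1}$ is false, and so is the fallback inequality $F(0)\le F(s_{i-1})$. Since $g(0)=0$, you have $F(0)=\frac{k}{s_i}v_i$, whereas $F(s_{i-1})=v_{i-1}+\frac{k-s_{i-1}}{s_i}(v_i-v_{i-1})$. At $k=s_i$ this gives $F(0)-F(s_{i-1})=\frac{s_{i-1}}{s_i}(v_i-v_{i-1})>0$ whenever $v_i>v_{i-1}$. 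No sharper estimate can rescue this. For $k=s_i$ the optimum is $\U^{(i)}$ itself (\Cref{cl:keypoints}), i.e.\ $m=0$, so the maximum over $m$ is genuinely not at $s_{i-1}$. Your own derivative formula confirms it: it equals $-v_i/s_i<0$ at $m=0$, $k=s_i$. The comparisons $F(s_j)\le F(s_{i-1})$ can fail for the same reason. A smaller slip: $F$ is not linear on a piece of $g$, because $(1-\frac{k-m}{s_i})\,g(m)$ is a product of two affine functions of $m$. Its leading coefficient is $\sigma_j/s_i\ge 0$, so $F$ is convex on each piece; the endpoint principle still holds, but for that reason.

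The repair is to compare the breakpoint values with the target $L(k)=v_{i-1}+\frac{k-s_{i-1}}{s_i-s_{i-1}}(v_i-v_{i-1})$, not with $F(s_{i-1})$. Take $0\le j\le i-1$ with $s_0=v_0=0$. Both $F(s_j)=v_j+\frac{k-s_j}{s_i}(v_i-v_j)$ and $L(k)$ are affine in $k$, so it suffices to check the two endpoints:
\begin{itemize}
\item At $k=s_i$, you get $F(s_j)\le v_i=L(s_i)$ trivially.
\item At $k=s_{i-1}$, you need $\frac{s_{i-1}-s_j}{s_i}(v_i-v_j)\le v_{i-1}-v_j$. This follows from $\frac{v_i-v_j}{s_i}\le\frac{v_{i-1}-v_j}{s_{i-1}}$, the monotone chain established in the proof of \Cref{cl:extra structre}; for $j=0$ it is \Cref{cl:extra structre 2}.
\end{itemize}

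The paper avoids the induction altogether. It bounds $f(S_{\le i-1})$ by the cruder two-layer line $v_{i-2}+\frac{m}{s_{i-1}}(v_{i-1}-v_{i-2})$, so the whole bound becomes a single convex quadratic in $x=s_{i-1}-m$. Only two endpoints then need checking: $m=s_{i-1}$, which is your first case, and $m=0$. At $m=0$ the bound is $\frac{k}{s_i}v_i+(1-\frac{k}{s_i})v_{i-2}$, which is compared with $L(k)$ at $k=s_{i-1}$ and $k=s_i$ via \eqref{eq:lower bnd prop}. That comparison against $L(k)$ at the $m=0$ end is exactly what your plan is missing.
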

\begin{proof}
	Let $k \in [n]$ such that $s_{i-1} \leq k \leq s_{i}$, and define
	\[ L(k) \define v_{i-1} + \frac{k - s_{i-1}}{s_{i} - s_{i-1}} \cdot (v_{i} - v_{i-1}). \]
	Let $S^\star \subseteq \U$ be a subset of size $k$ such that $f(S^\star) = \OPT{k}$. By repeatedly applying \Cref{cl:pushdown}, we can assume w.l.o.g.~that $\chi(S^\star) \leq i$. Applying \Cref{eq:optanalysis} and letting $a_i \define |S \cap \U^{(i)}|$, we have that:
	\[ f(S^\star) = \frac{a_i}{s_i} \cdot v_i + \left(1 - \frac{a_i}{s_i} \right) \cdot f(S^\star_{\leq i - 1}),\]
	Now, let $x \define s_{i-1} - |S^\star_{\leq i-1}|$. Since $S^\star$ is an optimal solution of size $k$, we must have $x \geq 0$, otherwise we could find a solution of size $k$ with a greater objective by replacing the elements in $S^\star_{\leq i-1}$ with $\U^{(i-1)}$ and adding $-x$ new elements to the solution from, say, $\U^{(q)}$. Since $k = (s_{i-1} - x) + a_i$, it follows that
	\[f(S^\star) = \frac{(k - s_{i-1}) + x}{s_i} \cdot v_i + \left(1 - \frac{(k - s_{i-1}) + x}{s_i} \right) \cdot f(S^\star_{\leq i - 1}). \]
	Finally, noting that at least $x$ element from $\U^{(i-1)}$ are not contained in $S^\star_{\leq i-1}$, we get that
	\[ f(S^\star_{\leq i-1}) \leq v_{i-1} - \frac{x}{s_{i-1}} \cdot (v_{i-1} - v_{i-2}). \]
	Combining these equations, we get the following upper bound on $f(S^\star)$:
	\[ f(S^\star) \leq Q \define \frac{(k - s_{i-1}) + x}{s_i} \cdot v_i + \left(1 - \frac{(k - s_{i-1}) + x}{s_i} \right) \cdot \left( v_{i-1} - \frac{x}{s_{i-1}} \cdot (v_{i-1} - v_{i-2}) \right). \]
	Treating this upper bound $Q$ as a function of $x$ and fixing all other variables, we can see that $Q(x)$ is a quadratic function in $x$, where the coefficient of $x^2$ is positive. It follows that $Q(x)$ is a convex function on the closed interval $[0, s_{i-1}]$, and thus achieves its maximum value at one of the endpoints of the interval:
	\[ \max_{0 \leq x \leq s_{i-1}} Q(x) = \max \{Q(0), Q(s_{i-1}) \}. \]
	We now show that $Q(0) \leq L(k)$ and $Q(s_{i-1}) \leq L(k)$, which concludes the proof.
	
	\subparagraph{Proof that $Q(0) \leq L(k)$.}
	\[ Q(0) = \frac{k - s_{i-1}}{s_i} \cdot v_i + \left(1 - \frac{k - s_{i-1}}{s_i} \right) \cdot v_{i-1} = v_{i-1} + \frac{k - s_{i-1}}{s_{i}} \cdot (v_{i} - v_{i-1}) \leq L(k), \]
	where the inequality follows from $s_i - s_{i-1} \leq s_i$.
	
	\subparagraph{Proof that $Q(s_{i-1}) \leq L(k)$.} Consider the line $L'$, as a function of $k$, defined by $Q(s_{i-1})$: 
	\[ L'(k) \define Q(s_{i-1}) = \frac{k}{s_i} \cdot v_i + \left(1 - \frac{k}{s_i} \right) \cdot v_{i-2}. \]
	To show that $Q(s_{i-1}) \leq L(k)$, it suffices to show that $L'(k) \leq L(k)$ for all $s_{i-1} \leq k \leq s_i$, which is true if $L'(s_{i-1}) \leq L(s_{i-1})$ and $L'(s_{i}) \leq L(s_{i})$, i.e.~if $L$ is greater than $L'$ at the endpoints of the interval $[s_{i-1}, s_i]$. Clearly, $L'(s_i) = v_i = L(s_i)$, and
	\[L'(s_{i-1}) = \frac{s_{i-1}}{s_i} \cdot v_i + \left(1 - \frac{s_{i-1}}{s_i} \right) \cdot v_{i-2} = v_{i-2} + s_{i-1} \cdot \left(\frac{v_i - v_{i-2}}{s_i}\right) \]\[ \leq v_{i-2} + s_{i-1} \cdot \left(\frac{v_{i-1} - v_{i-2}}{s_{i-1}}\right)  = v_{i-1} = L(s_{i-1}),  \]
	where the inequality follows from \Cref{eq:lower bnd prop}.  
\end{proof}

\begin{restatable}{claim}{claimbadchoice}
	\label{cl:bad choice}
	For each $i \in [q]$, we have that
	\[ \max_{k \in [n]} \left( \frac{\OPT{k} - v_{i-1}}{k} \right) = \frac{v_i - v_{i-1}}{s_i}. \]
\end{restatable}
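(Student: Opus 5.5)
We show the two inequalities separately. The lower bound is immediate: taking $k = s_i$ and using \Cref{cl:keypoints}, which gives $\OPT{s_i} = v_i$, the ratio $(\OPT{k} - v_{i-1})/k$ attains the value $(v_i - v_{i-1})/s_i$.

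For the upper bound we fix $k \in [n]$ and show
\[
\OPT{k} - v_{i-1} \le k \cdot \frac{v_i - v_{i-1}}{s_i}.
\]
We split into cases according to where $k$ falls among the quantities. Set $s_0 = 0$.

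\textbf{Case $k \le s_{i-1}$.} Monotonicity of $\OPT{\cdot}$ and \Cref{cl:keypoints} give $\OPT{k} \le \OPT{s_{i-1}} = v_{i-1}$. So the left side is at most $0$, and the inequality holds trivially.

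\textbf{Case $s_{j-1} \le k \le s_j$ for some $j \ge i$.} Here \Cref{cl:upperOPT} bounds $\OPT{k}$ by the chord between $(s_{j-1}, v_{j-1})$ and $(s_j, v_j)$. That bound is linear in $k$, and so is the target $v_{i-1} + k(v_i - v_{i-1})/s_i$. A linear function is dominated by another on an interval as soon as this holds at both endpoints. It therefore suffices to show
\[
v_m \le v_{i-1} + s_m \cdot \frac{v_i - v_{i-1}}{s_i} \qquad \text{for } m \in \{j-1, j\}.
\]
For $m \ge i$, this is exactly \Cref{cl:extra structre} applied to the pair $i \le m$, which gives $(v_m - v_{i-1})/s_m \le (v_i - v_{i-1})/s_i$. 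The only other endpoint that can arise is $m = i-1$, namely when $j = i$ and $m = j-1$. There the inequality reads $v_{i-1} \le v_{i-1} + s_{i-1}(\cdot)$, which holds because $v_i \ge v_{i-1}$.

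\textbf{Case $k > s_q$.} Such $k$ exist, since $n = \sum_i s_i$. Every set satisfies $f \le \vol\bigl(E^{(q)}\bigr) = v_q$, because all rectangles lie inside $E^{(q)} = [0, v_q) \times [0,1)^q$, as the $v_i$ are non-decreasing. Hence $\OPT{k} \le v_q$. By \Cref{cl:extra structre} with $m = q$,
\[
v_q \le v_{i-1} + s_q \cdot \frac{v_i - v_{i-1}}{s_i} < v_{i-1} + k \cdot \frac{v_i - v_{i-1}}{s_i}.
\]

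The main obstacle is bookkeeping rather than depth. Two points need care. First, the case split must handle the boundary index $i-1$ and the use of \Cref{cl:upperOPT} when $j = 1$, where $v_0 = s_0 = 0$. Second, we must check that \Cref{cl:upperOPT} applies for every $k$ in $[s_{j-1}, s_j]$, including when $s_{j-1} = s_j$. In that degenerate situation we instead use \Cref{cl:keypoints} directly.
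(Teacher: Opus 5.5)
Your proof is correct and is essentially the paper's argument. Both rest on \Cref{cl:keypoints} for attainment at $k=s_i$, and on \Cref{cl:upperOPT} to reduce each interval $[s_{j-1},s_j]$ to its endpoints: you do this by linear domination of the chord by the line $v_{i-1}+k(v_i-v_{i-1})/s_i$, the paper by monotonicity of $A+C/k$. Both then use \Cref{cl:extra structre} at the breakpoints $s_m$ with $m\ge i$, and dispose of $k\le s_{i-1}$ and $k>s_q$ by trivial bounds. One cosmetic fix: in the case $k>s_q$, your final inequality should be $\le$ rather than $<$, since $v_i=v_{i-1}$ is allowed for $i\ge 2$.
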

\begin{proof}
	Let $R(k) \define (\OPT{k} - v_{i-1})/k$ denote the ratio we wish to maximize.
	We begin by showing that $R(k)$ attains its maximum value for some $k$ in $\{s_1, \dots, s_q\} \subseteq [n]$.
	
	\subparagraph{Reducing the Search Space.}
	We first note that $R(k)$ attains its maximum for some $k \leq s_q$ since $\OPT{s_q} \geq f(E^{(q)}) = \OPT{n}$. Furthermore, for $k \leq s_1$, we have that
	\[ R(k) = \frac{k \cdot (v_1 / s_1) - v_{i-1}}{k} = \frac{v_1}{s_1} - \frac{v_{i-1}}{k} \leq R(s_1), \]
	so $R(k)$ attains its maximum for some $k \geq s_1$.
	Now, consider some $k \in [s_1, s_q]$, and let $j > 1$ be an index such that $s_{j-1} \leq k \leq s_j$. By \Cref{cl:upperOPT}, we have that
	\[\OPT{k} \le L(k) \define v_{j-1} + \frac{k - s_{j-1}}{s_{j} - s_{j-1}}(v_{j} - v_{j-1}).\]
	Now, consider the following upper bound on $R(k)$:
	\[\tilde{R}(k) \define \frac{L(k) - v_{i-1}}{k}.\]
	Within the interval $[s_{j-1}, s_{j}]$, $L(k)$ is a line of the form $Ak + B$, so it follows that $\tilde{R}(k)$ can be expressed as $A + C/k$, which is a monotonic function of $k$. Since a monotonic function defined on a closed interval attains its maximum value at one of the endpoints, it follows that, for any $k$ in this interval,
	\[R(k) \le \tilde{R}(k) \le \max \left( \tilde{R}(s_{j-1}), \tilde{R}(s_{j}) \right)\]
	Since $\tilde{R}(s_{j-1}) = R(s_{j-1})$ and $\tilde{R}(s_{j}) = R(s_{j})$ by \Cref{cl:keypoints}, we have that
	\[\max_{k \in [s_{j - 1}, s_j]} R(k) = \max \{ R(s_{j-1}), R(s_{j}) \}.\]
	Consequently, the global maximum of $R(k)$ must occur for some $k \in \{s_1, \dots, s_q\}$.
	
	\subparagraph{Identifying the Maximum Value.} Now, consider some $j < i$. Since the $v_j$ are non-decreasing, we have that
	\[R(s_j) = (v_{j} - v_{i-1})/s_j \leq 0 \leq (v_{i} - v_{i-1})/s_i = R(s_i).\]
	For $j > i$, it follows from \Cref{cl:extra structre} that $R(s_j) \leq R(s_i)$. Thus, we have that
	\[ \max_{1 \leq j \leq q} R(s_j) = R(s_i). \qedhere \]
\end{proof}

\subsubsection{Analysis of \Alg}\label{sec:lower:alg:anal}

The following lemma summarizes the behavior of \Alg on this lower bound instance.

\begin{restatable}{lemma}{lemmaalglowerbound}
	\label{lem:alglowerbound}
	Given the input $(\U, f)$, \Alg runs for $q$ phases, where the solution maintained by the algorithm at the end of the $i^{th}$ phase is $\U^{(1)} \cup \dots \cup \U^{(i)} \subseteq \U$.
\end{restatable}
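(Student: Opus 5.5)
We prove the statement by induction on the phase index $i$. The inductive claim is that phase $i$ chooses $k_i = s_i$ with $\OPTk{i} = v_i$, and that at the end of phase $i$ we have $S^{(i)} = \U^{(1)} \cup \dots \cup \U^{(i)}$. The base case is the same argument with $v_0 = 0$ and $k_0 = 0 = s_0$.

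\textbf{Choice of $k_i$.} Suppose phases $1,\dots,i-1$ behave as claimed. Then $k_{i-1} = s_{i-1}$ and $\OPT{k_{i-1}} = v_{i-1}$ by \Cref{cl:keypoints}. The algorithm picks $k_i \in \arg\max_{k > k_{i-1}} (\OPT{k} - v_{i-1})/k$. By \Cref{cl:bad choice}, the unrestricted maximum over all $k \in [n]$ equals $(v_i - v_{i-1})/s_i$ and is attained at $k = s_i$. Since $s_i \geq s_{i-1}$, we need $s_i > s_{i-1}$ for $s_i$ to be feasible; a strict increase between consecutive quantities is implicit here and holds for the concrete instance. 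So $k_i = s_i$ is a valid choice.

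Ties need care, because the lemma is stated for the algorithm's particular choice. I would either assume ties are broken in favour of $s_i$ (as the instance with generic values makes the maximizer unique), or check from the proof of \Cref{cl:bad choice} that any other maximizer leads to the same conclusion. The cleanest route is to note the algorithm only needs \emph{some} maximizer, and to adopt the convention that the optimum solution it uses is $S^\star_{s_i} = \U^{(i)}$. This is legitimate because $f(\U^{(i)}) = v_i = \OPT{s_i}$.

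\textbf{Execution of phase $i$.} With target $S^\star_{k_i} = \U^{(i)}$, the inner loop adds exactly the elements of $\U^{(i)} \setminus S^{(i-1)}$. By the inductive hypothesis $S^{(i-1)} \subseteq \U^{(1)}\cup\dots\cup\U^{(i-1)}$, which is disjoint from $\U^{(i)}$ as a collection, so the loop adds all $s_i$ elements of $\U^{(i)}$. The greedy order inside the phase does not matter for the set reached at the end of the phase, so $S^{(i)} = \U^{(1)}\cup\dots\cup\U^{(i)}$.

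\textbf{Termination.} After phase $q$ we have $S = \U$, so the algorithm stops after exactly $q$ phases. Before that, $S \neq \U$, so each phase $i < q$ is indeed followed by another phase.

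\textbf{Main obstacle.} The delicate point is making the argmax selection well defined: ensuring $k_i = s_i$, rather than some other $k$ attaining the same ratio, and ensuring the stored optimum for cardinality $s_i$ is $\U^{(i)}$ rather than another optimal set. If a different optimal set were used, the phase structure could change. I would handle this by fixing the optimum oracle to return $\U^{(i)}$ for $k = s_i$, and by perturbing the $v_i$ slightly if necessary so that \eqref{eq:lower bnd prop} holds strictly and the maximizer is unique.
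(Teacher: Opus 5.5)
Your proposal is correct and follows the paper's proof essentially step for step. Both argue by induction on the phase, use \Cref{cl:keypoints} to get $\OPT{k_{i-1}} = v_{i-1}$, use \Cref{cl:bad choice} (with favorable tie-breaking, which the paper also assumes in a footnote) to get $k_i = s_i$, take $S^\star_{s_i} = \U^{(i)}$, and use disjointness to conclude that the phase adds exactly $\U^{(i)}$. Your extra remarks, that $s_i > s_{i-1}$ is needed for $s_i$ to lie in the restricted range $k > k_{i-1}$ and that the optimal set for cardinality $s_i$ must be fixed to $\U^{(i)}$, address points the paper glosses over, and you handle them appropriately.
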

\begin{proof}
	We prove this lemma by induction on the phase $i$.
	
	\subparagraph{Inductive Step.}
	Let $i \in [q]$ and consider the state of the algorithm at the start of phase~$i$. Assume that the length of each phase $j < i$, $k_j$, was exactly $s_j$, and that the subset of elements added to the solution $S$ during this phase was $\U^{(j)}$. Thus, the solution $S$ at the start of phase $i$ is $\U^{(1)} \cup \dots \cup \U^{(j)}$.\footnote{If $i = 0$, the solution $S$ is empty, and we define $k_0 = 0$.} Now, the algorithm chooses $k$ that maximizes the value of
	\begin{align*}
		\frac{\OPT{k} - \OPT{k_{i-1}}}{k} &= \frac{\OPT{k} - v_{i-1}}{k} 
		&& \textnormal{(by \Cref{cl:keypoints})}
	\end{align*}
	and sets this to be $k_i$.
	By \Cref{cl:bad choice}, the maximum value of this expression is $(v_i - v_{i-1})/s_i$, which is precisely the value of this expression when $k = s_i$. Thus, the algorithm sets $k_i=s_i$.\footnote{We assume that we break ties arbitrarily in our favor.} Since $\OPT{s_i} = v_i = f(\U^{(i)})$, we have that $S^\star_{k_i} = \U^{(i)}$. The phase continues until $S^\star_{k_i} \subseteq S$. Since $\U^{(i)}$ is disjoint from the solution $S$ at the start of the phase, the phase goes on for $s_i$ iterations, and terminates with $S = \U^{(1)} \cup \dots \cup \U^{(i)}$.
	
	\subparagraph{Conclusion.} At the end of phase $q$, we have that $S = \U^{(1)} \cup \dots \cup \U^{(q)} = \U$, and the algorithm terminates.
\end{proof}

\subsubsection{Concluding the Proof of \Cref{lem:lower:1}} 

We are now ready to prove \Cref{lem:lower:1} by upper bounding the objective of our algorithm and lower bounding the value of the optimal solution at some point during the execution of the algorithm.

\subparagraph{Upper Bounding Our Objective.}
Let $i^\star \in [q]$ be the unique value such that $s_1 + \dots + s_{i^\star} + r^\star = s_q$, for $r^\star \leq s_{i^\star + 1}$, and let $\tau = s_q$.
Consider the subset $S$ consisting of the first~$\tau$ elements in the incremental solution returned by \Alg when given $(f, \U)$ as input.
It follows from \Cref{lem:alglowerbound} that
\[ {\U}^{(1)} \cup \dots \cup {\U}^{(i^\star)} \subseteq S \subseteq {\U}^{(1)} \cup \dots \cup {\U}^{(i^\star + 1)}.\]
Thus, by \Cref{lem:highdimboxes}, we have that
\[ f(S) = \vol \! \left( E^{(i^\star)} \right) + \frac{r^\star}{s_{i^\star + 1}} \cdot \left(\vol \! \left( E^{(i^\star + 1)} \right) - \vol \! \left( E^{(i^\star)} \right)\right) = v_{i^\star} + \frac{r^\star}{s_{i^\star + 1}} \cdot \left(v_{i^\star + 1} - v_{i^\star}\right). \]

\subparagraph{Lower Bounding the Optimum.}
It follows from \Cref{cl:keypoints} that $\OPT{\tau} = \OPT{s_q} = v_q$.

\subparagraph{Lower Bounding the Competitive Ratio.} 
Putting these bounds together, it follows that the competitive ratio of our algorithm is at least
\[ \frac{\OPT{\tau}}{f(S)} = v_q \cdot \left(v_{i^\star} + \frac{r^\star}{s_{i^\star + 1}} \cdot (v_{i^\star + 1} - v_{i^\star}) \right)^{-1}.  \]

%% file: appendix.tex
\section{\texorpdfstring{Missing Proofs in Upper Bound Analysis (\Cref{sec:analysis})}{Missing Proofs in Upper Bound Analysis}}
\label{sec:missing_proofs}

\submodularitygivesatleastaverage*

\begin{proof}
    Let $m \define |B| - |A|$. Fix an arbitrary increasing sequence of $m+1$ sets $A = A_0 \subsetneq A_1 \subsetneq \cdots \subsetneq A_m = B$. Clearly, $f(B) - f(A) = \sum_{i=1}^{m} f(A_i) - f(A_{i-1})$.

    By the averaging argument, there exists $q \in [m]$ such that $f(A_q) - f(A_{q-1}) \geq (f(B) - f(A)) / m$. Let $e$ be the sole element of $A_q \setminus A_{q-1}$.
    Then, 
    \[
        f(A \cup \set{e}) - f(A) 
            \geq f(A_{q-1} \cup \set{e}) - f(A_{q-1})
            = f(A_q) - f(A_{q-1})
            \geq \frac{f(B) - f(A)}{m},
    \]
    where the first inequality follows by the submodularity of $f$.
\end{proof}


\subsection{\texorpdfstring{Properties of Function $h$}{Properties of Function h}}

\label{app:conditions_on_h}
In this section we check that the function 
\[
    h(q) \define \frac{156+1543 \cdot q-550 \cdot q^2-149 \cdot q^3}{1000 \cdot q}.
\]
fulfills all properties of \Cref{cla:properties_of_h}. We restate them here for convenience.

\propertiesofh*

\begin{proof}[Proof of \Cref{prop:h_decreasing} of \Cref{cla:properties_of_h}]
    The property $h(1) = 1$ can be easily verified. 
    The first derivative of $h$ is given by
    \begin{align*}
        h'(q)
        & = \frac{(1543-1100q-447q^2) \cdot q-(156+1543q-550q^2-149q^3)}{1000 \cdot q^2} \\
        & = -\frac{156+550q^2+298q^3}{1000 \cdot q^2}.
    \end{align*} 
    As $h'(q) < 0$ for $q\geq0$, the function $h(q)$ is decreasing for $q \geq 0$.
\end{proof}

\begin{figure}
    \centering
    \includegraphics[width=0.5\linewidth]{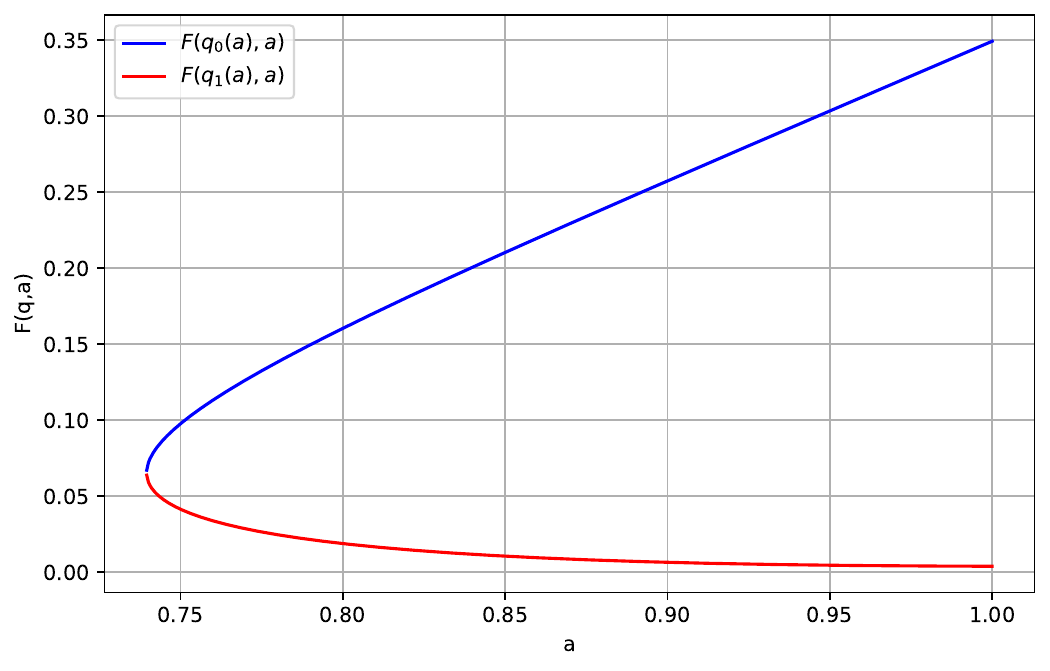}
    \caption{The value of $F(a,q)\define q(h(a)-1)+\frac{a}{1-a}(h(a)-1)-q h(q)$ at the local extreme points $q=q_0(a)$ and $q=q_1(a)$ in $(0,1)$ for fixed $a\in(a^*,1)$. We show $F(a,q) \geq 0$ for all $a\in (0,1)$ and all $q \in (0,1]$ to verify \Cref{prop:first_condition_on_h} of \Cref{cla:properties_of_h}.}
    \label{fig:mathematica_proof}
\end{figure}

\begin{proof}[Proof of \Cref{prop:first_condition_on_h} of \Cref{cla:properties_of_h}]
    Let $a=\frac{xq}{1+xq}$. Then, $x=\frac{a}{q(1-a)}$ for $a \in (0,1)$. Therefore, it suffices to show for $q \in (0,1]$ and $a \in (0,1)$ that
    \begin{align*}
        1+ \frac{qh(q)}{q+\frac{a}{1-a}} \leq h(a),
    \end{align*}
    which we rearrange to $F(a,q)\define q(h(a)-1)+\frac{a}{1-a}(h(a)-1)-q h(q) \geq 0$.
    
    For $q=1$, we have $(1-a)F(a,1)=(h(a)-1)-(1-a)=h(a)-2+a$. Further, we consider $a(1-a)F(a,1)=ah(a)-2a+a^2=(156-457a+450a^2-149a^3)/1000$ which has only one real root at $a_0=1$ and evaluates positive for $a=0$ which implies $F(a,1)>0$ for all~$a\in(0,1)$.
    Therefore, the property holds for $q=1$.

    For the other boundary case, where $q$ approaches $0$, we observe that $(1-a)F(a,q)$ approaches $\lim_{q \rightarrow 0} (1-a)F(a,q) = a(h(a)-1)-(1-a)156/1000=(699a-550a^2-149a^3)/1000$. This polynomial has one negative root and roots at $a_0=0$ and $a_1=1$. Since it evaluates positive for $a$ approaching $0$, we derive $\lim_{q \rightarrow 0} F(a,q)>0$ for all $a\in(0,1)$ and, thus, the property holds for $q>0$ small enough.
    
    For fixed $a \in (0,1)$, the first derivative of $F_a(q)\define q(h(a)-1)+\frac{a}{1-a}(h(a)-1)-q h(q)$ is
    \begin{align*}
        F'_a (q)=h(a)-1-h(q)-qh'(q).
    \end{align*}
    The quartic function $q^2F'_a(q)$ has no roots in $(0,1)$ as long as $a\leq a^* \approx 0.7395$. 
    
    For $a> a^*$, there are two real roots $q_0(a)$ and $q_1(a)$ in $(0,1]$ with $q_0(a)<q_1(a)$. One can verify at these local extreme points that $F(a,q_0(a))>0$ and $F(a,q_1(a))>0$ for all $a>a^*$, see \Cref{fig:mathematica_proof}.
    Thus, the property holds for all $q \in (0,1]$.
    
    Alternatively, the following Mathematica code confirms our geometric analysis:
    \begin{verbatim}
        h[z_] \define (156 + 1543 z - 550 z^2 -149 z^3)/(1000 z);
        Resolve[
            ForAll{q,x},
                0 < q <= 1 && x > 0,
                1 + h[q]/(1 + x) <= h[(q x)/(1 + q x)]
            ],
            Reals
        ] 
    \end{verbatim}
\end{proof}

\begin{proof}[Proof of \Cref{prop:second_condition_on_h} of \Cref{cla:properties_of_h}]
    Let $H(q) \define 1+ (h(q)-1) \cdot q = 1.156+0.543 \cdot q-0.55 \cdot q^2-0.149 \cdot q^3$. We want to show that $H(q) \leq \rho$ for all $q \in (0,1]$.
    
    The first derivative of $H$ is $H'(q)=0.543-1.1 \cdot q-0.447 \cdot q^2$ and the second derivative is $H''(q)=-1.1-0.894 \cdot q$.
    $H'(q) = 0$ has two roots, $q_0$ and $q_1$, given by
    \begin{align*}
        q_0,q_1 & = - \frac{550}{447} \pm \sqrt{\left(\frac{550}{447}\right)^2+\frac{543}{447}}.
    \end{align*}
    We have $q_0 \approx 0.4215 \in (0,1]$ and $q_1 < 0$.
    Since $H''(q)<0$ for all $q \in (0,1]$, $H(q)$ attains its maximum over interval $(0,1]$ at $q_0$ and thus $1 + (h(q)-1) \cdot q \leq H(q_0) < 1.2765 < \rho$. 
\end{proof}

\begin{proof}[Proof of \Cref{prop:third_condition_on_h} of \Cref{cla:properties_of_h}]
    We first analyze a helper function $F: [0,1] \to \mathbb{R}$ defined as
    \[ 
        F(s) \define \frac{h(s^2) \cdot s^2}{(1+s)^2}
        = \frac{156 + 1543 \cdot s^2 - 550 \cdot s^4 - 149 \cdot s^6}{1000\cdot(1+s)^2}
    \]
    Its first derivative is given by
    \[
        F'(s) = \frac{-298 \cdot s^6 - 447 \cdot s^5 - 550 \cdot s^4 - 1100 \cdot s^3 + 1543 \cdot s - 156}{500 \cdot (1+s)^3},
    \]
    which has two roots on the interval $[0,1]$, namely $s_0 \approx 0.1019$ and $s_1 \approx 0.8194$. Moreover, $F'(s) < 0$ for $s \in [0, s_0)$,  $F'(s) > 0$ for $s \in (s_0, s_1)$, and $F'(s) < 0$ for $s \in (s_1,1]$. 
    Since $F(s_1) > F(0)$, we obtain that $F$ achieves its maximum in the interval $[0,1]$ at $s_1$. 

    Now, for a fixed $q \in (0,1]$, we define a function 
    \[ 
        G_q(x) \define \frac{h(q) \cdot xq}{(1+xq) \cdot (1+x)}.
    \]
    The first derivative of $G_q(x)$ is 
    \[ 
        G'_q(x) = h(q) \cdot q \cdot \frac{(1+xq)(1+x)-x(1+q+2qx)}{(1+xq)^2 \cdot (1+x)^2}=\frac{1-qx^2}{(1+xq)^2 \cdot (1+x)^2},
    \]
    and its only positive root is given by $x=1/\sqrt{q}$. As $G_q(x)$ is defined for all $x \in [0,\infty)$, $G_q(0)=0$, $\lim_{x\rightarrow \infty} G(x)=0$, and $G_q(1/\sqrt{q}) >0$, we obtain that $G_q(x)$ attains its maximum on $[0,\infty)$ at $x=1/\sqrt{q}$.
    Therefore, 
    \begin{align*}
        \max_{q \in (0,1],\, x \geq 0} G_q(x) 
        & = \max_{q \in (0,1]} G_q\left(1/\sqrt{q}\right) 
        = \max_{q \in (0,1]} \frac{h(q) \cdot \sqrt{q}}{(1+\sqrt{q}) \cdot (1+1/\sqrt{q})} \\
        & = \max_{q \in (0,1]} \frac{h(q) \cdot q}{(1+\sqrt{q})^2} 
        = \max_{s \in (0,1]} \frac{h(s^2) \cdot s^2}{(1+s)^2} \\
        & = \max_{s \in (0,1]} F(s) = F(s_1)
    \end{align*}
    Finally, we conclude that
    \begin{align*}
        \max_{q \in (0,1], x \geq 0}  \left(1 - G_q(x)\right)^{-1} 
        & = \left(1 - \max_{q \in (0,1]} G_q\left(\frac{1}{\sqrt{q}}\right)\right)^{-1} \\
        & = \left(1 - F(s_1) \right)^{-1} \\
        & \approx 1.37282 < 1.3729 = \rho.
        \qedhere
    \end{align*}
\end{proof}